\documentclass[sigconf,nonacm]{acmart}
\AtBeginDocument{%
  }

\setcopyright{acmlicensed}
\copyrightyear{2026}
\acmYear{2026}
\acmDOI{XXXXXXX.XXXXXXX}

\acmConference[CCS '26]{the 2026 ACM SIGSAC Conference on Computer and Communications Security}{November 15--19,
  2026}{The Hague, The Netherlands}

\acmISBN{978-1-4503-XXXX-X/2026/11}

\usepackage{amsmath}
\usepackage{amsthm}
\usepackage{mathtools}
\usepackage{bbm}
\usepackage{enumitem}
\usepackage{hyperref}
\usepackage{physics}
\usepackage[mathscr]{euscript}
\usepackage{eurosym}
\usepackage{color}
\usepackage[ruled]{algorithm}
\usepackage[noend]{algpseudocode}
\usepackage{caption}
\usepackage{cryptocode}
\usepackage{marginnote}
\usepackage[colorinlistoftodos]{todonotes}
\usepackage{seqsplit}
\usepackage{footmisc}

\newcommand{\No}{\mathcal{N}}

\renewcommand{\L}{\mathcal{A}}

\renewcommand{\P}{\mathcal{P}}

\newcommand{\tv}{t_u}

\newcommand{\operator}{O}

\newcommand{\ark}[1]{\Pi_{\mathsf{ark}}(#1)}
\newcommand{\arkff}[1]{\Pi^{\ff}_{\mathsf{ark}}(#1)}
\newcommand{\btc}{\Pi_{\mathsf{BTC}}}
\newcommand{\C}[2]{\mathcal{L}_{#1}^{#2}}
\newcommand{\tx}[2]{\texttt{tx}_{#1}^{\texttt{#2}}}
\newcommand{\val}{\texttt{value}}
\newcommand{\lockScript}{\texttt{lckScr}}
\newcommand{\vtxoLockScript}{\texttt{vtxoLckScr}}

\newcommand{\out}[2]{\texttt{out}_{#1}^{\texttt{#2}}}

\newcommand{\outputs}{\texttt{outs}}
\newcommand{\inputs}{\texttt{ins}}
\newcommand{\witnesses}{\texttt{wits}}

\newcommand{\True}{\texttt{True}}
\newcommand{\False}{\texttt{False}}
\newcommand{\checkSig}[1]{\texttt{chkSig}_{#1}}
\newcommand{\checkMultiSig}[1]{\texttt{chkMulSig}_{#1}}
\newcommand{\relTimelock}[1]{\texttt{relTlk}\qty(#1)}
\newcommand{\absTimelock}[1]{\texttt{absTlk}\qty(#1)}

\newcommand{\vtxo}[1]{\texttt{vtxo}_{#1}}

\newcommand{\vtxos}{\texttt{vtxos}}
\newcommand{\utxos}{\texttt{utxos}}
\newcommand{\batch}{\texttt{batch}}
\newcommand{\vtxt}{\texttt{vtxt}}
\newcommand{\roott}{\texttt{root}}
\newcommand{\patht}{\texttt{path}}
\newcommand{\leaves}{\texttt{leaves}}
\newcommand{\commitmentTx}{\texttt{commitmentTx}}
\newcommand{\verifyCommitmentTx}{\texttt{verifyCommitTx}}
\newcommand{\verifyConnector}{\texttt{verifyConnector}}

\newcommand{\forfeitTx}{\texttt{forfeitTx}}
\newcommand{\batchTemplate}{\texttt{batchTmp}}
\newcommand{\signerTemplate}{\texttt{signerTmp}}
\newcommand{\verifyPath}{\texttt{verifyPath}}
\newcommand{\st}{\texttt{st}}
\newcommand{\connectorTemplate}{\texttt{connectorTmp}}

\newcommand{\nott}{\textbf{not }}
\newcommand{\musig}{\texttt{musig}}

\newcommand{\new}{\texttt{new}}
\newcommand{\old}{\texttt{old}}
\newcommand{\boardings}{\texttt{boardings}}
\newcommand{\batchSwaps}{\texttt{batchSwaps}}
\newcommand{\exits}{\texttt{exits}}
\newcommand{\toBoard}{\texttt{toBoard}}
\newcommand{\toBatchSwap}{\texttt{toBatchSwap}}
\newcommand{\toExit}{\texttt{toExit}}
\newcommand{\preSpent}{\texttt{preSpent}}
\newcommand{\preConfirmed}{\texttt{preConfirmed}}

\newcommand{\confirmedVTXO}{\texttt{confirmedVTXO}}
\newcommand{\confirmedBatches}{\texttt{confirmedBatches}}
\newcommand{\sweep}{\texttt{sweep}}
\newcommand{\unconfirmed}{\texttt{unconfirmed}}
\newcommand{\unconfirmedSpent}{\texttt{unconfirmedSpent}}
\newcommand{\submitCommitTx}{\texttt{submitCommitTx}}

\newcommand{\unconfirmedToBoard}{\texttt{unconfirmedBoardings}}
\newcommand{\unconfirmedToBatchSwap}{\texttt{unconfirmedBatchSwaps}}
\newcommand{\unconfirmedToExit}{\texttt{unconfirmedExits}}
\newcommand{\confirmedToBoard}{\texttt{confirmedBoardings}}
\newcommand{\confirmedToBatchSwap}{\texttt{confirmedBatchSwaps}}
\newcommand{\confirmedToExit}{\texttt{confirmedExits}}
\newcommand{\expired}{\texttt{expired}}
\newcommand{\spent}{\texttt{spent}}
\newcommand{\replaced}{\texttt{replaced}}

\newcommand{\connectorScript}{\texttt{connectorScript}}

\newcommand{\sk}{sk}
\newcommand{\pk}{pk}

\newcommand{\Sign}{\mathsf{Sign}}

\newcommand{\arkstate}{\Sigma}

\newcommand{\Nff}{N^{\texttt{ff}}}
\newcommand{\ff}{\texttt{ff}}

\def\bitcoin{%
  \leavevmode
  \vtop{\offinterlineskip 
    \setbox0=\hbox{B}%
    \setbox2=\hbox to\wd0{\hfil\hskip-.03em
    \vrule height .3ex width .15ex\hskip .08em
    \vrule height .3ex width .15ex\hfil}
    \vbox{\copy2\box0}\box2}}

\newtheorem{theorem}{Theorem}[section]
\newtheorem{remark}[theorem]{Remark}
\newtheorem{definition}[theorem]{Definition}
\newtheorem{lemma}[theorem]{Lemma}
\newtheorem{corollary}[theorem]{Corollary}

\newcommand{\pim}[1]{\todo{Pim: #1}}
\newcommand{\za}[1]{}
\newcommand{\mm}[1]{}

\begin{document}

\title{Ark: Offchain Transaction Batching in Bitcoin}


\author{Pim Keer}
\affiliation{%
  \institution{TU Wien}
  \city{Vienna}
  \country{Austria}
}
\email{pim.keer@tuwien.ac.at}

\author{Ioannis Alexopoulos}
\affiliation{%
  \institution{TU Wien}
  \city{Vienna}
  \country{Austria}
}
\email{ioannis.alexopoulos@tuwien.ac.at}

\author{Matteo Maffei}
\affiliation{%
  \institution{TU Wien}
  \city{Vienna}
  \country{Austria}
}
\email{matteo.maffei@tuwien.ac.at}

\author{Marco Argentieri}
\affiliation{%
  \institution{Ark Labs}
  \city{Tallinn}
  \country{Estonia}
}
\email{marco@arklabs.to}

\author{Andrew Camilleri}
\affiliation{%
  \institution{Ark Labs}
  \city{Tallinn}
  \country{Estonia}
}
\email{andrew@arklabs.to}

\author{Zeta Avarikioti}
\affiliation{%
  \institution{TU Wien, Common Prefix}
  \city{Vienna}
  \country{Austria}
}
\email{georgia.avarikioti@tuwien.ac.at}


%


\begin{abstract}
Bitcoin is the cryptocurrency with the largest market capitalisation, but its widespread adoption is fundamentally limited by the scalability constraints of its consensus algorithm, which requires every transaction to be confirmed onchain. To address this, several Layer-2 scalability solutions have been proposed to move payments offchain---most notably, the Lightning Network. However, their deployment remains hindered by cumbersome setup requirements: users must lock funds onchain to participate and engage in complex auxiliary protocols (e.g., for channel rebalancing, top-ups, and routing). Other solutions, like payment pools, sidechains and rollups, cannot be implemented in a non-custodial way on Bitcoin due to its limited scripting capabilities, or require all protocol participants to update the offchain state. 

In this work, we present Ark, the first Bitcoin-compatible commit-chain. Ark enables offchain transactions of virtual UTXOs (VTXOs), through an untrusted operator who aggregates them into succinct onchain commitments. A distinctive feature of Ark is its ease of deployment: users can receive offchain payments without locking any funds beforehand and Ark state updates can be performed only requiring the users involved in that update.

We formally define the Ark protocol and prove its security. During this process, we identified two attacks affecting the testnet implementation, which we responsibly disclosed and proposed fixes for, which have been now integrated into the mainnet implementation. Our experimental evaluation demonstrates that Ark can commit onchain to batches of arbitrarily many VTXOs with a constant-sized footprint of approximately 200 vB. Cooperative exits add one output per user, while unilateral exits require $\mathcal{O}(\log n)$ transactions of roughly 150 vB per VTXO for a batch of $n$ VTXOs.
\end{abstract}


%





\maketitle

\section{Introduction} 
The security and decentralisation guarantees of the Bitcoin consensus protocol \cite{nakamoto2008bitcoin} come at the expense of severely limited throughput of approximately 10 transactions per second, which is about three orders of magnitude lower than conventional credit card systems that can process around 10k transactions per second. To overcome this limitation, a variety of Layer-2 protocols have been proposed, most notably payment channel networks \cite{poon2016bitcoin,decker2015fast,decker2018eltoo,avarikioti2019brick,hearn2023contract,avarikioti2020cerberus,aumayr2024bitcoin} and their extensions (such as payment channel hubs \cite{heilman2017tumblebit,dziembowski2019perun,tairi20212,qin2023blindhub}, virtual channels \cite{dziembowski2019multi,dziembowski2019perun,aumayr2021bitcoin,avarikioti2025thunderdome}, and channel factories \cite{burchert2018scalable,pedrosa2019scalable}).
However, these approaches share a fundamental drawback: they require a \emph{cumbersome and}, most importantly, \emph{costly setup}. Specifically,   users must lock collateral onchain to initially fund their channels as well as to engage later on in complex auxiliary protocols (e.g., for channel rebalancing~\cite{tiwari2022wiser,avarikioti2022hide,avarikioti2024musketeer}, channel top-ups, and payment routing~\cite{roos2018settling,avarikioti2024route,sivaraman2020high}). These usability and liquidity challenges have hindered the widespread deployment of such protocols.

In this work, we present Ark, the first Bitcoin-compatible commit-chain. A distinctive strength of Ark lies in its \emph{open architecture}. In contrast to payment channel solutions, users can receive and hold funds without issuing any onchain transaction, thereby lowering entry barriers. At the same time, funds deposited in Ark remain fully fungible: they can be transferred to anyone, Ark user or not, who may subsequently decide to use them within Ark or move them outside---for example, in order to make onchain Bitcoin payments or even to setup a Layer-2 protocol on top of Ark. This design eases onboarding, enhances interoperability, and mitigates liquidity fragmentation.

 At its core, Ark introduces the concept of \textbf{virtual UTXOs} (VTXOs), which serve as offchain counterparts of standard UTXOs. Ark allows users to transact offchain by drawing on shared liquidity managed by an operator. Each transaction updates the allocation of VTXOs, and the operator periodically commits state updates to the Bitcoin blockchain. Furthermore, Ark features \emph{unilateral exit}: each user may decide to bring their VTXOs onchain without the operator's collaboration, and without forcing any other Ark participant to exit. Additionally, Ark imposes \emph{minimal online requirements}: only the operator must remain continuously online, while users need to connect periodically to refresh their VTXOs; a process that occurs entirely offchain. Finally, in the presence of a rational operator and an existentially honest signer committee, Ark enables \emph{fast finality}: users willing to make these extra assumptions can opt in to spend their VTXOs immediately, without the risk of the transaction being reverted.


\subsection{Related work}
\label{subsec:related-work}
The comparison between Ark and other Layer-2 protocols is summarised in \autoref{tab:comparison} and discussed below. We omit custodial approaches like eCash \cite{chaum1983blind,cashu2025cashu,fedi2025fedi}.

\textbf{Payment channel networks (PCN).} PCNs, such as the Lightning Network~\cite{poon2016bitcoin}, are networks of 2-party channels, i.e., 2-of-2 multisignature UTXOs, where parties can cooperate to instantly update their channel balances, and where any party can at any time close the channel unilaterally and claim its funds onchain after a dispute period, in which the counterparty should come online to prevent a malicious user from exiting with an old state. PCNs therefore achieve \emph{fast finality}, i.e., transactions are instant and non-revertible, and \emph{unilateral exit}, i.e., users can exit their funds without the collaboration of another party. Closing, as well as opening a channel, has an onchain cost and requires to lock funds, which can only be used \emph{within} the network. Ark, on the other hand, is \emph{externally fungible}: it allows users who are not onboarded to receive funds with no onchain footprint. Moreover, PCNs require routing algorithms to find a path with sufficient capacity in the network, rebalancing algorithms to reallocate funds among channels, and possibly onchain top-up transactions. Payment channel hubs \cite{heilman2017tumblebit,dziembowski2019perun,tairi20212,qin2023blindhub}, virtual channels \cite{dziembowski2019multi,dziembowski2019perun,aumayr2021bitcoin,avarikioti2025thunderdome} and state channels \cite{coleman2015statechannels,miller2019sprites,aumayr2021generalized}, which can track arbitrary state, suffer from similar limitations. Channel factories~\cite{burchert2018scalable,pedrosa2019scalable} alleviate these issues by creating multiple channels at once, amongst which liquidity can be reallocated. However, one user leaving leads to other users being forced to leave as well. 

\textbf{Statechains.} Somsen~\cite{somsen2018statechains} proposed statechains to instantly transfer ownership of a UTXO to arbitrary parties with no onchain activity, and thus achieve fast finality and external fungibility. The owner can claim funds unilaterally after a dispute period. Spark \cite{spark2025spark} allows splitting the UTXO amongst multiple owners. However, statechains are only secure assuming a trusted operator that cosigns each transfer and deletes its key after each state update, preventing collusion with previous owners. Ark similarly enables operator-assisted transfers, but the operator can be malicious: we just need to assume their rationality for fast finality. 

\textbf{Payment pools.} The most developed example is CoinPool \cite{naumenkocoinpool}. It allows multiple users to share a single UTXO, make instant offchain transfers within this UTXO, and allows unilateral exit at any time by any user. Contrary to Ark, it assumes the introduction of several new Bitcoin opcodes (as of the current, post-Taproot, state of Bitcoin: it is thus not \emph{Bitcoin-compatible}), and has a high interactivity requirement since each state update requires each user's approval. 

\textbf{Sidechains.} Sidechains \cite{back2014enabling,kiayias2019proof,nick2020liquid} are independent blockchains, able to process transactions to arbitrary users with faster block times (but not necessarily with fast finality), or richer smart contract functionality, running in parallel to the main blockchain, enabling the transfer of assets between them via a two-way peg mechanism. In Bitcoin, this approach has to be custodial, with users sending funds to a trusted committee, requiring its collaboration to exit again. BitVM \cite{aumayr2024bitvm,linus2025bitvm2,woll2026bitvm3} could partially remove this trust. Regardless, sidechains reduce their security from that of Bitcoin to that of the sidechain's consensus mechanism. 

\textbf{Commit-chains.} Here, an untrusted operator coordinates transactions between its participants (not externally fungible) and periodically commits the state to the blockchain. Most commit-chains \cite{poon2017plasma,khalil2018commit} assume an expressive blockchain to handle user exits and disputes and are thus not Bitcoin-compatible. Commit-chains achieve delayed finality, i.e., the payment is only finalised once the corresponding commitment is confirmed onchain, and survived the dispute period, during which participants should come online to contest any misbehaviour. Assuming a malicious operator, Ark also achieves delayed finality (whereas a rational operator enables fast finality). Clique \cite{riahi2024clique} is a commit-chain handling simple payments on Bitcoin. Its original design requires a covenant functionality Bitcoin currently does not have, but could be modified to instead use signatures from all users to update the state. In contrast, Ark only requires signatures of users involved in a transaction and the operator to perform a state update and is fully Bitcoin-compatible. Contrary to other commit-chains, both Clique and Ark operators fund the onchain commitment with their own liquidity.  


\textbf{Rollups.} Rollups \cite{kalodner2018arbitrum,optimism2021optimism} offload computation offchain, but post all transaction data onchain. Just like commit-chains, rollups achieve unilateral exit and delayed finality, but are not externally fungible, and they are in general not Bitcoin-compatible. BitVM \cite{aumayr2024bitvm,linus2025bitvm2,woll2026bitvm3} can be leveraged to design Bitcoin-compatible rollups. However, participation relies on a committee that is honest and live at setup for liveness and assumes existential honesty throughout the protocol execution for safety Ark instead focuses on reducing its onchain footprint, having only a small transaction onchain committing to a state update. Ark only requires such a committee for its opt-in fast finality mechanism. Moreover, rollups require a data availability layer (e.g., the blockchain) to store all transaction data, whereas with Ark it is only necessary that each user stores its own local state to guarantee its capability to exit unilaterally.

\begin{table}[h]
\caption{Protocol comparison. Cols.: TA: trust assumptions; UE: unilateral exit; EF: external fungibility (no need for onchain interaction to receive funds); FF: fast finality (safely spend funds immediately, D = delayed until onchain commitment, $^*$ under rational operator and $1$-of-$n$ honest committee, otherwise D for Ark); SU: parties required per state update (except unilateral exit) (op. = operator); OL: operator liquidity; BC: Bitcoin-compatible; OR: users should come online each: (DP = Dispute period, OC = Onchain commitment).}
\centering
\label{tab:comparison}
\setlength{\tabcolsep}{1.9pt}

\begin{tabular}{|c|c|c|c|c|c|c|c|c|}
\hline
                                                          & TA                                                                    & UE           & EF           & FF             & SU                                                                & OL                                                     & BC           & OR                                                       \\ \hline
PCN                                                       &                                                                       & $\checkmark$ &              & $\checkmark$   & \begin{tabular}[c]{@{}c@{}}channel \\ parties\end{tabular}        &                                                        & $\checkmark$ & DP                                                       \\ \hline
\begin{tabular}[c]{@{}c@{}}State-\\ chain\end{tabular}    & \begin{tabular}[c]{@{}c@{}}Honest\\ op.\end{tabular}                  & $\checkmark$ & $\checkmark$ & $\checkmark$   & \begin{tabular}[c]{@{}c@{}}owner \\ + op.\end{tabular}            & 0                                                      & $\checkmark$ & DP                                                       \\ \hline
\begin{tabular}[c]{@{}c@{}}Payment\\ Pool\end{tabular}    &                                                                       & $\checkmark$ &              & $\checkmark$   & all                                                               &                                                        &              & DP                                                       \\ \hline
\begin{tabular}[c]{@{}c@{}}Side-\\ chain\end{tabular}     & \begin{tabular}[c]{@{}c@{}}Honest\\ vali-\\ dators\end{tabular}       &              & $\checkmark$ &                & \begin{tabular}[c]{@{}c@{}}vali- \\ dators\end{tabular}           &                                                        & $\checkmark$ & Never                                                    \\ \hline
\begin{tabular}[c]{@{}c@{}}Commit-\\ chain\end{tabular}   &                                                                       & $\checkmark$ &              & D              & op.                                                               & 0                                                      &              & OC                                                       \\ \hline
\begin{tabular}[c]{@{}c@{}}Bitcoin \\ Clique\end{tabular} &                                                                       & $\checkmark$ &              & D              & all                                                               & \begin{tabular}[c]{@{}c@{}}Clique\\ value\end{tabular} & $\checkmark$ & OC                                                       \\ \hline
Rollup                                                    &                                                                       & $\checkmark$ &              & D              & op.                                                               & 0                                                      &              & OC                                                       \\ \hline
Ark                                                       & \begin{tabular}[c]{@{}c@{}}Rational\\ op. only \\ for FF\end{tabular} & $\checkmark$ & $\checkmark$ & $\checkmark^*$ & \begin{tabular}[c]{@{}c@{}}op. +\\ involved \\ users\end{tabular} & \begin{tabular}[c]{@{}c@{}}Batch\\ value\end{tabular}  & $\checkmark$ & \begin{tabular}[c]{@{}c@{}}VTXO \\ lifetime\end{tabular} \\ \hline
\end{tabular}
\end{table}

\subsection{Our contributions}

The contributions of this work can be summarised as follows:

\begin{itemize}[leftmargin=*]
\item We present Ark, the first Bitcoin-compatible commit-chain, featuring unilateral exit, opt-in fast finality, and an open architecture, allowing receivers to join without performing any onchain transaction, and funds to be spent within as well as outside Ark.
\item We formalise and prove the security of Ark (VTXO security, user balance security, atomicity, operator balance security, and fast finality balance security). In doing so, we identified two attacks on the original Ark protocol, as implemented on testnet: the hostage attack and spam attack. We solve these attacks by introducing reset transactions, which have been integrated into the mainnet implementation.
\item  We conduct an experimental evaluation to demonstrate the performance and scalability gains. Specifically, we show that the onchain footprint for a commitment with arbitrarily many VTXOs is constant ($197$~vB) and for unilateral exit is logarithmic in the number $n$ of VTXOs per commitment ($\lceil\log n\rceil \cdot 150 + 107$~vB), which is a reasonable price to pay to obtain the functionality of a commit-chain in Bitcoin. Finally, we show the time required to produce a commitment is linear in the number of users involved in the commitment. For example, it requires a mere 2.7 seconds for 200 users. 
\end{itemize}


\section{Model and Protocol Overview}
\label{sec:overview}

\subsection{Model and Notation}
\label{subsec:system-model}
\smallskip\noindent\textbf{System Model.}
Ark is a Bitcoin-native transaction batching protocol that lets an arbitrary set of users $\mathcal{P}=\{P_1,P_2,\ldots\}$ transact offchain with a minimal onchain footprint, using an \emph{untrusted Ark operator $\operator$}. Users hold virtual balances (represented by \emph{VTXOs}) and issue Ark requests, such as boarding (onramp), transactions, batch swaps, and exits (offramp). Users can moreover claim their funds onchain at any time. $\operator$ aggregates requests and \emph{batches} them into onchain \emph{commitment transactions}, which are approved by the involved users in signing sessions coordinated by $\operator$, and are predominantly funded by $\operator$.



\smallskip\noindent\textbf{Blockchain and UTXO Model.}
\label{subsec:ledger-model}
We assume a robust public transaction ledger $\L$ (Bitcoin) that operates in the UTXO model. The ledger $\L$ consists of a sequence of transactions. 

\smallskip\noindent\textit{Transactions and Script.} 
In the UTXO model, each transaction $\tx{}{}$ maps a non-empty list of existing unspent transaction outputs (UTXOs) to a list of new UTXOs. A transaction is a tuple $\tx{}{} = (\inputs,\witnesses,\outputs)$, 
where $\inputs=[\out{1}{},\ldots,\out{n}{}]$ references unspent outputs, $\witnesses=[w_1,\ldots,w_n]$ provides unlocking data (witnesses), and $\outputs=[\out{1}{*},\ldots,\out{m}{*}]$ are new outputs. Each output $\out{}{}=(\val,\lockScript)$ holds an amount $\out{}{}.\val$, locked by a locking script $\out{}{}.\lockScript$. A transaction is \emph{valid} if every input is unspent, each script accepts its respective witness, i.e., $\out{i}{}.\lockScript(w_i)=\True$, and if no new coins are created, i.e., $\sum_{i=1}^{n}\out{i}{}.\val \;\ge\; \sum_{j=1}^{m}\out{j}{*}.\val$.

The locking scripts are expressed in the stack-based Bitcoin scripting language.  We use:
(i) signature checks $\checkSig{\pk}$ that require a valid signature under $\pk$,
(ii) $p$-of-$q$ multisignature checks $\checkMultiSig{p,q;\pk_1,\ldots,\pk_n}$,
(iii) absolute timelocks $\absTimelock{T}$ and relative timelocks $\relTimelock{t}$.
Taproot allows key-path spends with Schnorr signatures or script-path spends by revealing a single committed script together with a Merkle proof. We denote a Taproot output by $\texttt{Taproot}(\pk_I;\texttt{scriptPath}_1,\ldots,\texttt{scriptPath}_n)$, where $\pk_I$ is the internal key used together with the root of a Merkle tree built over the script paths to determine the ``tweaked'' public key, which is shown in the actual output script. This output can then be spent by only providing a signature under the tweaked public key, or by providing the witness for a script path together with its Merkle inclusion proof. We set $\pk_I$ to $\False$ when the key path is disabled.

\smallskip\noindent\textit{Ledger Model.}
We adopt the blockchain (ledger) model of~\cite{garay2024bitcoin} to which parties can submit transactions. In this model, each participant of the blockchain protocol (miner) $M$ maintains a local view of the ledger $\C{}{M}$; write $\C{-k}{M}$ for $\C{}{M}$ without the last $k$ blocks. We assume our ledger is robust with depth and wait parameters $k,u$, i.e., it satisfies the following properties (informally): with overwhelming probability, (i) if $\tx{}{}\in\C{-k}{M}$ for some honest $M$, it will be reported by the other honest miners at the same position in their ledger views (persistence), and (ii) if $\tx{}{}$ is given to all honest miners for $u$ consecutive communication rounds, each honest $M$ will report $\tx{}{}\in\C{-k}{M}$ (liveness).
We use this interface to reason about confirmation and reorgs. We denote for any Ark party (user or $\operator$) $Q$ by $\C{}{Q}$ the local view of $Q$, where $Q$ shares the view of some miner $M$.


\smallskip\noindent\textbf{Communication Model.}
\label{subsec:comm-model}
Parties communicate over authenticated channels. The network is synchronous, i.e.,  all messages are delivered within a known time bound $\Delta$ (which is also the network model of Bitcoin itself). We assume $\operator$ is online, while users may be intermittently online. Users performing fast finality payments need to remain online until their funds are spent.

\smallskip\noindent\textbf{Cryptographic and Script Assumptions.}
\label{subsec:crypto}
We assume standard cryptographic primitives, i.e., collision-resistant hash functions and EUF-CMA secure signature schemes. Taproot Schnorr signatures are used for key/script paths and for efficient $n$-of-$n$ multisignature via MuSig2 \cite{nick2020musig2} (aggregate key denoted $\bigoplus_i pk_i$, checked by $\checkSig{\bigoplus_i pk_i}$). Finally, we remark that as script-level covenants are unavailable, spend restrictions can be emulated with $n$-of-$n$ co-signing. Native support for stronger covenants in Bitcoin would reduce interactivity, when compared to emulation with multisignatures, but are not required for correctness.

\smallskip\noindent\textbf{Threat Model.}
\label{subsec:threat}
A PPT adversary may corrupt any subset of parties (including $\operator$), possibly adaptively and with full collusion. The adversary controls message scheduling (subject to the synchronous bound) and the mempool, but cannot break cryptography or the ledger’s security properties (but with negligible probability). We consider  the adversary to be Byzantine, i.e., it may deviate from the protocol arbitrarily. We further provide an extension enabling fast finality, where we assume rationality of the operator, i.e., it can misbehave only if this leads to a financial gain, and existential honesty for (i.e., at least one being honest among) the signing committee. 

\subsection{Protocol Overview}
\label{subsec:protocol-overview}
We use three recurring objects:
(i) \emph{VTXOs}: offchain outputs with at least one collaborative spend path (requiring $\operator$'s signature) and at least one unilateral spend path (user only, after a relative timelock);
(ii) \emph{batches}: onchain outputs that compactly commit to a set of VTXOs by only being spendable before a certain time through presigned, offchain transactions creating these VTXOs; and
(iii) \emph{connectors}: onchain outputs that can only be spent to create multiple \emph{anchors}, outputs that atomically bind offchain state updates (changes in the VTXO set) to a future onchain batch (cf.\ §\ref{subsec:settle-vtxo}).
Figure~\ref{fig:flow} illustrates the end-to-end flow; we summarise the components below and cross-reference the detailed sections.

\begin{itemize}[leftmargin=*]
  \item \textbf{Onchain commitments
  ,} §\ref{subsec:tx-batching},§\ref{subsec:commitment-tx}. 
  The operator $\operator$ broadcasts \emph{commitment transactions}, typically funded by its own liquidity. These commitments contain \emph{batches} that compactly commit to the newly created VTXOs. 
  \item \textbf{Offchain transaction execution
  ,} §\ref{subsec:ark-tx}. 
  Users collaborate with $\operator$ to build offchain (virtual) transactions, spending existing VTXOs and creating new VTXOs.
  \item \textbf{Batch swapping
  ,} §\ref{subsec:settle-vtxo}. 
  Users exchange (offchain) old VTXOs for new ones in the next batch by \emph{forfeiting} the old VTXOs (allowing $\operator$ to claim them if they go onchain). \emph{Connectors} atomically bind the swap to the next commitment.
  \item \textbf{Boarding
  ,} §\ref{subsec:commitment-tx}. A user posts a boarding transaction whose output is spendable either jointly with $\operator$ or unilaterally after a timelock. Once confirmed, $\operator$ spends it in the next commitment, giving VTXOs to the user for offchain use. 
  \item \textbf{Exiting
  ,} §\ref{subsec:commitment-tx}. To exit, the user forfeits their VTXOs and receives onchain outputs in the next commitment. If $\operator$ is unresponsive or adversarial, the user exits a VTXO unilaterally via the script path of the corresponding batch.
\end{itemize}

\begin{figure*}[ht]
  \centering
  \includegraphics[width=0.8\linewidth]{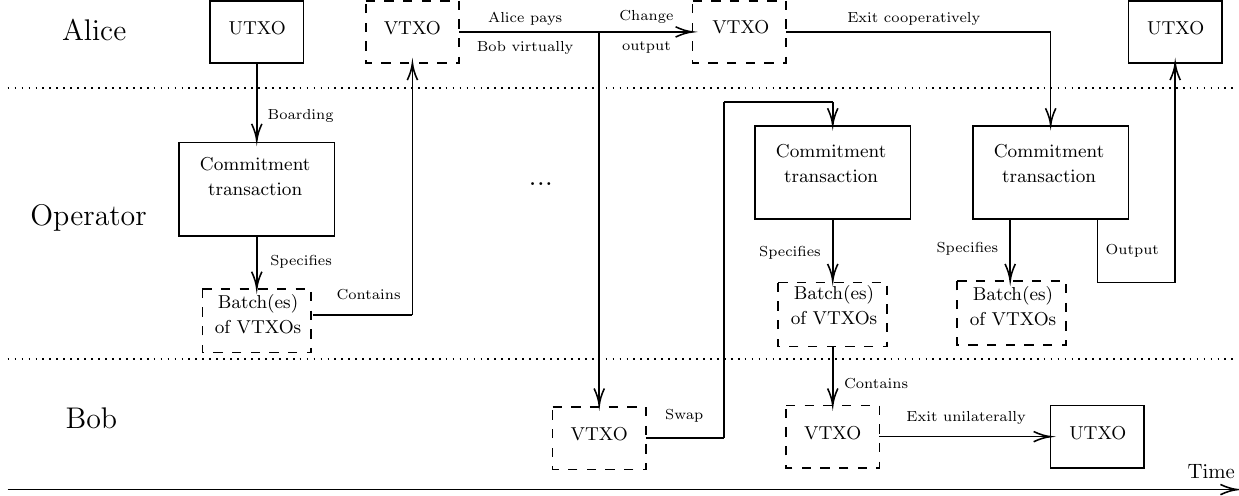}
  \caption{An example Ark protocol flow (solid boxes appear onchain, dashed boxes optimistically never appear onchain). Alice joins the Ark by submitting her onchain UTXO through a boarding procedure (§\ref{subsec:commitment-tx}). She receives a VTXO contained in a batch (§\ref{subsec:tx-batching}), which is an output of a commitment transaction (§\ref{subsec:commitment-tx}) posted onchain by $\operator$. Alice can use this VTXO to transact offchain, e.g., by paying Bob. This happens via an ordinary (but offchain) Bitcoin transaction (§\ref{subsec:ark-tx}), creating a payment VTXO for Bob and a change VTXO for Alice. Alice can exit the Ark cooperatively with $\operator$ (§\ref{subsec:commitment-tx}), turning her change VTXO into an onchain output in a new commitment transaction. Bob secures his new funds via a batch swap (§\ref{subsec:settle-vtxo}) with $\operator$, exchanging the received VTXO for a new VTXO, which is again part of a batch output of an onchain commitment transaction. This mechanism ensures that if $\operator$ would no longer respond, Bob can claim his funds onchain unilaterally (§\ref{subsec:commitment-tx}).}
  \label{fig:flow}
\end{figure*}

\subsection{Security and Scalability Properties}
\label{subsec:properties}
We introduce the security and performance guarantees below, which we informally prove in §\ref{sec:security}. Formal statements and proofs are deferred to Appendix~\ref{app:proofs}. The Ark protocol maintains a set of unspent VTXOs, which we can see as offchain state. Intuitively, VTXO Security guarantees that any Ark state can be mapped to a corresponding Bitcoin state (i.e., a set of UTXOs). Ark Atomicity guarantees the transitions between Ark states either take place in full or not at all. 
\begin{itemize}[leftmargin=*]
    \item \textbf{VTXO Security.} An unexpired\footnote{Expired batches of VTXO can be claimed at once by $\operator$.} 
    VTXO committed in an onchain batch can only be spent with a valid witness (safety). Any VTXO held by an honest user can be unilaterally redeemed onchain before batch expiry (liveness). 
    \item \textbf{Ark Atomicity.} For any Ark action (boarding, transaction, batch swap, exit), either all inputs are consumed and all outputs are created, or none  is. 
\end{itemize}
From these two properties, we can derive \emph{balance security}, for both the users and the operator. Under the additional assumptions of a rational operator and an existentially honest signer committee, users can moreover opt in to securely transact with fast finality, ensuring balance security even for VTXOs not committed in a batch.

Finally, with respect to scalability, Ark’s onchain footprint is: (i) \emph{execution constant}: the onchain commitment size is constant in the number of  VTXOs; 
(ii) \emph{optimistic exit constant}: with a responsive operator, a user exits with $\mathcal{O}(1)$ onchain transactions; and (iii) \emph{pessimistic exit logarithmic}: in the worst case, a user exits a VTXO with $\mathcal{O}(\log n)$ transactions, where $n$ is the batch’s VTXO population. With fast finality, all previous offchain transactions leading to the to-be-exited VTXO need to be posted as well.

\section{The Ark Protocol}
\label{sec:construction}

At the core of Ark is an operator $\operator$ with key pair $(\sk_\operator,\pk_\operator)$ who facilitates transactions of \emph{virtual UTXOs} called VTXOs. VTXOs are offchain outputs that can be spent without leaving an onchain footprint (§\ref{subsec:ark-tx}). Each VTXO is locked by a Taproot script with two paths: a \emph{collaborative} spend with $\operator$, and a \emph{unilateral} spend that requires no interaction with $\operator$.

\begin{definition}[Virtual UTXO / VTXO]
\label{def:vtxo}
A \emph{virtual UTXO} or \emph{VTXO}, with operator $\operator$, is a 
transaction output of the form
$\emph{\texttt{vtxo}}:=\qty(\emph{\texttt{value}},\emph{\texttt{vtxoLockScript}})$ 
where \emph{\vtxoLockScript} is a Taproot locking script with:
(i) an unspendable key path; 
(ii) at least one \emph{collaborative} script path, requiring a signature of $\operator$; and 
(iii) at least one \emph{unilateral} script path, not requiring $\operator$'s signature and delayed by a relative timelock $\tv$ (determined by $\operator$), ensuring that a VTXO appearing onchain can be spent collaboratively at least $\tv$ blocks before it becomes unilaterally spendable.
A VTXO is called \emph{collaboratively} or \emph{unilaterally} spent if the corresponding path is used.
\end{definition}

The simplest VTXO is a \emph{single-signature} output: an amount an Ark user Alice, with key pair $\qty(\sk_A,\pk_A)$, can spend via a signature under $\pk_A$. Its locking script $\texttt{Taproot}(\False;\checkSig{\pk_\operator\oplus \pk_A},$ $\checkSig{\pk_A}\wedge$ $\relTimelock{\tv})$ requires a collaborative witness $w_{collab}=\sigma_{\operator\oplus A}$ or a unilateral witness $w_{unilat}=\sigma_A$. For simplicity, we assume in this section that VTXOs have a clear ``owner'' (e.g., Alice or Bob), who can spend them both collaboratively and unilaterally. However, VTXOs may include multiple complex script paths, generalising ownership to anyone providing a valid witness. This is assumed in the rest of the paper.

Throughout the rest of this paper, we adopt two conventions. First, when $\vtxo{}=(\val,\vtxoLockScript)$ appears as an output in a (virtual) transaction, it represents a value $\val$ and a Taproot output script only\footnote{To be precise, the script reads $\texttt{OP\_1 OP\_PUSHBYTES\_32 } \pk_T$.} showing the 32-byte tweaked public key committing to the key and script paths in $\vtxoLockScript$. A blockchain observer only sees $\pk_T$ and cannot infer the spending paths. In other contexts (e.g., an Ark user sharing $\vtxo{}$ with $\operator$), both $\val$ and all spending paths are shared, allowing derivation of $\pk_T$. Second, when requests create new VTXOs, these are initially just value–script pairs, not yet tied to a specific transaction. Context clarifies whether we refer to this pair or to a fully defined VTXO within a batch.

We now describe Ark’s core functionality: transaction batching, which consolidates multiple VTXOs into a single onchain output called a \emph{batch}. Users can execute transactions offchain by spending and creating VTXOs, and batch swapping to atomically exchange old VTXOs for fresh ones in a new batch. This process motivates the design of commitment transactions, the only onchain footprint. We then extend this structure to support Ark user entry and exit.

\subsection{Transaction Batching}
\label{subsec:tx-batching}
As the name suggests, VTXOs are meant to remain virtual. While they can be unilaterally converted into UTXOs via an exit path, this is ideally avoided: VTXO holders are instead expected to collaborate with $\operator$ to claim funds onchain (see §\ref{subsec:commitment-tx}). This allows us to design a structure that minimises onchain footprint while still permitting unilateral exits if needed. This structure, called a \emph{batch}, is a transaction output predominantly funded by $\operator$ that can be spent either via a \emph{sweep path} (returning funds to $\operator$) or an \emph{unroll path}, which splits the onchain UTXO into offchain VTXOs via a \emph{virtual transaction tree (VTXT)}. This tree is made up of virtual transactions: presigned Bitcoin transactions that optimistically never go onchain, specifying how the batch should be distributed amongst VTXOs.


\begin{definition}[Virtual transaction tree]
\label{def:vtxt}
    A \emph{virtual transaction tree (VTXT)} is a directed rooted tree, given by the ordered pair $G=(V,A)$, where $V$ is a set of virtual transactions (the \emph{nodes}) with exactly one input, and $A$ is a set of ordered pairs of virtual transactions (the \emph{edges}) such that, for every $u_1,u_2\in V$, we have $(u_1,u_2)\in A$ if and only if $u_2.\inputs\supseteq u_1.\outputs$. There is also exactly one virtual transaction $r\in V$, called the \emph{root}, such that there are no edges $a\in A$ of the form $(u,r)$, where $u\in V$. For all other $u_2\in V$, there is exactly one $u_1\in V$ such that $(u_1,u_2)\in A$. Any $s\in V$ for which there are no edges of the form $(s,u)$ in $A$ (for $u\in V$) is called a \emph{leaf}. 
    For any $u\in V$, we define $\patht(u)$ as the sequence $(u_1,\ldots,u_\ell)$, where $u_1$ is the root, $u_\ell=u$, and $(u_i,u_{i+1})\in A$ for each $i\in\qty{1,\ldots,\ell-1}$. For any $\out{}{}\in u.\outputs$, we interpret $\patht(\out{}{})$ as $\patht(u)$.
\end{definition}
\begin{definition}[Batch]
\label{def:batch}
    A \emph{batch} is a transaction output locked by a Taproot script with an unspendable key path and exactly two script paths: (i) a \emph{sweep path} that allows $\operator$ to claim the entire output after a certain block height, called the \emph{batch expiry}, and (ii) an \emph{unroll path} that specifies spending according to a VTXT with root spending the full batch, each leaf a VTXO as its only output, and the remaining nodes virtual transactions that have batches as their only outputs.    
\end{definition}
\begin{remark}
\label{rem:signing}
The VTXT structure needs to be enforced by a covenant, ensuring the batch can be spent \emph{only} according to the VTXT before expiry. To preserve Bitcoin compatibility, this covenant can be emulated with an $n$-of-$n$ multisignature (e.g., Musig2), where $\operator$ coordinates signing sessions with all VTXO holders. Either all holders sign every virtual transaction, or, as in Figure~\ref{fig:tx}, each holder signs only along the path to their VTXO. The latter reduces the number of interactions without compromising safety, since every honest holder must approve the transactions needed to redeem their VTXO, ensuring their batch funds cannot be spent otherwise. 
\end{remark}
The purpose of batch expiry is explained in §\ref{subsec:commitment-tx}. For now, note that any VTXO holder who knows the virtual transaction path in the VTXT to their VTXO can broadcast it onchain and exit unilaterally. Figure~\ref{fig:tx} illustrates this: to redeem $v_1$, $P_1$ must publish the transactions with outputs $(v_1+v_2, v_3+v_4)$, $(v_1, v_2)$, and $v_1$. Only $P_1$’s VTXO appears onchain, but $P_1$ must pay fees, which may make small VTXOs uneconomical to exit.

\subsection{Ark transactions}
\label{subsec:ark-tx}
As discussed earlier, VTXOs are intended to remain offchain, enabling the off\-chain execution of transactions via \emph{Ark transactions}. These spend one or more VTXOs from a batch through the collaborative path, creating new VTXOs.
To illustrate the Ark transaction mechanism, consider an Ark user Alice who wants to send amount $p$ to Bob. Alice holds a VTXO $\vtxo{A}$ with value $a > p$ in a batch. She constructs an Ark transaction $\tx{A}{ark}$, a virtual transaction with inputs $[\vtxo{in,1},\ldots,\vtxo{in,n}]$, collaborative witnesses $[w_{collab}^1,\ldots,w_{collab}^n]$, and outputs $[\vtxo{out,1},$ $\ldots,\vtxo{out,m}]$. In this example, $n=1$, $\vtxo{in,1}=\vtxo{A}$, $w_{collab}^1 = \sigma_{\operator\oplus A}$, and the outputs may include: a VTXO $\vtxo{B}$ with amount $p$ spendable by Bob, a change VTXO for Alice, and possibly a fee VTXO to $\operator$.

Alice signs $\tx{A}{ark}$ and asks $\operator$ to co-sign. Once signed, she sends the transaction to Bob along with the VTXT path to $\vtxo{A}$, enabling Bob to exit $\vtxo{B}$ unilaterally. In a way, Bob now holds his own VTXO, without needing prior funds in Ark or on Bitcoin. 

However, there is a critical issue: Bob must trust Alice not to double-spend $\vtxo{A}$. She could unilaterally exit with it (forcing Bob to constantly monitor the chain and post $\tx{A}{ark}$ if Alice ever exits with $\vtxo{A}$) or collude with $\operator$ to create conflicting Ark transactions spending the same VTXO. Since Ark transactions are virtual, Alice and a (malicious) $\operator$ could theoretically spend the same VTXO arbitrarily many times. If Alice broadcasts $\vtxo{A}$ onchain, only one such transaction could be confirmed. Next, we explain how Bob can secure his funds by batch swapping $\vtxo{B}$ for a new VTXO. 

\subsection{Batch Swaps}
\label{subsec:settle-vtxo}
To finalise Alice’s payment, Bob \emph{atomically} swaps $\vtxo{B}$ for a new VTXO, a process we call \emph{batch swapping}. Here, $\operator$ takes $\vtxo{B}$ and issues Bob a new VTXO $\vtxo{B'}$ in the next batch, which serves as onchain confirmation that Bob owns value $b$ (which may be smaller than $p$ to account for a fee $\operator$ may charge). This guarantees that either both transfers succeed or both fail, removing the need for Bob to monitor the chain or trust $\operator$ not to collude with Alice. For clarity we describe one-to-one VTXO swaps, though the protocol supports swapping multiple VTXOs into arbitrary new allocations.

A batch swap starts with Bob requesting $\operator$ to swap $\vtxo{B}$. $\operator$ builds a transaction $\tx{\operator}{}$ spending its own funds and producing outputs $\tx{\operator}{}.\outputs = [(b,\checkMultiSig{2,2;pk_\operator,pk_B}),(\varepsilon,\checkSig{pk_\operator})]$. This is a basic commitment transaction (formally defined in §\ref{subsec:commitment-tx}). The first output is a batch containing one VTXO; the second is an \emph{anchor output} with dust value $\varepsilon$, essential for atomicity.


$\operator$ also creates a virtual transaction $\tx{}{virtual}$ spending the first output of $\tx{\operator}{}$ and producing $\vtxo{B'}$. $\operator$ signs $\tx{}{virtual}$ and shares it with Bob along with the anchor output. Bob then constructs and signs a \emph{forfeit transaction} $\tx{}{forfeit}$ spending both $\vtxo{B}$ and the anchor, sending all funds to $\operator$. The signature uses the \texttt{SIGHASH\_ALL} flag\footnote{Bitcoin’s \texttt{SIGHASH} flags specify which parts of a spending transaction are covered by a signature. In particular, \texttt{SIGHASH\_ALL} fixes all inputs and outputs.} and is only valid for this specific transaction spending the anchor. Thus, $\tx{}{forfeit}$ is valid only if both $\tx{\operator}{}$ and $\vtxo{B}$ are onchain. $\operator$ can now safely broadcast $\tx{\operator}{}$, as the anchor ensures atomicity. If Bob turns $\vtxo{B}$ into a UTXO, $\operator$ can claim $b$ via $\tx{}{forfeit}$. On the other hand, $\operator$ cannot claim $\vtxo{B}$ using $\tx{}{forfeit}$ as long as $\tx{\operator}{}$ is not onchain.

Bob has successfully swapped $\vtxo{B}$ for $\vtxo{B'}$, which he can now unilaterally exit by posting $\tx{}{virtual}$, regardless of whether Alice tries to unilaterally exit with or double-spend $\vtxo{A}$. He no longer needs to monitor the blockchain for Alice turning $\vtxo{A}$ into a UTXO. If she does, it is now in $\operator$'s interest to broadcast $\tx{A}{ark}$ and $\tx{}{forfeit}$ to claim what are now its funds.


This construction understandably seems cumbersome. However, its power becomes clear when realising we can process more than one VTXO in this way. Indeed, the first output of $\tx{\operator}{}$ is just a batch containing one VTXO (where the VTXT consists only of $\tx{}{virtual}$). We can increase the size of this batch, including other VTXOs that may also have been created through batch swaps. 

Finally, note that the current construction also locks $\operator$’s funds, since Bob’s signature is required.  We show next how these funds are later released to $\operator$.

\subsection{Commitment transactions}
\label{subsec:commitment-tx}
The previously defined $\tx{\operator}{}$ offers a natural way to contain batches as its outputs. Moreover, the second output of $\tx{\operator}{}$ enables to swap VTXOs atomically. This second output is a so-called \emph{connector}.
\begin{definition}[Connector]
\label{def:connector}
    A \emph{connector} is a transaction output which is locked by a Taproot script $\emph{\connectorScript}$ with an unspendable key path and a script path that specifies spending according to a VTXT where the root spends the full connector, and where each leaf of the VTXT has an anchor output as its only output. The remaining nodes of the VTXT are virtual transactions with a connector as their only output. Each output can be spent by a signature from $\operator$.
\end{definition}
A connector contains (multiple) anchor outputs serving as inputs to forfeit transactions, that can only be included onchain if the commitment containing that connector is included onchain. Unlike batches, the virtual transactions of a connector are signed solely by $\operator$. We now simply define a commitment transaction as follows.
\begin{definition}[Commitment transaction]
\label{def:commit-tx}
    A \emph{commitment transaction} (or commitment) is a transaction with at least one batch or one connector output.
\end{definition}
A commitment may output multiple batches and connectors. Additionally, it may contain inputs and outputs related to users joining (\emph{boarding}) and leaving (\emph{exiting}) the Ark. 


\smallskip\noindent\textit{Boarding.} Consider a user Alice with some onchain funds locked in a UTXO $\out{A}{}$. Alice can join, or \emph{board} the Ark via a two-step process. First, Alice constructs, signs and broadcasts the \emph{boarding transaction} $\tx{}{board}$ with input $\out{A}{}$ and output $\out{A}{*}$, where the locking script of the output $\out{A}{*}$ can be written as $\texttt{Taproot}(\texttt{False};\checkSig{pk_\operator\oplus pk_A},$ $\checkSig{pk_A}\wedge$ $\relTimelock{t_b})$.
This is a Taproot script with an unspendable key path and two script paths. The first script path is a $\emph{cooperative}$ path, in which Alice and $\operator$ spend the funds together, and the second is an \emph{exit} path, which lets Alice spend her funds after a timeout $t_b$. Alice can now send a boarding request to $\operator$, who will first verify that $\out{A}{*}$ cannot be spent without $\operator$ before the timeout (preventing a double-spend of the next commitment $\tx{}{commit}$), and then create one or more VTXOs for Alice in a batch of $\tx{}{commit}$. In return, $\out{A}{*}$ is added as an input to $\tx{}{commit}$ via its cooperative path. In case Alice would not want to board the Ark after all, she can either spend $\out{A}{*}$ via the exit path after the timeout, or cooperate with $\operator$ to spend $\out{A}{*}$ for a new UTXO that can be spent by Alice only.

\smallskip\noindent\textit{Exiting.} Alice holds one or more VTXOs in confirmed batches and can exit either \emph{unilaterally} or \emph{collaboratively}.  
In the unilateral case, she simply broadcasts $\patht(\vtxo{})$ for each owned $\vtxo{}$, without interacting with $\operator$.  
In the collaborative case, Alice batch swaps her VTXOs, but instead of receiving new ones, $\operator$ includes a UTXO for her in the next commitment. As in §\ref{subsec:settle-vtxo}, $\operator$ also obtains a signed forfeit transaction to prevent Alice from reusing old VTXOs.  

\begin{figure*}
    \centering
    \includegraphics[width=\linewidth]{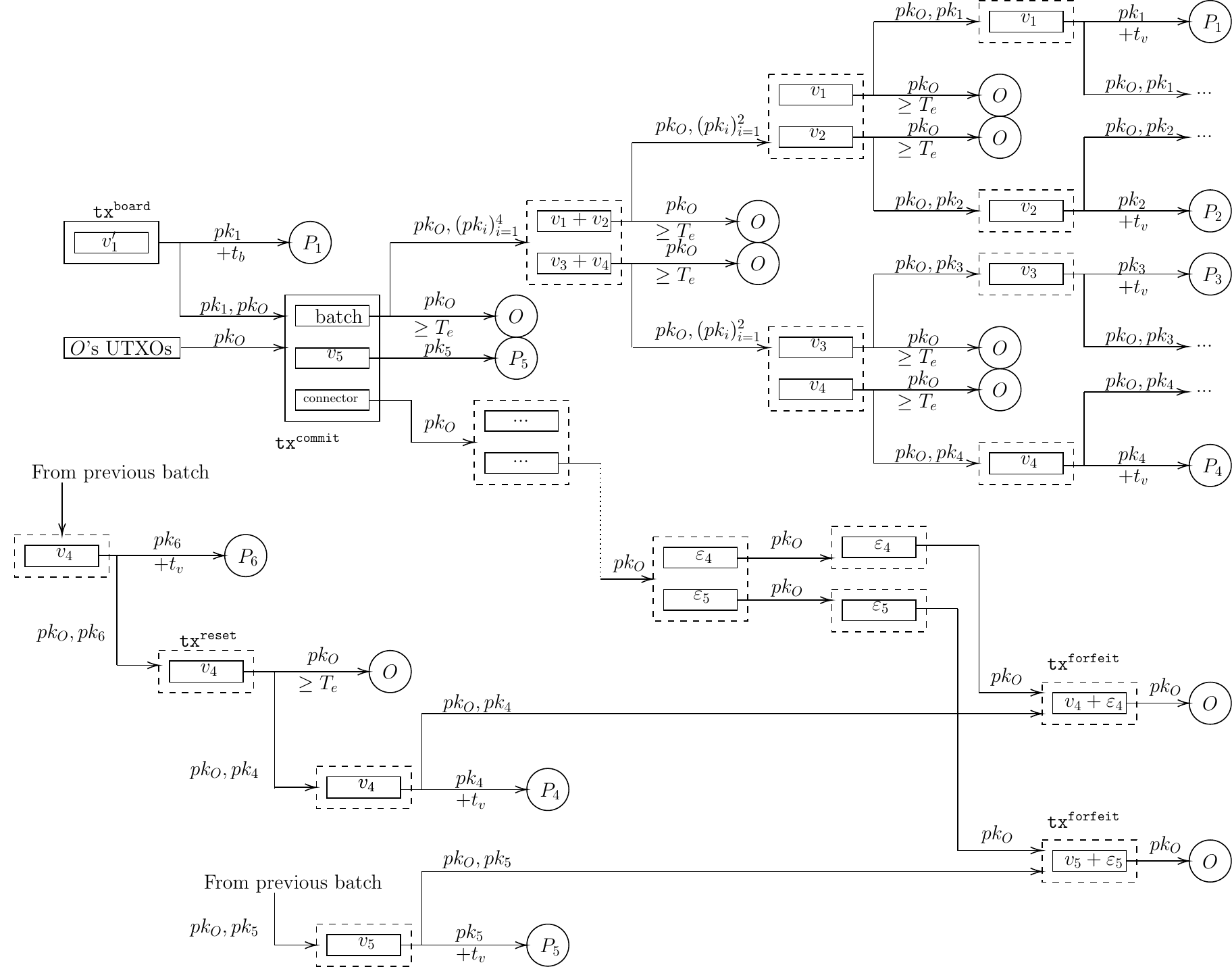}
    \caption{Transaction dependencies within the Ark protocol. UTXOs are represented by small rectangles enclosing a label, representing either the value or their function. (Multiple) UTXOs can be enclosed by a larger rectangle, which is a transaction. Multiple arrows may leave from a UTXO. Each arrow is labelled by a spending condition and may point to a transaction that spends the respective UTXO. Dashed transactions are virtual transactions and optimistically never appear onchain. We set $T_e=h+2k+t_e$, where $h$ is the block height $\tx{}{commit}$ got submitted by $\operator$ and $t_e$ the time in which a batch should expire. Moreover, recall that $\tv$ is the minimum delay for a unilateral VTXO exit compared to any collaborative exit, and $t_b$ the boarding transaction timeout period. 
    }
    \label{fig:tx}
\end{figure*}

Constructing a fully signed commitment requires user coordination (see Figure~\ref{op:op} and Appendix~\ref{app:protocol}). This process assumes responsive participants\footnote{If a user is offline, the transaction aborts but can be retried without that request.}. Figure~\ref{fig:tx} illustrates: $P_1$ boards via a boarding transaction; $\operator$ combines $P_1$’s and its own funds into a commitment; $P_4$ swaps a VTXO; and $P_5$ exits. The batch then includes a VTXO for $P_1$, a new VTXO for $P_4$, a UTXO for $P_5$, and a connector allowing $\operator$ to reclaim $P_4$’s and $P_5$’s old VTXOs if posted onchain.


\subsection{Ark transactions revisited}
\label{subsec:ark-revisited}
The earlier testnet version of the Ark protocol is fully described by §\ref{subsec:tx-batching}-§\ref{subsec:commitment-tx}. However, we found out two critical vulnerabilities in this protocol, which we call the \emph{hostage attack} and \emph{spam attack}.

\smallskip\noindent\textbf{Hostage attack.} Recall from §\ref{subsec:settle-vtxo} that $\operator$ is responsible for broadcasting Ark and forfeit transactions when already spent or batch swapped VTXOs appear onchain. A problem arises when an Ark transaction includes inputs from multiple batches: at some point, some batches will have expired whereas others have not. $\operator$ wants to sweep expired batches, but doing so exposes it to users who try to unilaterally claim unexpired VTXOs. Normally, $\operator$ would broadcast the Ark transactions up to the batch swap in response, then the forfeit transaction to claim batch swapped output VTXOs. However, once a batch containing an input VTXO has been swept, the Ark transaction becomes invalid and cannot be included onchain. $\operator$ has thus no way to claim this spent VTXO, resulting in a loss of funds. A malicious user could hold $\operator$'s funds hostage for prolonged periods of time by having Ark transactions with inputs from many different batches. For example, if a user has an Ark transaction spending from batches expiring at $T_1$ and $T_2>T_1$, and an Ark transaction spending from batches expiring at $T_2$ and $T_3>T_2$, then $\operator$ cannot sweep any batch until $T_3$. This attack can be prolonged as long as the operator processes Ark transactions spending from multiple batches. If the operator would not do so this would severely limit the Ark's functionality.

\smallskip\noindent\textbf{Spam attack.} Recall from §\ref{subsec:settle-vtxo} that a forfeit transaction is only signed for a VTXO that is batch swapped. Hence, a malicious user Mallory holding a VTXO $\vtxo{M}$ could, in theory, make a chain of Ark transactions to herself, each time spending the new VTXO in the next transaction, and batch swap or cooperatively exit the final VTXO $\vtxo{M'}$. When the new commitment is confirmed, she could then exit $\vtxo{M}$ unilaterally. The operator now has to post all Ark transactions onchain, incurring onchain fees to finally broadcast the forfeit transaction that spends $\vtxo{M'}$. As the potential Ark transaction fees will be significantly lower than the corresponding onchain fees, this spam attack could make it unprofitable for $\operator$ to broadcast the transaction chain up to the forfeit transaction, allowing Mallory to illegitimately claim both $\vtxo{M}$ and $\vtxo{M'}$, essentially stealing $\vtxo{M'}.\val$ from the operator. 

\smallskip\noindent\textbf{Reset transactions.} To defend against these attacks, we introduce an additional virtual transaction, the \emph{reset transaction}, which allows $\operator$ to claim spent VTXOs even if the corresponding Ark transaction became invalid after an earlier sweep. This transaction shifts the responsibility of posting the subsequent Ark transaction to users.

Consider the setting of §\ref{subsec:ark-tx}, where Alice sends an amount $p$ to Bob. In addition to constructing an Ark transaction as before, she also creates a reset transaction spending $\vtxo{A}$ collaboratively, producing one output with value $p$ and locking script $\texttt{Taproot}(\False;$ $\checkSig{pk_\operator\oplus pk_A},$ $\checkSig{pk_\operator}\wedge\absTimelock{T_e}))$. Her Ark transaction then spends this output, giving $p$ to Bob. The reset output can thus be spent either jointly by Alice and $\operator$ or swept by $\operator$ once the batch expires. Alice first signs the Ark transaction and sends both the Ark and reset transactions to $\operator$, who verifies the scripts, signs both, and returns them. Alice then signs the reset transaction and passes it to $\operator$\footnote{More precisely, both cooperatively produce the aggregated signatures for the Ark and reset transactions, in that order.}. The overall back-and-forth is shown in Figure~\ref{op:ark}. The resulting transaction flow is also illustrated in Figure~\ref{fig:tx}, where $P_3$ sends $P_4$ the amount $v_4$, which $P_4$ then batch swaps.

If Alice wants to perform an Ark transaction with multiple inputs, she first constructs a reset transaction for each input VTXO, and builds an Ark transaction spending the outputs of all these reset transactions. This protects $\operator$ for the hostage attack: if VTXOs from different batches are spent by an Ark transaction, expired batches can safely be swept. Indeed, if an input VTXO appears onchain, $\operator$ can simply broadcast the corresponding reset transaction, claiming the funds at that batch's expiry. The Ark transaction, which requires all input VTXOs and their corresponding reset transactions to be onchain, became invalid once the expired batches were swept. A receiver wishing to exit unilaterally with the outputs of the Ark transaction should thus do so before any of the batches holding an input to the Ark transaction expires.

In case of a spam attack, where Mallory exited unilaterally with $\vtxo{M}$, the operator will now simply broadcast the appropriate reset transaction, forcing Mallory to post the subsequent Ark transaction. The operator will always force Mallory to post the next Ark transaction, until eventually the operator can post the forfeit transaction. Mallory having to post all the Ark transactions to try to claim $\vtxo{M}$ should deter Mallory from trying to claim both $\vtxo{M}$ and $\vtxo{M'}$. Note that this deters any rational attacker from mounting this attack. However, a Byzantine user could mount this attack regardless. To protect against such attackers, $\operator$ could additionally limit the length of a chain of Ark transactions, forcing a batch swap at some point.

\begin{figure}[h]
    \centering
    \resizebox{\linewidth}{!}{
        \begin{bbrenv}{A}
            \begin{bbrbox}
                \pseudocodeblock[colsep=0em]{
                    \textbf{Alice} \< \> \textbf{Operator}\\[][\hline]
                    \< \> \\[-1em]
                    \text{Construct $\tx{}{re}$ with input $\vtxo{A}$} \< \> \\
                    \text{and output $\out{}{}$ with locking script} \< \> \\ 
                    \text{$\texttt{Taproot}(\False;\checkSig{pk_\operator\oplus pk_A},$} \< \> \\ 
                    \text{$\checkSig{pk_\operator}\wedge\absTimelock{T_e})$} \< \> \\
                    \text{Construct $\tx{}{ark}$ with input} \< \> \\
                    \text{$\out{}{}$ and output $\vtxo{B}$} \< \> \\
                    \text{Sign $\sigma_A^{\texttt{ark}}=\Sign_{sk_A}(\tx{}{ark})$} \< \> \\[-5em]
                    \< \sendmessageright*[1cm]{\text{$\tx{}{re}$,\ } \\ \text{$\tx{}{ark}$,} \\ \text{$\sigma_A^{\texttt{ark}}$\ }} \> \\[-1em]
                    \< \> \text{Verify $\tx{}{ark}$, $\tx{}{re}$ and $\sigma_A^{\texttt{ark}}$} \\[-0.6em]
                    \< \sendmessageleft*[1cm]{\text{$\sigma_\operator^{\texttt{ark}}$}} \> \text{Sign $\sigma_\operator^{\texttt{ark}}=\Sign_{sk_\operator}(\tx{}{ark})$} \\[-1em]
                    \text{Verify $\sigma_\operator^{\texttt{ark}}$} \< \> \\[-0.6em]
                    \text{Sign $\sigma_A^{\texttt{re}}=\Sign_{sk_A}(\tx{}{re})$} \< \sendmessageright*[1cm]{\text{$\sigma_A^{\texttt{re}}$}} \> \\[-1em]
                    \< \> \text{Verify $\sigma_A^{\texttt{re}}$} \\[-0.6em]
                    \< \sendmessageleft*[1cm]{\text{$\sigma_\operator^{\texttt{re}}$}} \> \text{Sign $\sigma_\operator^{\texttt{re}}=\Sign_{sk_\operator}(\tx{}{re})$} \\[-1em]
                    \text{Verify $\sigma_\operator^{\texttt{re}}$} \< \> \\
                }
                \begin{bbrenv}{D}
                    \begin{bbrbox}
                        \pseudocodeblock{
                            \textbf{Alice} \< \> \textbf{Bob}\\[][\hline]
                            \< \> \\[-1em]
                            \quad\quad\quad\quad\quad\quad \< \sendmessage*{->}{top = {\text{$\tx{}{ark}$, $\tx{}{re}$, $\patht(\vtxo{A})$,} \\ \text{$\sigma_A^{\texttt{ark}}$, $\sigma_\operator^{\texttt{ark}}$, $\sigma_A^{\texttt{re}}$, $\sigma_\operator^{\texttt{re}}$\quad\quad}}, length=4.5cm} \> \\
                            \< \> \text{Proceed with batch} \\
                            \< \> \text{swap for $\vtxo{B}$}
                        }
                    \end{bbrbox}
                \end{bbrenv}
            \end{bbrbox}
        \end{bbrenv}
    }
    \caption{Alice sends funds to Bob via an Ark transaction. Upon receiving the signed Ark transaction $\tx{}{ark}$ and reset transaction $\tx{}{re}$ (described in §\ref{subsec:ark-tx} and §\ref{subsec:ark-revisited} respectively), Bob will request a batch swap as introduced in §\ref{subsec:settle-vtxo} and only consider the transaction finalised once the corresponding commitment transaction is confirmed onchain.}
    \label{op:ark}
\end{figure}

\subsection{Offchain handover}
\label{subsec:handover}
As described in §\ref{subsec:commitment-tx}, unilateral exits involve broadcasting the whole virtual transaction path leading up to the to-be-exited VTXO. Onchain activity may thus spike if many Ark users unilaterally exit their VTXOs, for example when an operator stops responding and users must claim funds to prevent loss. Even if the operator remains active, facilitating collaborative exits can still create numerous onchain UTXOs when many users are involved. If some users wish to stay in an Ark, such exits require a subsequent onchain boarding transaction into a new Ark. To avoid this footprint, we introduce an offchain handover procedure, allowing a VTXO in an Ark with operator $\operator_1$ to move into an Ark with operator $\operator_2$. If $\operator_1$ plans to shut down but stays online to process handovers, its users can migrate to $\operator_2$’s Ark entirely offchain, with the only onchain footprint being $\operator_1$ reimbursing $\operator_2$ for the transferred VTXOs.

Consider a user Alice holding a VTXO $\vtxo{A}$ of value $v$ in a confirmed batch funded by $\operator_1$. Suppose $\operator_1$ plans to shut down its Ark but remains cooperative to transfer Alice's VTXO to an Ark run by $\operator_2$. This transfer is effectively a batch swap with connector spendable by $\operator_1$, and the new VTXO funded by $\operator_2$.

Alice requests a batch swap to $\operator_2$, as in §\ref{subsec:settle-vtxo}. $\operator_2$ includes a new VTXO for Alice and an anchor output in its commitment transaction, but this anchor is now spendable by $\operator_1$. Instead of Alice and $\operator_2$ signing a forfeit transaction spending the anchor and giving $\vtxo{A}$ to $\operator_2$, Alice and $\operator_1$ sign one giving it to $\operator_1$. This two-operator batch swap ensures $\operator_1$ can claim $\vtxo{A}$ if Alice exits unilaterally. Since $\operator_2$ funds Alice’s new VTXO without being able to sweep $\vtxo{A}$, $\operator_1$ must reimburse $\operator_2$ by amount $v$. This is conditional on $\operator_2$’s commitment being confirmed onchain. To enforce this atomicity, $\operator_2$’s commitment includes an anchor output $\out{1\to 2}{}$ used as input to $\tx{1\to 2}{}$, where $\operator_1$ pays $\operator_2$ the amount $v$. Before signing this modified commitment, $\operator_2$ requests $\tx{1\to 2}{}$, signed by $\operator_1$. If valid, $\operator_2$ can broadcast both transactions and safely claim $v$. Conversely, $\operator_1$ only pays if $\operator_2$’s commitment is confirmed onchain. This scheme scales to multiple VTXOs, where $\tx{1\to 2}{}$ now pays the total amount $\operator_2$ needed to fund all the transferred VTXOs. For the users, the online requirement is the same as for a normal batch swap (some tasks are simply in collaboration with $\operator_1$, and others with $\operator_2$). Additionally, we need $\operator_1$ to remain responsive and handle the extra overhead of signing off on $\tx{1\to 2}{}$ and verifying $\operator_2$'s commitment.

\section{Fast Finality}
\label{sec:extensions}
In §\ref{subsec:settle-vtxo} Bob must batch swap the VTXO received from Alice before finalising the payment. Ownership is only guaranteed once the corresponding commitment confirms onchain, as Alice may still collude with $\operator$. This limits how quickly VTXOs can be spent again.

If $\operator$ is honest, it will never cosign a conflicting spend of an already-spent VTXO, and Bob can therefore safely accept and re-spend new VTXOs without waiting for the next onchain commitment.
Building on this observation, we introduce a mechanism that enforces this behaviour economically: a rational operator that double-signs stands to lose more than it can gain.
Concretely, consider a fixed set $\Nff$ of Ark users who opt in to transact with fast finality, and assume the following additions to the model of §\ref{sec:overview}:
\begin{enumerate}[leftmargin=*, label=(A\arabic*)]
    \item \label{asm:1} $\operator$ is \emph{rational}, i.e., maximising profit,
    \item \label{asm:3} users in $\Nff$ form a broadcast network where messages are delivered within the known time bound $\Delta$. 
    \item \label{asm:4} it is known to the users in $\Nff$ that they collectively hold a value $v$ in VTXOs. 
\end{enumerate}

We stress that participation in the fast finality scheme is entirely opt-in. Users who are unwilling to accept assumptions \ref{asm:1}--\ref{asm:4} can continue using the Ark protocol as presented before, where transaction finality reduces to that of the underlying blockchain.

\subsection{Protocol Overview} 
The fast finality protocol requires $\operator$ to lock a collateral of value $c_\operator>v$ in a BitVM instance \cite{linus2025bitvm2,woll2026bitvm3}, set up with all users in $\Nff$. This enables the victim of a double-spend to burn $\operator$'s collateral by presenting the conflicting transactions. We describe this BitVM instance in §\ref{subsec:bitvm}. Once this instance is set up, the fast finality protocol for a fixed $\Nff$ proceeds as follows. Each user in $\Nff$ maintains a local record of all observed Ark transactions, which we call an \emph{Ark ledger}. By assumption~\ref{asm:3}, every Ark transaction reaches all users in $\Nff$ within $\Delta$. Upon receiving a new VTXO, a user broadcasts the corresponding Ark transaction to $\Nff$, waits $2\Delta$ to ensure no conflicting transaction appears, and then accepts the payment. If a conflict is detected, either via the broadcast network or by monitoring the chain, the user initiates the BitVM dispute and burns the operator collateral. The full protocol is given in Protocol~\ref{pol:ff}.

\begin{definition}[Ark ledger]
\label{def:ark-ledger}
    An \emph{Ark ledger} $\L$ is a set of Ark transactions that (i) carry valid witnesses for all their VTXO inputs, and (ii) spend only VTXOs that are either created by other transactions in $\L$ or committed to in a confirmed, unswept batch output.
\end{definition}
Note that $\L$ is an unordered set: the broadcast network disseminates transactions but does not impose a total order on them.

\begin{remark}
\label{rem:ff-path}
    Fast finality Ark transactions may have multiple VTXO inputs and outputs. The notion of $\patht(\vtxo{})$ from Definition~\ref{def:vtxt} therefore generalises: unilaterally exiting $\vtxo{}$ may require posting not only the virtual transactions from the batch, but also every fast finality Ark transaction in the chain leading to $\vtxo{}$, potentially spanning multiple batches. Each user must therefore store the full transaction history relevant to its VTXOs.
\end{remark}

\begin{algorithm}
    \captionsetup{name=Protocol}
    \caption{Fast finality protocol for an Ark user $P_1$.}
    \label{pol:ff}
    \begin{algorithmic}[1]
    \State Check that $P_1$ is listed in $\Nff$.
    \While{$P_1$ holds $\vtxo{}$ not committed to in a confirmed batch}
    \State Listen for transactions onchain in conflict with $\patht(\vtxo{})$
    \If{$\tx{}{*}$ appears, conflicting with $\tx{}{}\in\patht(\vtxo{})$}
    \State Initiate a BitVM dispute with $(\operator,(\tx{}{},\tx{}{*}))$
    \EndIf
    \EndWhile
    \State \textbf{To spend $\vtxo{}$ in $\tx{}{ark}$ to $\boldsymbol{P_2}$:}
    \State Proceed as in Figure~\ref{op:ark}, with $P_1$ Alice and $P_2$ Bob, except that $P_2$ does not request a batch swap
    \State \textbf{When receiving a message $p=(\tx{}{ark},\tx{}{re},\patht(\vtxo{}))$:}
    \If{$P_1$ is the payment recipient}
    \For{all $\tx{}{}\in\patht(\vtxo{})$}
    \State Reject the payment if the sender of $\tx{}{}$ is not in $\Nff$.
    \EndFor
    \State Broadcast $p$ to $\Nff$
    \State Wait $2\Delta$
    \EndIf
    \If{$p$ does not conflict with $P_1$'s Ark ledger $\L$ or the chain}
    \State Add $\tx{}{ark}$ and $\tx{}{re}$ to $\L$
    \If{$P_1$ is the payment recipient}
    \State Accept payment
    \EndIf
    \Else
    \State Initiate a BitVM dispute with the conflicting transactions
    \If{$P_1$ is the payment recipient}
    \State Reject payment
    \EndIf
    \EndIf
    \end{algorithmic}
\end{algorithm}

\smallskip\noindent\textbf{Security intuition.} We claim that any user in $\Nff$ who follows Protocol~\ref{pol:ff} cannot be defrauded, as long as $c_\operator > v$. The $2\Delta$ waiting period ensures that no two honest users in $\Nff$ accept conflicting payments: if a conflict exists, at least one recipient will observe it before finalising. A malicious user who attempts to exit onchain with a conflicting transaction will be detected by the honest recipient monitoring the chain, triggering collateral burning. Since the burned amount $c$ exceeds the maximum gain $v$ from any double-sign, a rational operator will refrain from misbehaving. The same logic applies if the operator attempts to collude with a subset of batch cosigners to restructure the batches from which a fast finality VTXO originates. Finally, because collaborative spends take priority over unilateral spends (which are delayed by $t_u$), an honest user monitoring the chain can always preempt a previous VTXO owner attempting to reclaim already-spent funds.

\begin{remark}
    Our presentation assumes a fixed set $\Nff$. In practice, $\operator$ can periodically publish $\Nff$, either onchain or at a publicly accessible location with an onchain commitment, specifying how to reach each participating user. The only power this grants $\operator$ is determining which users can safely accept fast finality payments; each honest user independently verifies its own membership. The design extends naturally to a dynamic $\Nff$. However, this requires setting up a new BitVM instance with the new set of users. This is a one-time setup, after which the users can transact with fast finality arbitrarily many times (up to VTXO expiry). To ensure continued fast finality operations, the new BitVM instance can be set up before the old instance is wound down, having a brief period in time where the operator needs to lock two amounts of collateral simultaneously. In case a dispute is wrongfully initiated by a user (this can be seen as a denial-of-service attack), the BitVM instance needs to be set up again. This can be done right away (excluding the malicious user from $\Nff$), or waiting until the next scheduled $\Nff$ publication. To remain secure, fast finality transactions should halt until the new BitVM instance is set up. Note that users are discouraged from performing such denial-of-service attacks as they would lose their BitVM collateral.
\end{remark}

\begin{remark}
    An alternative approach would be to burn the operator collateral through private key extraction. A pre-signed burn transaction missing only the operator's signature could then be made valid with the extracted private key. Implementing this in Bitcoin today remains, however, an open problem. DAPS-based constructions~\cite{dong2024remote} are not Bitcoin-compatible, and approaches using Schnorr signature nonce reuse remain infeasible as long as opcodes like \texttt{OP\_CAT} or \texttt{OP\_AND}\cite{perez2019double} remain disabled, and opcodes performing arithmetic operations remain limited to 4-byte operands. Finally, accountable assertions~\cite{ruffing2015liar} can only prevent double-spending on two honest users, as they need to communicate the assertions they received to each other. A malicious sender can always send funds to an honest user with an assertion, to then exit onchain with an Ark transaction to itself, as Bitcoin Script cannot enforce including an assertion for that transaction too.
\end{remark}

\subsection{Punishing double-signing}
\label{subsec:bitvm}
The operator will be discouraged from double-signing if any such misbehaviour is guaranteed to lead to a financial punishment. To this end, the fast finality protocol requires $\operator$ and the users in $\Nff$ to set up a BitVM instance. BitVM \cite{linus2025bitvm2, woll2026bitvm3} is a recently introduced paradigm that allows for optimistic verification of arbitrary programs on Bitcoin. In particular, it enables us to make double-spending natively punishable. Using the notation of \cite{linus2025bitvm2}, we set up a \texttt{BitVM2-CORE} proof system between the users in $\Nff$ as provers and $\operator$ as verifier, based on the relation $\mathcal{R}$, defined for a statement $x$ and a witness $w=(\tx{1}{},\tx{2}{})$, with $\tx{i}{}=(\inputs_{i},\witnesses_{i},\outputs_{i})$ for $i=1,2$ as:
\begin{align*}
    &\mathcal{R}(x,(\tx{1}{},\tx{2}{}))=1\iff\\&\exists\out{}{}\in\inputs_1\cap\inputs_2, w_{i}\in\witnesses_{i}\ (i=1,2):\out{}{}.\lockScript(w_i)=\True\\&\wedge x=\operator\wedge\tx{1}{}\neq\tx{2}{}\wedge w_1\neq w_2\wedge\text{$x$ provided signatures for $w_1$ and $w_2$}
\end{align*}
Each party needs to put up a collateral. For $\operator$, this collateral $c_\operator$ should be more than $v+f$, where $f$ is the amount of onchain fees needed to pay for the onchain dispute resolution, for each user in $\Nff$, we set the collateral to $c_U>f$, and large enough to discourage users from initiating a dispute when no double-signing occurred. The result of this setup is a set of presigned transactions. We do not specify the exact transactions, but describe the possible payout scenarios. A dispute can be initiated by any $U^*\in\Nff$ by providing a statement-witness pair $(x,w)$. If $\mathcal{R}(x,w)=1$, i.e., $\operator$ indeed double-signed, all users in $\Nff$ will be able to reclaim $c_U$. Moreover, from the operator collateral $c_\operator$, $f$ is used to pay the onchain fees, and $v$ is burned. If instead $\mathcal{R}(x,w)=0$, i.e., $U^*$ falsely accused $\operator$ of double-signing, everyone except $U^*$ will be able to reclaim their collateral. From the collateral of $U^*$, $f$ is used to pay the onchain fees, the rest is burned.

The BitVM instance enforces that if anyone in $\Nff$ observes a double-sign by $\operator$, the latter can be punished by having its collateral burned (e.g., via \texttt{OP\_RETURN}). In particular, any user in $\Nff$ who is the victim of a double-spend, can punish the malicious operator.

\section{Security Analysis}
\label{sec:security}

This section argues informally why Ark satisfies the properties of §\ref{subsec:properties}. Formal definitions and proofs of the security properties are deferred to Appendix~\ref{app:proofs}. 

\smallskip\noindent\textbf{VTXO Security, Theorem~\ref{thm:vtxo}.} 
(i) \emph{Safety.} Any $\vtxo{}$ in a confirmed, unexpired batch cannot be spent without a valid witness, as long as one cosigner on $\patht(\vtxo{})$ is honest. An honest cosigner never signs conflicting transactions, emulating a covenant: before expiry, a VTXO is either exited unilaterally or spent collaboratively with $\operator$, in which case it is only considered spent when there is a valid Ark or forfeit transaction, requiring a valid witness.
(ii) \emph{Liveness.} For any $\vtxo{}$ in a confirmed batch, if at least one cosigner on $\patht(\vtxo{})$ is honest so that $\vtxo{}$ remains unspent, 
then anyone knowing $\patht(\vtxo{})$ can broadcast it to exit unilaterally. This is safe up to $2k$ (blockchain depth parameter) blocks before batch expiry to ensure all transactions in the path confirm onchain.

\smallskip\noindent\textbf{Ark Atomicity, Theorem~\ref{thm:atomicity}.} 
For any action involving an honest party (user or operator), either the state changes exactly as intended or not at all. We show this by going through each honest Ark action and making sure
honest participants can abort at any time upon detecting misbehaviour and only provide witnesses for transactions they approve, ruling out partial execution.

\smallskip\noindent\textbf{Balance Security, Theorems~\ref{thm:user} and \ref{thm:operator}.}  
(For a user) A user’s balance consists of funds in boarding outputs and VTXOs that can be claimed without $\operator$’s help (formally defined in Appendix~\ref{app:proofs}). Applying VTXO Security ensures that an honest user, holding the necessary paths, can always reclaim these funds via unilateral exit.
(For an operator) $\operator$ funds each commitment but eventually reclaims its inputs once the commitment expires. By atomicity, every coin spent by an honest operator corresponds to at least as much recoverable value in the sweep. Tracking all flows shows that after a phase of serving exits, all operator funds return.

\smallskip\noindent\textbf{Fast Finality, Theorem~\ref{thm:ff}.}
Because of the broadcast network between honest users in $\Nff$, each double-spend to two honest users will be detected and rejected. The only way in which a double-spend can occur after an honest user finalised a payment is if the malicious sender tries to exit unilaterally with either a conflicting VTXO, or an already spent VTXO. In both cases, the honest user will detect this, and respectively burn the operator collateral, or spend the spent VTXO with the corresponding (collaboratively signed) Ark transaction. Because the operator has to fund the collateral, which exceeds the potential gains from double-signing, and since the operator is assumed rational, it will refrain from double-signing.

\smallskip\noindent\textbf{Scalability.} 
(i) \emph{Constant Updates.} A batch can in principle commit to an arbitrarily large VTXT. Only the root of the tree appears onchain as part of the commitment transaction. The connector output can also contain arbitrarily many connectors. Note, however, that to confirm each successive VTXO spend, the new VTXO must appear in its own batch, requiring an additional onchain commitment. 
(ii) \emph{Constant optimistic exit.} With a cooperating operator, a user can exit multiple VTXOs collaboratively as an additional output in a commitment transaction.  
(iii) \emph{Logarithmic pessimistic exit.} A unilateral exit requires posting $\patht(\vtxo{})$, which has length $\mathcal{O}(\log n)$ in a VTXT with $n$ VTXOs as its leaves. If $\vtxo{}$ is an output in a (fast finality) Ark transaction, the unilateral exit costs increase. Indeed, all offchain Ark transactions need to be posted onchain as well in order for $\vtxo{}$ the transaction containing $\vtxo{}$ to be valid.

\section{Implementation and Evaluation}
\label{sec:implementation}

We implemented the Ark protocol as an open-source mainnet system\footnote{\url{https://github.com/ark4fish/ARK}}, incorporating the reset transaction scheme introduced in §\ref{subsec:ark-revisited} to defend against the hostage and spam attacks. Using this implementation, we experimentally evaluate two aspects: the time required to construct a commitment transaction as a function of the number of users, and the onchain footprint of commitments and unilateral exits.

\smallskip\noindent\textbf{Commitment construction time.}
We measure the time to finalize a commitment transaction when $\operator$ receives exactly one batch swap request from each of $n$ users, for $n \in \{2^1, \ldots, 2^7, 200\}$. Table~\ref{tab:times} reports the duration of three phases: (BC)~$\operator$ builds the commitment transaction (one batch, one connector) and distributes it to all users; (SS)~all MuSig2 signing sessions for the virtual transactions in the batch's VTXT are completed; and (FF)~users construct and fully sign forfeit transactions with~$\operator$. Signing the commitment transaction itself took $\operator$ on average $0.023$~seconds.

\begin{table}[h]
\centering
\caption{Execution times $t$ in seconds per value of $n$ for each phase (with $R^2$ for linear regression $t=a_0+a_1 n$, and $p$-value for $H_0\colon a_1=0$ against $H_1\colon a_1\neq 0$).}
\label{tab:times}
\setlength{\tabcolsep}{1.9pt}
\begin{tabular}{c|cccccccc|cc}
\hline $t(n)$  & 2     & 4     & 8     & 16    & 32    & 64    & 128   & 200   & $R^2$ & $p$  \\ \hline
BC & 0.030 & 0.049 & 0.053 & 0.080 & 0.084 & 0.220 & 0.427 & 0.806 & 0.981 & 2.25e-6  \\
SS & 0.190 & 0.194 & 0.200 & 0.219 & 0.333 & 0.388 & 0.707 & 1.075 & 0.991 & 2.01e-7  \\
FF & 0.114 & 0.123 & 0.132 & 0.169 & 0.224 & 0.370 & 0.556 & 0.831 & 0.998 & 4.71e-9  \\ \hline
\end{tabular}
\end{table}

All three phases grow linearly in $n$, consistent with the $\mathcal{O}(n)$ complexity of MuSig2~\cite{nick2020musig2}. At $n = 200$, users must be online simultaneously for approximately $2.7$~seconds. In practice, users submit requests over time and may go offline between rounds; $\operator$ collects pending requests and initiates the commitment protocol once its batching policy triggers, prompting the relevant users to reconnect. A key advantage of Ark is that only users involved in a given update need to participate, allowing $\operator$ to produce smaller, more frequent batches. We note that the experiment operates with minimal network delay; real-world latency would increase these times.

\smallskip\noindent\textbf{Onchain footprint.}
The experiment also produces the commitment transaction and the batch's VTXT, from which we extract transaction sizes via PSBT\footnote{Partially Signed Bitcoin Transaction format.} data (see Appendix~\ref{app:tx}). Examining $\patht(\vtxo{})$ for any VTXO in the batch, we observe consistent sizes: $107$~vB for the leaf transaction (\texttt{P2TR}\footnote{Pay To Taproot output.} VTXO output and one anchor\footnote{\label{fot:anchor}For onchain fee management.}), and $150$~vB for each intermediate virtual transaction (two \texttt{P2TR} outputs and one anchor\footref{fot:anchor}). With fast finality, each additional Ark transaction along the path incurs a comparable cost. The commitment transaction itself (comprising a change output, a batch output, and a connector output, all \texttt{P2TR}) is $197$~vB. As a concrete example, at a fee rate of $6$~sat/vB ($1\bitcoin = 10^8$~sat), unilaterally exiting a VTXO from a batch of $n = 128$ costs $6 \cdot (\lceil \log 128 \rceil \cdot 150 + 107) = 6{,}942$~sat (which is approx.~\$4.58 at the time of writing).

\smallskip\noindent\textbf{Comparison with other protocols.}
Table~\ref{tab:cost-comparison} summarizes how Ark's onchain costs compare to other Bitcoin offchain protocols across three dimensions: commitment size (where applicable), cooperative exit cost, and unilateral exit cost.

\begin{table}[h]
\centering
\caption{Onchain cost comparison across Bitcoin offchain protocols. Here $n$ denotes the batch/clique size, $h$ the number of in-flight HTLCs (for a single channel), and $d$ the depth of the user's UTXO in Spark.}
\label{tab:cost-comparison}
\setlength{\tabcolsep}{4pt}
\begin{tabular}{l|c|c|c}
\hline
Protocol & Commit. (vB) & Coop.\ exit & Unilat. exit (vB) \\ \hline
Ark & $197$ & $+1$ output & $\lceil \log n \rceil \cdot 150 + 107$ \\
Lightning~\cite{bolt3} & --- & $\mathcal{O}(1)$ & $h \cdot 43 + 181$ \\
Spark~\cite{spark2025spark} & --- & $\mathcal{O}(1)$ & $(d+3) \cdot 315$ \\
CoinPool & $\mathcal{O}(1)$ & $\mathcal{O}(1)$ & $\mathcal{O}(1)$ \\
Bitcoin Clique & $\mathcal{O}(1)$ & $+1$ output & $\mathcal{O}(\log n)$ \\
\hline
\end{tabular}
\end{table}

For commitment transactions, only CoinPool and Bitcoin Clique have comparable constructions with a constant number of outputs. Lightning and Spark do not have commitments in the sense that their onchain footprints do not batch an arbitrary number of UTXOs. CoinPool achieves constant cooperative/unilateral exit cost, as the user spends a shared UTXO and creates only two outputs. Bitcoin Clique's unilateral exit cost grows logarithmically in the clique size, analogous to Ark's scaling in the batch size. Both Clique and Ark require one output in a commitment per cooperative exit. Lightning's unilateral exit cost depends on in-flight HTLCs, while Spark's cost scales linearly in the UTXO depth $d$. This is analogous to Ark with fast finality, although Ark transactions can be smaller in size. Moreover, Ark's batch swap mechanism allows users to reset the path length, reducing exit to only the VTXT transactions.

\section{Discussion, Limitations, and Future Work}
\label{sec:discussion}

\smallskip\textbf{Operator Centralisation.}
Ark relies on a single operator to coordinate offchain transactions, manage liquidity, and produce commitments. This introduces a centralisation risk at odds with Bitcoin’s decentralised ethos. While unilateral exit ensures users can always reclaim funds, the operator remains a single point of failure: if unavailable or adversarial, performance degrades and onchain costs rise. Future work could explore multi-operator designs via federation or cryptographic coordination, and examine how the operator’s roles (funding, signing, batching) might be distributed across multiple parties. 

\smallskip\noindent\textbf{Liquidity Requirements.}
The operator must commit upfront capital to fund each batch, recoverable only at batch expiry together with any fees collected. Future work should quantify liquidity needs under realistic market conditions and develop trust-minimised mechanisms for external liquidity providers to fund commitments. This, together with appropriately designed fee structures, could position Ark liquidity provision as a Bitcoin-native, near risk-free return on capital, attracting sufficient supply.

\smallskip\noindent\textbf{Collateral Requirements for Fast Finality.}
Beyond operational liquidity, the fast finality mechanism requires the operator to lock collateral exceeding the total value held by opt-in users, ensuring that double-signing is always unprofitable. The opportunity cost of this lock-up must be covered by fees, and its sustainability under varying adoption levels remains an open question. Future work should analyse fee mechanisms that make the collateral economically viable, and explore alternative deterrence approaches that reduce the required amount, for example by accounting for the operator's loss of future revenue. Honest operator behaviour could also be enforced via TEEs, which may present an acceptable trust trade-off for fast finality users.


\smallskip\noindent\textbf{Fast Finality Assumptions.}
The fast finality mechanism requires users to remain online to monitor for potential double-spends, either offchain or onchain, similar to the online requirement in the Lightning Network \cite{poon2016bitcoin}. Moreover, it requires the operator and participating users to set up a BitVM instance. If new users wish to join with the same security guarantees, a new instance must be created involving all active users; only then can the previous instance be safely retired. Native covenant support and opcodes such as \texttt{OP\_CAT} or \texttt{OP\_AND} could eliminate the need for BitVM by making operator collateral only spendable by a burn transaction and enabling private key extraction by forcing signature nonce reuse through Bitcoin Script.

\smallskip\noindent\textbf{Onchain Cost of Unilateral Exit.}
Unilateral exits require broadcasting $\mathcal{O}(\log n)$ virtual transactions, and fast finality increases this cost further since all preceding Ark transactions must also appear onchain. Small VTXO holders may be priced out. Compared to rollups, which post all transaction data onchain by default, this is an improvement; compared to Lightning, whose exit cost per channel grows only with in-flight HTLCs, it is worse in absolute terms. However, the Ark operator is designed to remain continuously online (similar to a payment channel hub), making unilateral exits less frequent in practice. Future work should explore VTXT designs that reduce exit costs and evaluate policies limiting successive Ark spends before a mandatory batch swap.

\smallskip\noindent\textbf{Miner Extractable Value (MEV).}
The operator controls request scheduling and VTXO grouping, granting sequencing power analogous to a rollup sequencer. However, Ark does not introduce new scripting capabilities: the operator cannot reorder or insert transactions with programmable side effects. Its influence is limited to delaying or censoring requests, which is already possible in Bitcoin itself, and is mitigated by unilateral exit. Richer application logic built on top of Ark (e.g., via BitVM~\cite{aumayr2024bitvm,linus2025bitvm2,woll2026bitvm3}) could introduce application-level MEV, but this remains architecturally sandboxed: Ark handles VTXO transfers, while application semantics are enforced offchain via fraud proofs, and BitVM does not grant the operator additional control over execution order.

\smallskip\noindent\textbf{Bank Run Scenario.}
As with other Layer-2 protocols, a worst-case scenario for Ark is a bank run in which many users exit simultaneously, producing a transaction surge proportional to the number of VTXOs in the affected batches. Given Bitcoin's limited throughput, this may delay exits; users should therefore initiate unilateral exits well before batch expiry to avoid race conditions with operator sweeps. Future work should quantify how quickly a bank run can be resolved onchain as a function of the number of exiting VTXOs. We note, however, that this scenario can be avoided entirely when the operator remains cooperative: as shown in §\ref{subsec:handover}, users can migrate their VTXOs to a new Ark offchain, allowing a controlled wind-down with no unilateral exit footprint.

\smallskip\noindent\textbf{Rational Security Analysis.}
Under a rational miner model~\cite{aumayr2024bitcoin}, Ark becomes vulnerable to timelock bribing~\cite{nadahalli2021timelock}: a colluding operator could pay miners to censor a user's unilateral exit until batch expiry, then sweep the funds. Integrating bribe-resistant constructions~\cite{tsabary2021mad,wadhwa2022he,chung2022ponyta} or deriving tighter time bounds under fine-grained rational models~\cite{avarikioti2025composable} are important future directions. 
More broadly, our analysis assumed the operator is simply present and showed that it cannot violate economic safety (unless miners are rational) and can only censor users into exiting onchain. This censoring power is particularly relevant for batch swaps, which must occur at least once per VTXO per expiry period. In practice, however, operators will be profit-driven entities whose operational and capital costs, including the opportunity cost of locked liquidity and the fees needed to attract external providers, must be covered by user fees. Future work should devise fee mechanisms that make operating an Ark profitable while incentivising responsive behaviour. 

\section{Conclusion}
\label{sec:conclusion}

We presented Ark, the first Bitcoin-compatible commit-chain. Ark enables offchain transactions of virtual UTXOs through an untrusted operator who batches them into succinct onchain commitments. Unlike payment channel networks, Ark allows receivers to onboard entirely offchain and requires only the involved users, rather than all participants, to coordinate each state update. We formally defined the protocol, proved VTXO security, atomicity, and balance security for both users and the operator, and showed how an opt-in fast finality mechanism can eliminate the need to wait for onchain confirmation under a rational operator. Our security analysis uncovered two vulnerabilities in the original protocol, the hostage and spam attacks, which we resolved via the so-called reset transactions that are now integrated into an open-source mainnet implementation. Experimentally, we showed that Ark commits to arbitrarily many VTXOs with a constant $197$~vB onchain footprint, achieves cooperative exits at constant cost, and bounds unilateral exit cost to $\lceil\log n\rceil \cdot 150 + 107$~vB for a batch of $n$ VTXOs, with commitment construction completing in 2.7 seconds for 200 users.

\section*{Acknowledgements}
The original version of the Ark protocol was conceptualised by Burak \cite{burak2023ark}. Our work builds on his idea, which required covenants, making it Bitcoin-compatible. 
This research was partially funded by Ark Labs, by the European Research Council (ERC) under the European Union’s Horizon 2020 research (grant agreement 101141432-BlockSec), by the Austrian Science Fund (FWF) through the SFB SpyCode project F8510-N and F8512-N, and by the WWTF through the projects 10.47379/ICT22045 and 10.47379/ICT25056.
Finally, we thank Christos Stefo and Yuheng Wang for insightful discussions.

\bibliographystyle{ACM-Reference-Format}
\bibliography{references}

\appendix


\renewcommand{\thefigure}{A.\arabic{figure}}
\renewcommand{\thetable}{A.\arabic{table}}
\setcounter{figure}{0}
\setcounter{table}{0}

\section{Global Parameters, Notation and Commitment Transaction Construction}
\begin{table}[h]
\caption{Global parameters and Notation}
\label{tab:notation}
\centering
\begin{tabular}{c|l}
\hline
$P,\operator$          & Ark user and Ark operator         \\\hline
$\tx{}{},\out{}{}$     & Transaction, transaction output   \\\hline
$\vtxo{}$              & Virtual transaction output        \\\hline
$\C{}{Q}$, $\C{-k}{Q}$ &
\begin{tabular}[c]{@{}l@{}}Local view of party $Q$, with and without latest \\ $k$ blocks, denote inclusion by $\tx{}{}\in\C{}{P}$ \\ (or $\out{}{}\in\C{}{P}$ for \emph{unspent} outputs)\end{tabular}

       \\\hline
$t_b$         & \begin{tabular}[c]{@{}l@{}}Boarding timeout: unilateral recovery delay \\ for boarding funds\end{tabular} \\\hline

$t_e$         & \begin{tabular}[c]{@{}l@{}}Batch expiry time: minimum number of blocks \\ before batch expires\end{tabular}   \\\hline

$\tv$         & \begin{tabular}[c]{@{}l@{}}VTXO unilateral delay: minimum delay of \\ a VTXO unilateral script path compared \\ to any collaborative script path\end{tabular}     \\\hline
$\varepsilon$ & Dust-sized value of anchor outputs             \\\hline
$k,u$ & Ledger depth/wait parameters as defined in~\cite{garay2024bitcoin} \\\hline
\begin{tabular}[c]{@{}l@{}}$\C{}{Q}(r)$,\\ $\C{-k}{Q}(r)$\end{tabular} & \begin{tabular}[c]{@{}l@{}}Local view of party $Q$ at round $r$, with and \\ without latest $k$ blocks, as in \cite{garay2024bitcoin}, used only \\ in Appendix~\ref{app:proofs}, denote inclusion just as before \end{tabular}

       \\\hline
$\C{}{1}\preceq\C{}{2}$ & $\C{}{1}$ is a prefix of $\C{}{2}$ \\\hline

\end{tabular}
\end{table}

\begin{figure*}[b]
    \centering
    \resizebox{0.9\textwidth}{!}{
    \begin{bbrenv}{A}
    \begin{bbrbox}
        \pseudocodeblock{
            \textbf{Operator} \< \< \textbf{Users} \\[][\hline]
            \< \< \text{$\forall i\in I_B:$ $P_i$ made boarding} \\
            \< \< \text{request $r_i=(P_i,\out{i}{})$} \\
            \< \< \text{$\forall i\in I_E:$ $P_i$ made exit} \\
            \< \< \text{request $r_i=(P_i,\vtxo{i})$} \\
            \< \< \text{$\forall i\in I_S:$ $P_i$ made batch swap} \\
            \< \< \text{request $r_i=(P_i,\vtxo{i})$} \\
            \< \sendmessageleft{top = \text{$\operator$ receives over }, bottom = \text{time all requests}} \< \\
            \text{Verify exit paths of $\qty{\out{i}{}}_{i\in I_B}$ are invalid} \< \< \\
            \text{Construct a commitment transaction $\tx{}{commit}$} \< \< \\
            \text{spending $\qty{\out{i}{}}_{i\in I_B}$ and some funds of $\operator$,} \< \< \\
            \text{and with outputs a batch $\beta$, specified by} \< \< \\
            \text{a VTXT $(V,A)$, a connector $\gamma$, with anchor} \< \< \\
            \text{outputs $\qty{\varepsilon_i}_{i\in I_E\cup I_S}$, and UTXOs $\qty{\out{i}{}}_{i\in I_E}$} \< \< \\
            \forall i\in I_B: \< \sendmessageright{top = \text{send $\tx{}{commit}$}, bottom = \text{and $\beta$ to $P_i$}} \< \\
            \< \< \text{$P_i$ verifies $\tx{}{commit}$ and $\beta$}\\
            \forall i\in I_E: \< \sendmessageright{top = \text{send $\tx{}{commit}$}, bottom = \text{and $\gamma$ to $P_i$}} \< \\
            \< \< \text{$P_i$ verifies $\tx{}{commit}$ and $\out{i}{}$}\\
            \forall i\in I_S: \< \sendmessageright{top = \text{send $\tx{}{commit}$}, bottom = \text{$\beta$, and $\gamma$ to $P_i$}} \< \\ 
            \< \< \text{$P_i$ verifies $\tx{}{commit}$ and $\beta$}\\
        }
        \begin{bbrenv}{B}
            \begin{bbrbox}[name = \text{$\forall \texttt{vtx}\in V:$}, namepos = top left]
                \pseudocodeblock{
                \<\quad\quad\quad \text{Signing session for \texttt{vtx}, involving signers required by the design choice in Remark~\ref{rem:signing}} \< \\ \quad\quad\quad\enspace
                \< \text{After a successful session, everyone involved is capable of broadcasting a fully signed \texttt{vtx} onchain} \< \quad\quad\quad\enspace
                }
            \end{bbrbox}
        \end{bbrenv}
        \pseudocodeblock{\< \<}
                \begin{bbrenv}{C}
            \begin{bbrbox}[name = \text{$\forall i\in I_E\cup I_S:$}, namepos = top left]
                \pseudocodeblock{
                \quad\quad\quad\quad\quad\quad\quad\quad\quad\quad\quad\quad\quad\quad\quad\quad\quad\quad\quad\ \< \< \text{$P_i$ constructs $\tx{i}{forfeit}$}\ \ \\
            \< \< \text{with $\vtxo{i},\varepsilon_i$ as inputs}\\
            \< \< \text{and signs it, producing $\sigma^F_i$} \hspace{2em}\\
            \< \sendmessageleft{top = \text{send $\tx{i}{forfeit}$, $\sigma^F_i$}} \<
            }
            \end{bbrbox}
        \end{bbrenv}
        \pseudocodeblock{\< \<}
        \begin{bbrenv}{D}
            \begin{bbrbox}[name = \text{$\forall i\in I_B:$}, namepos = top left]
                \pseudocodeblock{
                \quad\quad\quad\quad\quad\quad\quad\quad\quad\quad\quad\quad\quad\quad\quad\quad\quad\quad\quad\ \< \< \text{$P_i$ signs $\tx{}{commit}$,}\\
                \< \< \text{producing $\sigma^C_i$} \quad\quad\quad\quad\quad\quad\quad\\
            \< \sendmessageleft{top = \text{send $\sigma^C_i$}} \<
                }
            \end{bbrbox}
        \end{bbrenv}
        \pseudocodeblock{
            \text{Broadcast $\tx{}{commit}$ with} \< \< \\
            \text{witnesses $\sigma_\operator^C$ and $\{\sigma^C_i\}_{i\in I_B}$} \< \< \quad\quad\quad\quad\quad\quad\quad\quad\quad\quad\quad\quad\quad\quad\quad\quad\quad\quad\quad\quad\quad\quad\quad\quad\quad\quad\quad\quad\quad\quad\quad\quad\quad\quad\quad
        }
    \end{bbrbox}
    \end{bbrenv}
    }
    \caption{High-level construction of a commitment transaction from requests indexed by $I_B$, $I_E$, $I_S$ for boarding, exit, and batch swap request, respectively. 
    }
    \label{op:op}
\end{figure*}





\section{Commitment transactions and VTXTs}
\label{app:tx}
As an example, let us look at the commitment transaction created by our experiment in §\ref{sec:implementation} for $n=2$. The commitment transaction can be found on Mutinynet at: \url{https://mutinynet.com/} by searching for the transaction ID

\noindent\text{\texttt{3fd6ddc344170c6ba90abb5a6ae7e4aaa1ca94c2}}\\\text{\texttt{78de1501aba34fb7738740d3}.} 

One can see one large input, as well as one large change output, together with a batch output of $10000$~sat and a connector output of $660$~sat. The connector output consists of two dust value ($330$~sat) outputs. The batch commits to two VTXOs, one of $8000$~sat and one of $2000$~sat. Below, we share the transactions in PSBT format for conciseness. These can easily be decoded into JSON via a website such as \url{https://chainquery.com/bitcoin-cli/decodepsbt}:
\begin{itemize}[leftmargin=*]
    \item VTXT Root, spending the batch output and having 2 P2TR outputs for the leaf transactions, and one anchor output: {\ttfamily\seqsplit{ cHNidP8BAJYDAAAAAdNAh3O3T6OrARXeeMKUyqGq5OdqWrsKqWsMF0TD3dY/AAAAAAD/////A0AfAAAAAAAAIlEg1uuTP5t4SvcKMZtq4x3b6J5dADh8hC4t+79wnvdQBkzQBwAAAAAAACJRIHaEDwnIyDSOyHFf/jLuenc2dJ1lRTh4LnlMw7wVjN5lAAAAAAAAAAAEUQJOcwAAAAAAARNAyBga5bp6IDKv35ReC9Nqn0lDvB6Km2uitFQC88HNBkRvrSBQotq8004h1E9+pNUVM9VO6mg8q74vRJDwvnHZewxjb3NpZ25lcgAAAAAhAlWozfpWDEPyugdJ95dGUrYTUFjZVaqhCezPp8QKtyUSDGNvc2lnbmVyAAAAASEChNDc7qODfpNp/P7KaCotiCExEw6RXcLIgJkm1HtwFO4MY29zaWduZXIAAAACIQKbjuy0+nlO9A5YxcUwsNP7V/Tt56lculCTrhoZXbbWugZleHBpcnkDUztAAAAAAA==}}
    \item Leaf 1, having the VTXO with value $8000$~sat as output, together with an anchor: {\ttfamily\seqsplit{ cHNidP8BAGsDAAAAAQdO2R6SLbqa1+KtulJ84a/0qnMC5t0XNkzacdHcdVSTAAAAAAD/////AkAfAAAAAAAAIlEg7N6TSBPPzHXoOwvg7b5YxseoaDNxayBBQEmnhRvXLroAAAAAAAAAAARRAk5zAAAAAAABE0AYwVd0VW4mvU6RT1pdv+Nz1wAZ8nXmbxmsSooq2dj0VbHvmGe3sC7YzyXLEPhJ9PpLYnflBFMSSog/e/Cqy85pDGNvc2lnbmVyAAAAACECVajN+lYMQ/K6B0n3l0ZSthNQWNlVqqEJ7M+nxAq3JRIMY29zaWduZXIAAAABIQKE0Nzuo4N+k2n8/spoKi2IITETDpFdwsiAmSbUe3AU7gZleHBpcnkDUztAAAAA}}
    \item Leaf 2, having the VTXO with value $2000$~sat as output, together with an anchor : {\ttfamily\seqsplit{ cHNidP8BAGsDAAAAAQdO2R6SLbqa1+KtulJ84a/0qnMC5t0XNkzacdHcdVSTAQAAAAD/////AtAHAAAAAAAAIlEgk4Bo7yqsimoNGhA5XrjZnY0HpJU6dY9coXomK4rSHi8AAAAAAAAAAARRAk5zAAAAAAABE0ASijnyUYyN6+A+ab+qpb7WUmz8ngKWaLpBqBiVcYUTCjuNwkERVqaOIVtsr/iETZ00kpEdae02Pvzwt+Jt7ttiDGNvc2lnbmVyAAAAACECVajN+lYMQ/K6B0n3l0ZSthNQWNlVqqEJ7M+nxAq3JRIMY29zaWduZXIAAAABIQKbjuy0+nlO9A5YxcUwsNP7V/Tt56lculCTrhoZXbbWugZleHBpcnkDUztAAAAA}}
\end{itemize}

\section{Detailed Protocol Specification}
\label{app:protocol}
We start off by listing some conventions and simplifying assumptions:
\begin{itemize}[leftmargin=*]
    \item We denote by $\batch(\vtxo{})$ the batch output containing $\vtxo{}$.
    \item A party $P$ can represents a single user or multiple users. We assume for simplicity that one entity communicates with the operator and can organise the users behind it to provide the necessary witnesses, and we assume that the view $\C{}{P}$ of this entity is well-defined (for example that the multiple users represented by $P$ all agree on the state of the blockchain).  
    \item For any VTXO that has multiple spending conditions where different parties can provide witnesses satisfying one of these conditions, we assume that all these parties are aware of the state of this VTXO, i.e., whether for a given Ark state $\Sigma=(C,F,S)$, this VTXO is in $C$, $F$, or $S$ (see Definition~\ref{def:ark-state}. This is relevant for request routines (Routine~\ref{alg:boarding}, \ref{alg:batch-swap}, \ref{alg:exit}, and \ref{alg:ark-tx}), to ensure that honest users never make a request spending an already spent VTXO. 
\end{itemize}
Furthermore, recall that the lists $\vtxos$ and $\vtxos'$ should be understood here as lists of tuples $(\val,\vtxoLockScript)$, such that $\operator$ can read off all spending paths from $\vtxoLockScript$. Of course, the locking script itself present in any transaction would only contain the tweaked public key committing to these spending paths. 

The protocol setup is minimal and presented in Routine~\ref{alg:setup}. It only requires the Operator $\operator$ to let users know how they can create a boarding address, which boils down to $\operator$ sharing its public key $\pk_\operator$ and some other Ark-specific parameters, for example on a website. Apart from that, a number of empty lists are instantiated for the operator to keep track of user requests and VTXOs. Finally, $\operator$ defines a batching policy: a set of rules that specifies when and how $\operator$ should construct a commitment transaction. This policy may also be published.

Users who want to board the Ark can now do so by sending a boarding request to the operator, as described in Routine~\ref{alg:boarding}. This boarding request specifies a list of VTXOs which the user in question wants to add to the Ark.

The Ark operator can now verify this boarding request, making sure that the UTXO specified in the request is not already spent, checking the locking script is of the correct form (in particular that there are no other spending paths), amongst others. This is specified in Routine~\ref{alg:verify-boarding}.

We can now also specify the other types of request users can make to the operator: batch swap requests (Routine~\ref{alg:batch-swap}), exit requests (Routine~\ref{alg:exit}), Ark transaction requests (Routine~\ref{alg:ark-tx}), and their respective verification procedures (Routines~\ref{alg:verify-swap}, \ref{alg:verify-exit} and \ref{alg:verify-ark-tx-request}). For the Ark transaction, we also write out the procedure to send an Ark transaction to the receiver (Routine~\ref{alg:send-ark-tx}), and what checks the receiver should subsequently perform (Routine~\ref{alg:verify-ark-tx}).

Out of all of the received requests, the operator can select a number of them to construct a commitment, as detailed in Routine~\ref{alg:commitment-tx}. The batch outputs of this transaction are specified by a batch template, which is a list of VTXTs (one for each batch output). How these VTXTs are constructed exactly is specified in the batching policy from Routine~\ref{alg:setup} and is up to the operator. It employs its own rules and optimisations to construct the VTXTs that go beyond this protocol specification. Of course, the VTXTs are constructed such that the obtained batch outputs satisfy Definition~\ref{def:batch}. Similarly, the connector outputs are specified by a connector template: a list of connector outputs, and a connector function $\gamma$, mapping each VTXO that is being batch swapped or exited from to a leaf of one of the VTXTs in the connector template.

The commitment produced by $\operator$ is still missing signatures. $\operator$ has to communicate with all the relevant parties, who will at least each individually check the commitment is valid (Routines~\ref{alg:verify-path}, \ref{alg:verify-connector} and \ref{alg:verify-commitment}). Furthermore, parties who sent a boarding request or batch swap request should make sure their requested VTXOs are included in the batch template, and parties who sent a batch swap or exit request should find the anchor output in the connector template associated with their old VTXOs that are being swapped or exited.

If every party convinced themselves that the commitment is correct, parties who sent a batch swap or exit request will produce forfeit transactions (Routine~\ref{alg:forfeit}) that are only valid once the commitment holding the specific, corresponding anchor output is available onchain, and multiple signing sessions (Routine~\ref{alg:musig}) will happen to produce witnesses for the commitment itself, and for the virtual transactions contained in the different batch outputs.

All the steps to go from user requests to a fully signed commitment are gathered in Routine~\ref{alg:process-requests}. At the end, the commitment can be broadcast, and the operator can update the various lists tracking VTXOs and open requests. This is specified in Routine~\ref{alg:update-lists}.

Finally, we specify the behaviour of honest parties, apart from how they should act when building commitment transactions. For the operator, this is done in Routine~\ref{alg:operator}. For example, the operator has to monitor the chain for exited VTXOs and possibly respond by broadcasting the necessary reset and/or forfeit transactions. Other honest parties also have some tasks to perform, however not continuously. In Routine~\ref{alg:party}, we describe what a party wanting to spend a VTXO should do: first try to spend collaboratively, and if at some point this has still failed, exit unilaterally (Routine~\ref{alg:unilateral-exit}) and spend the created UTXO. In Routine~\ref{alg:recipient}, we give the steps the recipient of an Ark transaction should take.

\clearpage

\begin{algorithm*}
    \captionsetup{name=Routine}
    \caption{Setup by Operator $\operator$ with public key $\pk_\operator$.}
    \label{alg:setup}
    \begin{algorithmic}[1]
    \State $\operator$ publishes its public key $\pk_\operator$, as well as the following Ark parameters:
    \begin{itemize}
        \item $t_b$: Timeout period for boarding transaction output.
        \item $t_e$: Minimum amount of blocks that should pass before a batch expires.
        \item $\tv$: Minimum delay in VTXO unilateral script path. 
        \item $t_r$: Time after which if the commitment transaction is still not included the operator will retry.
    \end{itemize}
    \State $\operator$ also instantiates the variables:
    \begin{itemize}
        \item $\toBoard:=[~]$, the list of all boarding requests that have been verified by $\operator$ and can be added to a commitment transaction.
        \item $\toBatchSwap:=[~]$, the list of all batch swap requests that have been verified by $\operator$ and can be added to a commitment transaction.
        \item $\toExit:=[~]$, the list of all exit requests that have been verified by $\operator$ and can be added to a commitment transaction.
        \item $\unconfirmed:=[~]$, the list of all VTXOs that are part of the batch output of a transaction which has not yet been confirmed onchain.
        \item $\confirmedVTXO:=[~]$, the list of all VTXOs that are part of the batch output of a commitment transaction which has been confirmed onchain.
        \item $\confirmedBatches:=[~]$, the list of all batch outputs that are part of a commitment transaction which has been confirmed onchain.
        \item $\expired:=[~]$, the list all VTXOs that are part of a batch output of which the expiry timelock passed.
        \item $\unconfirmedSpent:=[~]$, the list of all VTXOs that have been spent, either by an Ark transaction, or by a batch swap or exit in an unconfirmed commitment transaction.
        \item $\spent:=[~]$, the list of all pairs $(\vtxo{},\tx{}{})$ where $\vtxo{}$ could be spent onchain by a valid reset or forfeit transaction $\tx{}{}$.
        \item $\replaced:=[~]$, the list of all VTXOs  that are part of the batch output of a transaction which has been double-spent or invalidated.   
        \item $\preSpent:=[~]$, the list of all VTXOs and boarding outputs spent in a yet unconfirmed boarding, batch swap, or exit request.
        \item $\preConfirmed:=[~]$, the list of all VTXOs that are output from Ark transactions, but not yet batch swapped.
        \item $\unconfirmedToBoard:=[~]$, $\unconfirmedToBatchSwap:=[~]$, and $\unconfirmedToExit:=[~]$, the lists of all tuples of boarding, batch swap, and exit requests, respectively, with the fully signed, unconfirmed commitment transaction they have been processed in.
        \item $\confirmedToBoard:=[~]$, $\confirmedToBatchSwap:=[~]$, $\confirmedToExit:=[~]$, the lists of all tuples of boarding, batch swap, and exit requests, respectively, with the fully signed, confirmed commitment transaction they have been processed in.
    \end{itemize}
    \State Finally, $\operator$ specifies a \emph{batching policy}. We abstract this policy away into a \emph{batching black box} $\beta$, which deterministically maps a given state of the lists $\toBoard$, $\toBatchSwap$ and $\toExit$ of available, verified requests to lists $\boardings$, $\batchSwaps$ and $\exits$ of requests included in the commitment transaction, as well as templates $\batchTemplate$, $\connectorTemplate$ and $\gamma$ that specify how the VTXOs and anchor outputs will be arranged in batch and connector outputs of the commitment transaction. This policy may also include limits such as the maximum number of outputs per virtual transaction, or the maximum tree depth.
    
    Note that $\beta$ can also be a function of other external factors, such as the current time or onchain fees. For simplicity, we assume $\operator$ shares its lists $\toBoard$, $\toBatchSwap$ and $\toExit$ with $\beta$, listens to $\beta$ and starts building a commitment transaction as soon as $\beta$ outputs a tuple of the form $(\boardings,\batchSwaps,\exits,\batchTemplate,\connectorTemplate,\gamma)$. 
    \end{algorithmic}
\end{algorithm*}

\begin{algorithm*}
    \captionsetup{name=Routine}
    \caption{Boarding request to $\operator$ for VTXO list $\vtxos'$ with cosigner set $N$ by a party $P$.}
    \label{alg:boarding}
    \begin{algorithmic}[1]
    \Require A list of UTXOs $[\out{j}{}]_{j\in J}\subseteq\C{-k}{P}$ for which $\sum_{j\in J}\out{j}{}.\val\geq v:=\sum_{\vtxo{}'\in\vtxos'}\vtxo{}'.\val$, and such that for each $j\in J$, $P$ can provide a witness $w_j$ that allows $\out{j}{}$ to be spent for the boarding transaction. $P$ should also know the operator public key $\pk_\operator$. Finally, $P$ should have sets $S_C$ and $S_U$ of collaborative and unilateral script paths, as well as an internal public key $\pk_I$ that is a NUMS (Nothing Up My Sleeve) point (such as the one recommended in BIP 341 \cite{wuille2020bip341}), in order for the key path to be unspendable. $P$ should be able to provide witnesses for these script paths (collaborating with $\operator$ when needed).
    \State $P$ constructs the boarding transaction $\tx{P}{board}$, with $\tx{P}{board}.\inputs = [\out{j}{}]_{j\in J}$, $\tx{P}{board}.\witnesses = \qty[w_j]_{j\in J}$, and $\tx{P}{board}.\outputs=\qty[\out{P}{board}]$, where $\out{P}{board}.\val=v$ and \label{rtn:boarding-out}
    \begin{equation*}
         \out{P}{board}.\lockScript = \texttt{Taproot}\big(\texttt{False};S_C,S_U\big).
    \end{equation*}
    \State $P$ submits $\tx{P}{board}$ to $\btc$.
    \State Once $\tx{P}{board}\in\C{-k}{P}$, $P$ sends to $\operator$ the boarding request
    \begin{equation*}
        r = \qty(\text{``boarding: ''},P,N,\out{P}{board},\pk_I,S_C,S_U,\vtxos').
    \end{equation*}
    \end{algorithmic}
\end{algorithm*}

\begin{algorithm*}
    \captionsetup{name=Routine}
    \caption{\textbf{\texttt{verifyBoardingRequest($r$)}:} Verify boarding request $r=\qty(\text{``boarding: ''},P,N,\out{}{},pk_I,S_C,S_U,\vtxos')$.}
    \label{alg:verify-boarding}
    \begin{algorithmic}[1]
    \State $\operator$ checks that $\out{}{}\in\C{-k}{\operator}$ and that $\out{}{}\notin\preSpent$.\label{rtn:verify-1}
    \State $\operator$ checks that each script path in $S_C$ requires a signature from $\operator$, and that each script path in $S_U$ has a relative timelock of at least $t_b$.
    \State Given $\pk_I$, $S_C$, $S_U$, $\operator$ makes sure that the locking script of $\out{}{}$ indeed commits to the correct script paths and unspendable key path. \label{rtn:verify-3}
    \State $\operator$ checks that $\out{}{}.\val\geq\sum_{\vtxo{}'\in\vtxos'}\vtxo{}'.\val$. \label{rtn:verify-boarding-funds}
    \For{$\vtxo{}'\in\vtxos'$}
    \State $\operator$ checks that $\vtxo{}'$ satisfies Definition~\ref{def:vtxo}.
    \EndFor \label{rtn:verify-6}
    \If{all of the checks in Lines~\ref{rtn:verify-1} - \ref{rtn:verify-6} are successful}
    \State Add $(P,N,\out{}{},\vtxos')$ to $\toBoard$.
    \State Add $\out{}{}$ to $\preSpent$.
    \EndIf
    \end{algorithmic}
\end{algorithm*}

\begin{algorithm*}
    \captionsetup{name=Routine}
    \caption{Batch swap request to $\operator$ of a VTXO list $\vtxos$ into a VTXO list $\vtxos'$ with cosigner set $N$ by $P$.}
    \label{alg:batch-swap}
    \begin{algorithmic}[1]
    \Require VTXO lists $\vtxos\subseteq C\cup F$ and $\vtxos'$ such that for each $\vtxo{}\in\vtxos$, $\batch(\vtxo{})$ exists and is in $\C{-k}{P}$, $\vtxo{}$ can be spent collaboratively by $P$ and the operator, and such that $\sum_{\vtxo{}\in\vtxos}\vtxo{}.\val\geq \sum_{\vtxo{}'\in\vtxos'}\vtxo{}'.\val$. 
    \State $P$ sends to $\operator$ the batch swap request
    \begin{equation*}
        r=\qty(\text{``batch swap: ''}, P, N, \vtxos, \vtxos').
    \end{equation*}
    \end{algorithmic}
\end{algorithm*}

\begin{algorithm*}
    \captionsetup{name=Routine}
    \caption{Exit request to $\operator$ of a VTXO list $\vtxos$ into a UTXO list $\utxos$ by $P$.}
    \label{alg:exit}
    \begin{algorithmic}[1]
    \Require A VTXO list $\vtxos\subseteq C\cup F$ such that for each $\vtxo{}\in\vtxos$, $\batch(\vtxo{})$ exists and is in $\C{-k}{P}$, and $P$ is able to provide a witness to spend $\vtxo{}$ via a collaborative VTXO script path. Also, a UTXO list $\utxos$ such that $\sum_{\vtxo{}\in\vtxos}\vtxo{}.\val\geq \sum_{\out{}{}\in\utxos}\out{}{}.\val$. 
    \State $P$ sends to $\operator$ the exit request
    \begin{equation*}
        r=\qty(\text{``exit: ''}, P,\vtxos,\utxos).
    \end{equation*}
    \end{algorithmic}
\end{algorithm*}

\begin{algorithm*}
    \captionsetup{name=Routine}
    \caption{\textbf{\texttt{musig($\tx{}{}$,$N$)}:} Perform a \texttt{MuSig2} signing session with public key set $N$ for transaction $\tx{}{}$.}
    \label{alg:musig}
    \begin{algorithmic}[1]
    \State The signers in $N$ perform a \texttt{MuSig2} signing session, such that after successful completion every signer has the signature of $\tx{}{}$ under the aggregate public key $\bigoplus_{\pk\in N}\pk$.
    \end{algorithmic}
\end{algorithm*}

\begin{algorithm*}
    \captionsetup{name=Routine}
    \caption{Ark transaction request to $\operator$ of a VTXO list $\vtxos\subseteq C$ into a VTXO list $\vtxos'$ by a party $P$.}
    \label{alg:ark-tx}
    \begin{algorithmic}[1]
    \Require A VTXO list $\vtxos$ such that for each $\vtxo{}\in\vtxos$, $\batch(\vtxo{})$ exists and is in $\C{-k}{P}$, and $P$ is able to cooperate with the operator to provide a witness $w_{\vtxo{}}$ to spend $\vtxo{}$ via a collaborative VTXO script path $s_{C,\vtxo{}}$. Also, a VTXO list $\vtxos'$, such that $\sum_{\vtxo{}\in\vtxos}\vtxo{}.\val\geq\sum_{\vtxo{}'\in\vtxos'}\vtxo{}'.\val$.
    \State $P$ constructs for each $\vtxo{}\in\vtxos$ a reset transaction $\tx{P,\vtxo{}}{re}$, with $\tx{P,\vtxo{}}{re}.\inputs = \vtxos$, $\tx{P,\vtxo{}}{re}.\witnesses = \qty[*]_{\vtxo{}\in\vtxos}$, and $\tx{P,\vtxo{}}{re}.\outputs=\out{\vtxo{}}{re}$, with $\out{\vtxo{}}{re}.\val=\vtxo{}.\val$ and $\out{\vtxo{}}{re}.\lockScript=\texttt{Taproot}(\False;$ $s_{C,\vtxo{}},$ $\checkSig{pk_\operator}\wedge\absTimelock{T_{e,\vtxo{}}}))$, where $T_{e,\vtxo{}}$ is the block height at which the batch containing $\vtxo{}$ expires. 
    \State $P$ constructs an Ark transaction $\tx{P}{ark}$, with $\tx{P}{ark}.\inputs = \qty{\out{\vtxo{}}{re}}_{\vtxo{}\in\vtxos}$, $\tx{P}{ark}.\witnesses = \qty[*]_{\vtxo{}\in\vtxos}$, and $\tx{P}{ark}.\outputs=\vtxos'$.
    \State $P$ sends to $\operator$ the Ark transaction request
    \begin{equation*}
        r=\qty(\text{``Ark: ''}, P, \qty{\tx{P,\vtxo{}}{re}}_{\vtxo{}\in\vtxos}, \tx{P}{ark}, \vtxos, \vtxos').
    \end{equation*}
    \end{algorithmic}
\end{algorithm*}

\begin{algorithm*}
    \captionsetup{name=Routine}
    \caption{\textbf{\texttt{verifyBatchSwapRequest($r$)}:} Verify batch swap request $r=\qty(\text{``batch swap: ''},P,N,\vtxos,\vtxos')$.}
    \label{alg:verify-swap}
    \begin{algorithmic}[1]
    \For{$\vtxo{}\in\vtxos$} \label{rtn:verify-swap-1}
    \State $\operator$ checks that $\vtxo{}\in\confirmedVTXO\cup\preConfirmed$.
    \State $\operator$ checks that $\vtxo{}\notin\preSpent$.
    \EndFor \label{rtn:verify-swap-4}
    \For{$\vtxo{}'\in\vtxos'$}
    \State $\operator$ makes sure that $\vtxo{}'$ satisfies Definition~\ref{def:vtxo}.
    \EndFor
    \State $\operator$ checks that $\sum_{\vtxo{}\in\vtxos}\vtxo{}.\val\geq\sum_{\vtxo{}'\in\vtxos'}\vtxo{}'.\val$. \label{rtn:verify-swap-7}
    \If{all of the checks in Lines~\ref{rtn:verify-swap-1} - \ref{rtn:verify-swap-7} are successful}
    \State Add $(P,N,\vtxos,\vtxos')$ to $\toBatchSwap$.
    \State Add each $\vtxo{}\in\vtxos$ to $\preSpent$.
    \EndIf
    \end{algorithmic}
\end{algorithm*}

\begin{algorithm*}
    \captionsetup{name=Routine}
    \caption{\textbf{\texttt{verifyExitRequest($r$)}:} Verify exit request $r=\qty(\text{``exit: ''},P,\vtxos,\utxos)$.}
    \label{alg:verify-exit}
    \begin{algorithmic}[1]
    \For{$\vtxo{}\in\vtxos$} \label{rtn:verify-exit-1}
    \State $\operator$ checks that $\vtxo{}\in\confirmedVTXO\cup\preConfirmed$ 
    \State $\operator$ checks that $\vtxo{}\notin\preSpent$.
    \EndFor \label{rtn:verify-exit-4}
    \State $\operator$ checks that $\sum_{\vtxo{}\in\vtxos}\vtxo{}.\val\geq\sum_{\out{}{}\in\utxos}\out{}{}.\val$. \label{rtn:verify-exit-5}
    \If{all of the checks in Lines~\ref{rtn:verify-exit-1} - \ref{rtn:verify-exit-5} are successful}
    \State Add $(P,\varnothing,\vtxos,\utxos)$ to $\toExit$.
    \State Add each $\vtxo{}\in\vtxos$ to $\preSpent$.
    \EndIf
    \end{algorithmic}
\end{algorithm*}

\begin{algorithm*}
    \captionsetup{name=Routine}
    \caption{\textbf{\texttt{verifyArkTxRequest($r$)}:} Verify Ark transaction request $r=\qty(\text{``Ark: ''}, P, R, \tx{}{ark}, \vtxos, \vtxos')$.}
    \label{alg:verify-ark-tx-request}
    \begin{algorithmic}[1]
    \For{$\vtxo{}\in\vtxos$} \label{rtn:verify-ark-1}
    \State $\operator$ checks that $\vtxo{}\in\confirmedVTXO$ and that $\vtxo{}\notin\preSpent$.
    \State $\operator$ verifies that there is a $\tx{}{re}\in R$ that can claim $\vtxo{}$ at the corresponding batch's expiry.
    \State $\operator$ cooperates with $P$ to produce a valid witness $w_{\vtxo{}}^\texttt{ark}$ for each input of $\tx{}{ark}$.
    \EndFor \label{rtn:verify-ark-4}
    \For{$\vtxo{}'\in\vtxos'$}
    \State $\operator$ makes sure that $\vtxo{}'$ satisfies Definition~\ref{def:vtxo}.
    \EndFor
    \State $\operator$ checks that $\sum_{\vtxo{}\in\vtxos}\vtxo{}.\val\geq\sum_{\vtxo{}'\in\vtxos'}\vtxo{}'.\val$. \label{rtn:verify-ark-7}
    \If{all of the checks in Lines~\ref{rtn:verify-swap-1} - \ref{rtn:verify-ark-7} are successful}
    \State Add each $\vtxo{}'\in\vtxos'$ (as an output of $\tx{}{}$) to $\preConfirmed$.
    \State Cooperate with $P$ to create a witness $w_{\tx{}{}}^\texttt{re}$ for each $\tx{}{}\in R$.
    \State Add $(\vtxos,\tx{}{})$ to $\spent$, where $\tx{}{}\in R$ is the corresponding fully signed reset transaction.
    \EndIf
    \end{algorithmic}
\end{algorithm*}

\begin{algorithm*}
    \captionsetup{name=Routine}
    \caption{\textbf{\texttt{sendArkTx($Q$,$\tx{}{}$,$\vtxos'$,paths)}:} Send an Ark transaction $\tx{}{}$ to $Q$ by $P$.}
    \label{alg:send-ark-tx}
    \begin{algorithmic}[1]
    \Require We assume $\tx{}{}$ is a fully signed Ark transaction, and that $P$ has knowledge of the set $\vtxos'$ of actual locking scripts of $\tx{}{}$ that $Q$ should verify, and $\patht(\vtxo{})$ for each $\vtxo{}$ spent by $\tx{}{}$ (with the reset transaction $\tx{\vtxo{}}{re}$).
    \State $P$ sends to $Q$ the fully signed Ark transaction $\tx{}{}$, as well as the sets $\vtxos'$ of locking scripts of outputs of $\tx{}{}$ that $Q$ should verify and $\texttt{paths}:=\qty{\patht(\vtxo{}):\vtxo{}\in\tx{}{}.\inputs}$.
    \end{algorithmic}
\end{algorithm*}

\begin{algorithm*}
    \captionsetup{name=Routine}
    \caption{\textbf{\texttt{verifyArkTx($\tx{}{}$,$\vtxos'$,paths)}:} Verify a received Ark transaction $\tx{}{}$ with subset of output locking scripts $\vtxos'$ and VTXT paths $\texttt{paths}$ up to the transaction inputs by $Q$.}
    \label{alg:verify-ark-tx}
    \begin{algorithmic}[1]
    \State $Q$ checks that the locking scripts in $\vtxos'$ are actually committed to by locking scripts of outputs of $\tx{}{}$.
    \State $Q$ checks that if it were to submit all transactions in $\texttt{paths}$ to $\btc$, $\tx{}{}$ would be a valid Bitcoin transaction.
    \end{algorithmic}
\end{algorithm*}


\begin{algorithm*}
    \captionsetup{name=Routine}
    \caption{\textbf{\texttt{\commitmentTx(\boardings,
    \batchSwaps,\exits,\batchTemplate,\\ \connectorTemplate,$\gamma$)}:} Construct commitment transaction with batch outputs according to $\batchTemplate$, connectors according to $\connectorTemplate$ and $\gamma$, processing the boarding requests in \boardings, batch swap requests in \batchSwaps, and exit requests in \exits.}
    \label{alg:commitment-tx}
    \begin{algorithmic}[1]
    \Require A list of verified boarding requests $\boardings\subseteq\toBoard$, a list of verified batch swap requests $\batchSwaps\subseteq\toBatchSwap$, a list of verified exit requests $\exits\subseteq\toExit$, a list of UTXOs $[\out{j}{}]_{j\in J}\subseteq\C{-k}{\operator}$ for which
    \begin{align}
        \quad\quad\quad\sum_{j\in J}&\out{j}{}.\val\geq\sum_{(\cdot,\cdot,\cdot,\vtxos')\in\boardings}\sum_{\vtxo{}'\in\vtxos'}\vtxo{}.\val-\sum_{(\cdot,\cdot,\out{}{},\cdot)\in\boardings}\out{}{}.\val\nonumber\\&+\sum_{(\cdot,\cdot,\cdot,\vtxos')\in\batchSwaps}\sum_{\vtxo{}'\in\vtxos'}\vtxo{}'.\val+\sum_{(\cdot,\cdot,\vtxos,\cdot)\in\batchSwaps}|\vtxos|\cdot\varepsilon\nonumber\\&+\sum_{(\cdot,\cdot,\cdot,\utxos)\in\exits}\sum_{\out{}{}\in\utxos}\out{}{}.\val+\sum_{(\cdot,\cdot,\vtxos,\cdot)\in\exits}|\vtxos|\cdot\varepsilon,
    \end{align}
    and such that for each $j\in J$, $\out{j}{}$ can be spent by $\operator$ with a witness $w_j$ for the commitment transaction. Also, the collection of outputs of the leaves of all VTXTs in $\batchTemplate$ should exactly coincide with $\bigcup_{(\cdot,\cdot,\cdot,\vtxos')\in\boardings}\vtxos'\cup\bigcup_{(\cdot,\cdot,\cdot,\vtxos')\in\batchSwaps}\vtxos'$, and $\connectorTemplate$ should contain an anchor output for each VTXO in $\bigcup_{(\cdot,\cdot,\vtxos,\cdot)\in\batchSwaps}\vtxos\cup\bigcup_{(\cdot,\cdot,\vtxos,\cdot)\in\exits}\vtxos$. For ease of notation, given a VTXT $\vtxt$ in $\batchTemplate$, we define a function $\nu_\vtxt$ which makes explicit for every $\tx{}{}\in\vtxt$ the cosigners that should sign this $\tx{}{}$ (note that this is already present implicitly in $\batchTemplate$).
    \State Define $h_\operator$ as the current block height according to the operator.
    \State $\operator$ constructs the unsigned commitment transaction $\tx{}{commit}$ with $\tx{}{commit}.\inputs=[\out{j}{}]_{j\in J}\cup\bigcup_{(\cdot,\cdot,\out{}{},\cdot)\in\boardings}\out{}{}$, $\tx{}{commit}.\witnesses=\qty[*]_{j\in J}\cup\bigcup_{(\cdot,\cdot,\out{}{},\cdot)\in\boardings}[*]$ and $\tx{}{commit}.\outputs$ given by 
    \begin{equation*}
        \quad\quad\quad\bigcup_{\vtxt\in\batchTemplate}\out{\vtxt}{batch}\cup\bigcup_{\vtxt\in\connectorTemplate}\out{\vtxt}{connector}\cup\bigcup_{(\cdot,\cdot,\cdot,\utxos)\in\exits}\utxos,
    \end{equation*}
    where for each $\vtxt\in\batchTemplate$, $\out{\vtxt}{batch}.\val\geq\sum_{\tx{}{leaf}\in\leaves(\vtxt)}\tx{}{leaf}.\outputs[0].\val$, where 
    \begin{align*}
        \quad\quad\quad\out{\vtxt}{batch}.\lockScript=\texttt{Taproot}\big(&\False;\nonumber\\&\checkSig{\pk_\operator}\wedge\absTimelock{h_\operator+2k+t_e},\nonumber\\&\checkSig{\pk_\operator\oplus\bigoplus_{\pk\in \nu_\vtxt(\roott(\vtxt))}\pk}\big),
    \end{align*}
    and where for each $\vtxt\in\connectorTemplate$, $\out{\vtxt}{connector}.\val\geq|\leaves(\vtxt)|\cdot\varepsilon$ and 
    \begin{equation*}
        \out{\vtxt}{connector}.\lockScript=\texttt{Taproot}(\False;\checkSig{\pk_\operator}).
    \end{equation*}
    \State Let $\signerTemplate:=[~]$ 
    \For{$\vtxt=(V,A)\in\batchTemplate$}
    \State Let $V':=\qty{\nu_\vtxt(\tx{}{}):\tx{}{}\in V}$.
    \State Let $A':=\qty{(\nu_\vtxt(\tx{}{}),\nu_\vtxt(\tx{}{*}))\in V':(\tx{}{},\tx{}{*})\in A}$.
    \State Add $(V',A')$ to $\signerTemplate$. 
    \EndFor
    \State \Return $(\tx{}{commit},\signerTemplate)$. 
    \end{algorithmic}
\end{algorithm*}

\begin{algorithm*}
    \captionsetup{name=Routine}
    \caption{\textbf{\texttt{verifyPath($N$,$t$,$\tx{}{commit}$,$\vtxo{}$,\vtxt,\st)}:} Verify using signer tree $\st=(V',A')$ that VTXT $\vtxt=(V,A)$ correctly includes $\vtxo{}$ and enables unilateral exit controlled by $N$ with expiry time $t$, and is properly included in $\tx{}{commit}$.}
    \label{alg:verify-path}
    \begin{algorithmic}[1]
        \Require We assume there is a unique bijection $\phi_V:V\to V'$ and a unique bijection $\phi_A:A\to A'$ such that $\phi_A((v_1,v_2))=(\phi_V(v_1),\phi_V(v_2))$, otherwise, the routine trivially returns $\False$. The operator public key $\pk_\operator$ should be known.
        \If{$\exists\tx{}{}\in\leaves(\vtxt):\vtxo{}\in\tx{}{}.\outputs$.} \label{rtn:verify-vtx-begin}
        \State Write $(\tx{1}{},\ldots,\tx{\ell}{}):=\patht(\vtxo{})$.
        \Else
        \State \Return \False.
        \EndIf
        \State Write $\tx{0}{}:=\tx{}{commit}$.
        \State Check $\tx{\ell}{}.\outputs=[\vtxo{}]$.
        \For{$i=\ell,\ldots,1$}
        \State Check $\phi_V(\tx{i}{})\supseteq N$.
        \State Check $\exists\out{}{}\in\tx{i-1}{}.\outputs:\tx{i}{}.\inputs=[\out{}{}]$ and call it $\out{}{*}$.
        \State Check $\out{}{*}.\val\geq\sum_{\out{}{i}\in\tx{i}{}.\outputs}\out{}{i}.\val$ and:
        \begin{align*}
            \quad\quad\quad\out{}{*}.\lockScript=\texttt{Taproot}\big(&\False;\nonumber\\&\checkSig{\pk_\operator}\wedge\absTimelock{t},\nonumber\\&\checkSig{\pk_\operator\oplus\bigoplus_{\pk\in\phi_V(\tx{i}{})}\pk}\big).
        \end{align*}
        \EndFor \label{rtn:verify-vtx-end}
        \If{all of the checks in Lines~\ref{rtn:verify-vtx-begin} - \ref{rtn:verify-vtx-end} are successful}
        \State \Return $(\vtxt,\out{}{*})$.
        \EndIf
    \end{algorithmic}
\end{algorithm*}

\begin{algorithm*}
    \captionsetup{name=Routine}
    \caption{\textbf{\texttt{verifyConnector($\vtxo{}$,$\tx{}{commit}$,$\connectorTemplate$,$\gamma$)}:} Verify using signer tree $\st=(V',A')$ that VTXT $\vtxt=(V,A)$ correctly includes $\vtxo{}$ and enables unilateral exit controlled by $N$ with expiry time $t$, and is properly included in $\tx{}{commit}$.}
    \label{alg:verify-connector}
    \begin{algorithmic}[1]
        \If{$\exists\vtxt\in\connectorTemplate:\exists\tx{}{}\in\leaves(\vtxt):\gamma(\vtxo{})=\tx{}{}$} \label{rtn:verify-connector-begin}
        \State Denote this $\vtxt$ by $\vtxt^*$ and $\tx{}{}$ by $\tx{}{*}$.
        \State Write $(\tx{1}{},\ldots,\tx{\ell}{}):=\patht(\tx{}{*})$, i.e. the path of $\tx{}{*}$ in $\vtxt^*$.
        \State Write $\tx{0}{}:=\tx{}{commit}$.
        \For{$i=\ell,\ldots,1$}
        \State Check $\exists\out{}{}\in\tx{i-1}{}.\outputs:\tx{i}{}.\inputs=[\out{}{}]$ and call it $\out{}{*}$.
        \State Check $\out{}{*}.\val\geq\sum_{\out{}{i}\in\tx{i}{}.\outputs}\out{}{i}.\val$.
        \EndFor \label{rtn:verify-connector-end}
        \EndIf
        \If{all of the checks in Lines~\ref{rtn:verify-connector-begin} - \ref{rtn:verify-connector-end} are successful}
        \State \Return \True.
        \EndIf
    \end{algorithmic}
\end{algorithm*}

\begin{algorithm*}
    \captionsetup{name=Routine}
    \caption{\textbf{\texttt{\forfeitTx($\vtxo{}$,$\out{}{}$)}:} Forfeit transaction for $\vtxo{}$ using anchor output $\out{}{}$ by $P$.}
    \label{alg:forfeit}
    \begin{algorithmic}[1]
    \Require It is assumed $P$ is able to cooperate with $\operator$ to produce a witness $w_{\vtxo{}}$ to spend $\vtxo{}$ via a collaborative VTXO script path, and that $P$ knows the operator public key $\pk_\operator$.
    \State $P$ constructs the transaction $\tx{}{forfeit}$, with $\tx{}{forfeit}=[\vtxo{},\out{}{}]$, $\tx{}{forfeit}.\witnesses=[*,*]$, and $\tx{}{forfeit}.\outputs=[(\vtxo{}.\val+\out{}{}.\val,\checkSig{\pk_\operator})]$.  
    \State $P$ sends $\tx{}{forfeit}$ to $\operator$.
    \end{algorithmic} 
\end{algorithm*}

\begin{algorithm*}
    \captionsetup{name=Routine}
    \caption{Tasks a party $P$ performs whenever receiving an Ark transaction.}
    \label{alg:recipient}
    \begin{algorithmic}[1]
    \Require Assume $P$ is the recipient of a $\texttt{sendArkTx}(P,\tx{}{},\vtxos',\texttt{paths})$ operation, and that $P$ is able to provide a witness to a collaborative script path of all VTXOs $\vtxos_P'\subseteq\vtxos'$. 
    \If{$\texttt{verifyArkTx}(\tx{}{},\vtxos',\texttt{paths})$}
    \State Follow Routine~\ref{alg:party} for $\vtxos'_P$, with $P$ batch swapping or exiting collaboratively.
    \EndIf
    \end{algorithmic}
\end{algorithm*}

\begin{algorithm*}
    \captionsetup{name=Routine}
    \caption{\textbf{\texttt{\verifyCommitmentTx($\tx{}{commit}$,\batchTemplate,\signerTemplate,\connectorTemplate,$\gamma$)}:} Verify commitment transaction $\tx{}{commit}$ with batch template $\batchTemplate$, signer template $\signerTemplate$, connector template $\connectorTemplate$ and connector function $\gamma$ by a party $P$.}
    \label{alg:verify-commitment}
    \begin{algorithmic}[1]
        \Require $P$ is assumed to be involved in at least one request that is being processed by $\tx{}{commit}$, either as witness provider or cosigner. We respectively write by $\boardings$, $\batchSwaps$, $\exits$ the sets of boarding, batch swap, and exit requests included in $\tx{}{commit}$, for which $P$ either made the request, or has its public key in the cosigner set. $P$ should know the operator public key $\pk_\operator$ and the cosigner set $N$ of public keys. All checks below are performed by $P$.
        \State Check that $\sum_{\out{}{}\in\tx{}{commit}.\inputs}\out{}{}.\val\geq\sum_{\out{}{}\in\tx{}{commit}.\outputs}\out{}{}.\val$. \label{rtn:verify-commitment-begin}
        \State Let $h_P$ be the current block height according to $P$ and check that $h_\operator+2k+t_e\geq h_P+2k+t_e$.
        \State Let $R_P:=[~]$.
        \For{$(P,N,\out{}{},\vtxos')\in\boardings$} \label{rtn:verify-commitment-boarding1-start}
        \State Check that $\out{}{}\in\tx{}{commit}.\inputs$.
        \For{$\vtxo{}'\in\vtxos'$}
        \State Check $\exists\vtxt\in\batchTemplate:\exists\tx{}{}\in\leaves(\vtxt):\vtxo{}'\in\tx{}{}.\outputs$. 
        \State Denote $\vtxt$ by $\vtxt^*$ and the corresponding tree in $\signerTemplate$ by $\st^*$.
        \State Let $\texttt{res}:=\verifyPath(P,N,h_\operator+2k+t_e,\tx{}{commit},\vtxo{}',\vtxt^*,\st^*)$.
        \If{$\texttt{res}=(\vtxt^*,\out{}{*})$ for some $\out{}{*}\in\tx{}{commit}.\outputs$}
        \State Add $(\vtxt^*,\out{}{*})$ to $R_P$.
        \Else
        \State \Return \False
        \EndIf
        \EndFor
        \EndFor \label{rtn:verify-commitment-boarding1-end}
        \For{$(P,N,\vtxos,\vtxos')\in\batchSwaps$} \label{rtn:verify-commitment-swap1-start}
        \For{$\vtxo{}\in\vtxos$}
        \State $\verifyConnector(\vtxo{},\tx{}{commit},\connectorTemplate,\gamma)$.
        \EndFor
        \For{$\vtxo{}'\in\vtxos'$}
        \State Check $\exists\vtxt\in\batchTemplate:\exists\tx{}{}\in\leaves(\vtxt):\vtxo{}'\in\tx{}{}.\outputs$. 
        \State Let $\vtxt^*:=\vtxt$ and the corresponding tree in $\signerTemplate$ be $\st^*$.
        \State Let $\texttt{res}:=\verifyPath(P,N,h_\operator+2k+t_e,\tx{}{commit},\vtxo{},\vtxt^*,\st^*)$.
        \If{$\texttt{res}=(\vtxt^*,\out{}{*})$ for some $\out{}{*}\in\tx{}{commit}.\outputs$}
        \State Add $(\vtxt^*,\out{}{*})$ to $R_P$.
        \Else
        \State \Return \False
        \EndIf
        \EndFor
        \EndFor \label{rtn:verify-commitment-swap1-end}
        \For{$(P,N,\vtxos,\utxos)\in\exits$} \label{rtn:verify-commitment-exit1-start}
        \For{$\vtxo{}\in\vtxos$}
        \State $\verifyConnector(\vtxo{},\tx{}{commit},\connectorTemplate,\gamma)$.
        \EndFor
        \For{$\out{}{}\in\utxos$}
        \State Check that $\out{}{}\in\tx{}{commit}.\outputs$ and add $(\out{}{},\out{}{})$ to $R_P$.
        \EndFor
        \EndFor \label{rtn:verify-commitment-exit1-end}
        \State Check there are no $(x_1,y_1),(x_2,y_2)\in R_P$ for which $x_1=x_2$ but $y_1\neq y_2$. \label{rtn:verify-commitment-end}
        \State Check any additional requirements announced by $\operator$ (such as maximum number of outputs per virtual transaction, maximum tree depth, or the presence of always spendable anchor outputs $(\varepsilon,\True)$) by inspecting each VTXT in $\batchTemplate$.
        \If{all of the checks in Lines~\ref{rtn:verify-commitment-begin} - \ref{rtn:verify-commitment-end} are successful}
        \State \Return \True.
        \EndIf
    \end{algorithmic}
\end{algorithm*}

\begin{algorithm*}
    \captionsetup{name=Routine}
    \caption{\textbf{\texttt{unilateralExit($\vtxo{}$)}:} Redeem a VTXO onchain via a unilateral exit by a party $P$.}
    \label{alg:unilateral-exit}
    \begin{algorithmic}[1]
        \Require A VTXO $\vtxo{}$ such that $\batch(\vtxo{})$ exists and is in $\C{-k}{P}$. $P$ also needs to possess all, fully signed, virtual transactions in $\patht(\vtxo{})$ as defined in Definition~\ref{def:vtxt}.
        \State $P$ submits all transactions in $\patht(\vtxo{})\setminus\C{-k}{P}$ to $\btc$.
    \end{algorithmic}
\end{algorithm*}

\begin{algorithm*}
    \captionsetup{name=Routine}
    \caption{\textbf{\texttt{sweep($\out{}{}$)}:} Reclaim the remaining funds from an output $\out{}{}$.}
    \label{alg:sweep}
    \begin{algorithmic}[1]
    \Require We assume $\out{}{}$ has a Taproot script path of the form $\checkSig{\pk_\operator}\wedge\absTimelock{t}$ for some block height $t$.
    \If{$\out{}{}$ is spent by some $\tx{}{*}\in\C{}{\operator}$}
    \For{$\out{}{*}\in\tx{}{*}.\outputs$}
    \If{$\out{}{*}$ has a Taproot script path $\checkSig{\pk_\operator}\wedge\absTimelock{t}$ for some $t$}
    \State $\sweep(\out{}{*})$.
    \EndIf 
    \EndFor
    \Else
    \State Construct $\tx{}{}$ spending $\out{}{}$ with output $(v,\checkSig{\pk_\operator})$, where $v\leq \out{}{}.\val$.
    \State Submit $\tx{}{}$ to $\btc$.
    \EndIf
    \end{algorithmic}
\end{algorithm*}

\begin{algorithm*}
    \captionsetup{name=Routine}
    \caption{\textbf{\texttt{\submitCommitTx($\tx{}{commit}$,$\boardings$, $\batchSwaps$, $\exits$, $\old$, $\new$)}:} Update all the required request and VTXO lists after submitting a fully signed commitment transaction $\tx{}{commit}$ constructed from $\boardings$, $\batchSwaps$ and $\exits$, yielding spent and created VTXO sets $\old$ (paired with the corresponding forfeit transaction) and $\new$. All operations are done by $\operator$.}
    \label{alg:update-lists}
    \begin{algorithmic}[1]
    \State Submit $\tx{}{commit}$ to $\btc$, denote by $h$ the block height at which this happens according to $\operator$.
    \State Remove all VTXOs in $\old$ from $\confirmedVTXO$ and $\preConfirmed$ and add $(\tx{}{commit},\old)$ to $\unconfirmedSpent$.
    \State Add $(\tx{}{commit},\new)$ to $\unconfirmed$.
    \State Remove $\boardings$ from $\toBoard$ and add $(\tx{}{commit},\boardings)$ to $\unconfirmedToBoard$.
    \State Remove $\batchSwaps$ from $\toBatchSwap$ and add $(\tx{}{commit},\batchSwaps)$ to $\unconfirmedToBatchSwap$.
    \State Remove $\exits$ from $\toExit$ and add $(\tx{}{commit},\exits)$ to $\unconfirmedToExit$.    
    \If{$\tx{}{commit}\in\C{-k}{\operator}$}
    \State Remove $(\tx{}{commit},\old)$ from $\unconfirmedSpent$.
    \State Add all pairs of VTXOs and forfeit transactions in $\old$ to $\spent$.
    \State Remove $(\tx{}{commit},\new)$ from $\unconfirmed$.
    \State Add all VTXOs in $\new$ to $\confirmedVTXO$.
    \State Add all batch outputs of $\tx{}{commit}$ to $\confirmedBatches$.
    \State Remove $(\tx{}{commit},\boardings)$ from $\unconfirmedToBoard$.
    \State Remove $(\tx{}{commit},\batchSwaps)$ from $\unconfirmedToBatchSwap$.
    \State Remove $(\tx{}{commit},\exits)$ from $\unconfirmedToExit$.
    \State Add $(\tx{}{commit},\boardings)$ to $\confirmedToBoard$.
    \State Add $(\tx{}{commit},\batchSwaps)$ to $\confirmedToBatchSwap$.
    \State Add $(\tx{}{commit},\exits)$ to $\confirmedToExit$.
    \ElsIf{$\tx{}{commit}$ has never been in $\C{-k}{\operator}$ by block height $h+t_r$}
    \State Remove $(\tx{}{commit},\old)$ from $\unconfirmedSpent$.
    \State Remove $(\tx{}{commit},\new)$ from $\unconfirmed$.
    \State Remove $(\tx{}{commit},\boardings)$ from $\unconfirmedToBoard$.
    \State Remove $(\tx{}{commit},\batchSwaps)$ from $\unconfirmedToBatchSwap$.
    \State Remove $(\tx{}{commit},\exits)$ from $\unconfirmedToExit$.
    \State Add $\boardings$ to $\toBoard$.
    \State Add $\batchSwaps$ to $\toBatchSwap$.
    \State Add $\exits$ to $\toExit$.
    \EndIf
    \end{algorithmic}
\end{algorithm*}

\begin{algorithm*}
    \captionsetup{name=Routine}
    \caption{\textbf{\texttt{processRequests(\boardings,\batchSwaps,\exits,\batchTemplate,\\ \connectorTemplate,$\gamma$)}:} Construct the corresponding commitment transaction according to Routine~\ref{alg:commitment-tx} and finalise it by gathering the required signatures.}
    \label{alg:process-requests}
    \begin{algorithmic}[1]
    \Require It is assumed that the $\operator$ and all cosigners represented in the requests $\boardings$, $\batchSwaps$ and $\exits$ can communicate with each other. For ease of notation, we refer to the cosigners by their public keys, and we denote by $\No$ the set of all cosigners involved in all the requests processed here, and we denote by $\P$ the set of all requesters who made requests processed here.
    \State $\operator$ constructs $(\tx{}{commit},\signerTemplate):=\commitmentTx(\boardings,\batchSwaps,\exits,\batchTemplate,\allowbreak\connectorTemplate,\gamma)$.
    \State Let $\new:=\bigcup_{(\cdot,\cdot,\cdot,\vtxos')\in\boardings\cup\batchSwaps}\vtxos'$ be all to be created VTXOs.
    \State Let $\old:=[~]$.
    \For{$S\in\P\cup\No$}
    \State $\operator$ sends to $S$ the tuple $(\tx{}{commit}, \batchTemplate, \signerTemplate, \connectorTemplate, \gamma)$.
    \If{\nott $\verifyCommitmentTx(\tx{}{commit},\batchTemplate,\signerTemplate,\connectorTemplate,\gamma)$}
    \State $S$ aborts.
    \EndIf
    \EndFor
    \State Since $\verifyCommitmentTx(\ldots)$ succeeded for everyone, there is for each $\vtxt=(V,A)\in\batchTemplate$ a corresponding tree $\st\in\signerTemplate$ with bijections $\phi_V^\vtxt:V\to V'$ and $\phi_A^\vtxt:A\to A'$ such that $\phi_A^\vtxt((v_1,v_2))=(\phi_V^\vtxt(v_1),\phi_V^\vtxt(v_2))$.
    \For{$\vtxt:=(V,A)\in\batchTemplate$}
    \For{$\tx{}{}\in V$}
    \State $\musig(\tx{}{},\phi_V^\vtxt(\tx{}{})\cup\operator)$. \label{rtn:process-requests-board-musig}
    \EndFor
    \EndFor
    \For{$(P,N,\out{}{},\vtxos')\in\boardings$}
    \State $\musig(\tx{}{commit},P\cup\qty{\operator})$.
    \EndFor
    \For{$(P,N,\vtxos,\vtxos')\in\batchSwaps$}
    \For{$\vtxo{}\in\vtxos$}
    \State $\forfeitTx(\vtxo{},\gamma(\vtxo{}))$ yielding $\tx{\vtxo{}}{forfeit}$.
    \State $\operator$ checks that $\tx{\vtxo{}}{forfeit}$ has:
    \begin{itemize}[leftmargin=45pt]
        \item inputs spending $\vtxo{}$ and $\gamma(\vtxo{})$.
        \item one output sending all the funds to $\operator$.
    \end{itemize} 
    \State $P$ and $\operator$ cooperate to provide a collaborative witness to spend $\vtxo{}$. \label{rtn:process-requests-swap-musig}
    \State $\operator$ adds $(\vtxo,\tx{\vtxo{}}{forfeit})$ to $\old$, with $\tx{\vtxo{}}{forfeit}$ now fully signed.
    \EndFor
    \EndFor
    \For{$(P,\varnothing,\vtxos,\utxos)\in\exits$}
    \For{$\vtxo{}\in\vtxos$}
    \State $\forfeitTx(\vtxo{},\gamma(\vtxo{}))$ yielding $\tx{\vtxo{}}{forfeit}$.
    \State $\operator$ checks that $\tx{\vtxo{}}{forfeit}$ has:
    \begin{itemize}[leftmargin=45pt]
        \item inputs spending $\vtxo{}$ and $\gamma(\vtxo{})$.
        \item one output sending all the funds to $\operator$.
    \end{itemize}
    \State $P$ and $\operator$ cooperate to provide a collaborative witness to spend $\vtxo{}$. \label{rtn:process-requests-exit-musig}
    \State $\operator$ adds $(\vtxo,\tx{\vtxo{}}{forfeit})$ to $\old$, with $\tx{\vtxo{}}{forfeit}$ now fully signed.
    \State $\operator$ adds $(\vtxo,\tx{\vtxo{}}{forfeit})$ to $\old$.
    \EndFor
    \EndFor
    \State Once $\operator$ received a forfeit transaction for every $\vtxo{}$ spent in a batch swap or exit request, $\operator$ provides the witnesses $[w_j]_{j\in J}$ for the outputs $[\out{j}{}]_{j\in J}$.
    \State $\submitCommitTx(\tx{}{commit},\boardings,\batchSwaps,\exits,\old,\new)$.
    \end{algorithmic}
\end{algorithm*}


\begin{algorithm*}
    \captionsetup{name=Routine}
    \caption{Tasks an operator performs continuously.}
    \label{alg:operator}
    \begin{algorithmic}[1]
    \While{\True}
    \If{$\operator$ receives a request $r=(\text{``boarding: ''},P,N,\out{}{},\pk_I,S_C,S_U,\vtxos')$}
    \State $\texttt{verifyBoardingRequest}(r)$.
    \EndIf
    \If{$\operator$ receives a request $r=(\text{``Ark: ''},P,\tx{}{},\vtxos,\vtxos')$}
    \State $\texttt{verifyArkTxRequest}(r)$.
    \EndIf
    \If{$\operator$ receives a request $r=(\text{``batch swap: ''},P,N,\vtxos,\vtxos')$}
    \State $\texttt{verifyBatchSwapRequest}(r)$.
    \EndIf
    \If{$\operator$ receives a request $r=(\text{``exit: ''},P,N,\vtxos,\utxos)$}
    \State $\texttt{verifyExitRequest}(r)$.
    \EndIf
    \If{$\beta$ outputs $(\boardings,\batchSwaps,\exits,\batchTemplate,\connectorTemplate,\gamma)$}
    \State $\texttt{processRequests}(\boardings,\batchSwaps,\exits,\batchTemplate,\connectorTemplate,\gamma)$.
    \EndIf
    \For{$out{}{}\in\confirmedBatches$}
    \If{Current block height is past $\out{}{}$'s batch expiry height} \label{rtn:operator-expired}
    \State $\sweep(\out{}{})$.
    \EndIf
    \EndFor
    \For{$(\vtxo{},\tx{}{})\in\spent$} \label{rtn:operator-vtxo-spend-start}
    \If{$\vtxo{}\in\C{}{\operator}$}
    \State Submit $\tx{}{}$ to $\btc$. \label{rtn:operator-spent}
    \EndIf
    \EndFor \label{rtn:operator-vtxo-spend-end}
    \EndWhile
    \end{algorithmic}
\end{algorithm*}

\begin{algorithm*}
    \captionsetup{name=Routine}
    \caption{Tasks a party $P$ performs in order to spend a set of VTXO $\vtxos$ $P$ would want to spend collaboratively/unilaterally.}
    \label{alg:party}
    \begin{algorithmic}[1]
    \Require Assume $P$ is able to provide a witness to both a unilateral and collaborative script path of each VTXO in $\vtxos$, and that $P$ has knowledge of the fully signed virtual transactions on $\patht(\vtxo{})$ for all $\vtxo{}\in\vtxos$. Let $t$ be the batch expiry height. In the case of an Ark transaction, we assume that the transaction inputs can be accessed via the set $\texttt{paths}$ of virtual transaction paths, that the transaction outputs are in the set $\vtxos'$, and that $P$ has defined a set of cosigner $N$ and a function $\eta:N\to\P(\vtxos')$ which maps each cosigner to a set of VTXOs that that cosigner should be able to batch swap or exit with.
    \If{$P$'s current block height is smaller than $t-2k-1$}
    \If{$P$ wants to perform a batch swap or exit}
    \State Initiate the desired batch swap or exit request.
    \State Collaborate with $\operator$ in the corresponding $\texttt{processRequests}(\ldots)$.
    \ElsIf{$P$ wants to perform an Ark transaction $\tx{}{}$}
    \State Initiate the desired Ark transaction request $r$ with outputs $\vtxos'$ according to Routine~\ref{alg:ark-tx}.
    \If{$P$ obtains the fully signed Ark and reset transactions after a successful \texttt{verifyArkTxRequest(}$r$\texttt{)}}
    \For{$Q\in N$}
    \State $P$ executes $\texttt{sendArkTx}(Q,\tx{}{},\eta(Q),\texttt{paths})$.
    \EndFor
    \EndIf
    \EndIf
    \State Perform the desired batch swap or exit request, collaborating with $\operator$.
    \ElsIf{$P$'s current block height is $t-2k-1$ and spending $\vtxos$ failed}
    \For{$\vtxo{}\in\vtxos$}
    \State $P$ executes $\texttt{unilateralExit}(\vtxo{})$ and spends the newly created UTXO.
    \EndFor
    \EndIf
    \end{algorithmic}
\end{algorithm*}

\clearpage

\section{Formal Analysis}
\label{app:proofs}
In this section, we formally define and prove the security of the Ark protocol. As mentioned in §\ref{subsec:system-model}, we work with the Bitcoin backbone static model (fixed set of miners, static difficulty) presented in \cite{garay2024bitcoin} to model the Bitcoin blockchain. In this model, we proceed in discrete rounds. By the synchrony assumption, every message sent at round $r$ by an honest party will be received by all other honest party at round $r+1$. At round $r$, each Ark party $Q$ maintains its own view of the ledger $\C{}{Q}(r)$. We write $\C{-k}{Q}(r)$ for this local view with the last $k$ blocks removed. We write $\C{}{1}\preceq\C{}{2}$ if $\C{}{1}$ is a prefix of $\C{}{2}$. To reason about the confirmation of transactions and the possibility of reorgs, we treat Bitcoin as a protocol $\btc$ with depth parameter $k$ and wait parameter $u$, which ensures, with overwhelming probability (w.o.p.), 
\begin{itemize}[leftmargin=*]
    \item Persistence: For any two honest parties $Q$ and $Q'$, and any round $r$, we either have $\C{-k}{Q}(r)\preceq\C{-k}{Q'}(r)$ or $\C{-k}{Q}(r)\succeq\C{-k}{Q'}(r)$. We refer to $\C{-k}{Q}(r)$ as the \emph{stable} part of $Q$'s ledger, as this will no longer change from round $r$ onward.
    \item Liveness: For any transaction $\tx{}{}$ submitted to $\btc$ by an honest party at round $r$, we have $\tx{}{}\in\C{-k}{Q}(s)$ for every honest $Q$ by round $s=r+u$.
\end{itemize}
If we treat Bitcoin as above, one can show the following.
\begin{corollary}[Corollary 4.12 in \cite{garay2024bitcoin}]
\label{cor:k}
    W.o.p., $\btc$ ensures any $k$ consecutive blocks in the chain of an honest party contain at least one block produced by an honest miner, who includes \emph{all} valid transactions it has heard of by the round it produces the block.
\end{corollary}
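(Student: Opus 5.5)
The plan is to obtain this statement directly from the chain-quality property of the Bitcoin backbone in the static setting of \cite{garay2024bitcoin}, and not from the persistence and liveness properties listed above. I would first recall that, under the honest-majority assumption (honest hashing power exceeding adversarial hashing power by a constant factor), the backbone satisfies \emph{chain quality} with parameters $(\mu,\ell)$ w.o.p. This means that for any honest party $Q$, any round $r$, and any $\ell$ consecutive blocks of $\C{}{Q}(r)$, the fraction of blocks mined by honest miners is at least $\mu$. The known bound gives roughly $\mu \geq 1 - (1+\delta)\tfrac{t}{n-t}$ for adversarial power $t$ out of $n$. Since this is strictly positive under honest majority, $\mu\ell \geq 1$ holds for all sufficiently large $\ell$.

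The key step is to set $\ell := k$ and to check that the depth parameter $k$ of $\btc$ is at least the chain-quality window. The argument then runs as follows. A window of $k$ consecutive blocks containing zero honest blocks has honest fraction $0<\mu$, which contradicts chain quality. So every such window contains at least one honest block, except with the negligible probability at which chain quality fails.

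I expect this step to be the main obstacle, because of the choice of parameters. In \cite{garay2024bitcoin} the depth for persistence and the window length for chain quality are both $\Theta(\kappa)$ in the security parameter, but with different constants. My first attempt would be to define $k$ as the maximum of the two, that is, large enough for both persistence and $\mu k \geq 1$. Persistence and liveness are monotone in $k$, since increasing $k$ only weakens what is claimed about the stable part and the wait $u$ can be rescaled. The union bound over the polynomially many windows and rounds of the execution is already absorbed into the w.o.p.\ statement of chain quality.

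For the second clause I would invoke the specification of the honest miner in the backbone protocol. Its input-contribution function $I(\cdot)$ builds the block content from \emph{all} valid transactions in the miner's received input, that is, its mempool at that round, that do not conflict with the chain it extends. Combined with the synchronous network, this gives the claim. Any transaction broadcast by an honest party at round $r$ is known to every honest miner by round $r+1$. Hence any honest block among the $k$ guaranteed above that is produced after that round includes the transaction, provided it is still valid. This is exactly the property used later, for example to argue that a unilateral-exit path submitted sufficiently early is confirmed within $2k$ blocks of batch expiry.
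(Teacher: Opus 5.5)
Your proposal is correct and matches the paper, which gives no argument of its own but imports the statement from the backbone analysis of \cite{garay2024bitcoin}. There it is the chain-quality guarantee (honest fraction $\mu>0$) applied to windows of $k$ blocks, and the clause about including \emph{all} valid transactions comes, as you say, from the honest miner's input-contribution rule in the ledger instantiation; the paper avoids your worry about mismatched constants by treating $k$ as a single parameter for which the cited results hold, and your fallback of enlarging $k$ is also sound, since persistence and liveness only weaken as the depth grows and an honest block in every chain-quality window forces one in every longer window.
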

We translate the liveness property defined above to a statement in terms of rounds. To this end, we state and prove the following Lemma (adapted from Lemma 16 in \cite{aumayr2024blink}).
\begin{lemma}
\label{lem:live}
    If an honest party $Q$ submits a transaction $\tx{}{}$ to $\btc$ at round $r$, at which time the latest stable block in $Q$'s ledger is $B^Q$, a block $B^*$ including $\tx{}{}$ it will be in the stable part of the ledger of all honest parties at most $3k$ blocks away from $B^Q$, w.o.p.
\end{lemma}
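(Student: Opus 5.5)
The plan is to chain three facts about $\btc$ and then count blocks. At round $r$, $Q$'s latest stable block is $B^Q$, so the unstable suffix of $\C{}{Q}(r)$ holds at most $k$ blocks beyond $B^Q$. By the synchrony assumption, $\tx{}{}$ reaches every honest miner by round $r+1$. I would then bound three segments of the chain that follows $B^Q$.

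\emph{First segment (up to $k$ blocks).} Some of the (at most) $k$ blocks already in $Q$'s unstable suffix may have been produced before honest miners heard of $\tx{}{}$. I would not count on any of them including it. By persistence, every honest party's eventual chain extends $B^Q$, and I simply charge these at most $k$ blocks to this segment.

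\emph{Second segment (up to $k$ blocks).} Here I would apply Corollary~\ref{cor:k} to any window of $k$ consecutive blocks in an honest party's chain that are all produced after round $r+1$. Such a window contains a block by an honest miner. That miner has heard of $\tx{}{}$ and includes all valid transactions it knows, so it includes $\tx{}{}$. The exception is when a conflicting transaction or an earlier copy of $\tx{}{}$ is already present; the conflict case is excluded since $\tx{}{}$ is assumed valid, and an earlier copy is only better. Thus $B^*$ lies within $2k$ blocks after $B^Q$.

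\emph{Third segment (up to $k$ blocks).} A block is stable for a party once $k$ further blocks sit on top of it in that party's view. Liveness and chain growth guarantee these blocks appear, and persistence ensures all honest parties agree on the resulting stable prefix. Therefore $B^*$ enters the stable part of every honest ledger once that stable part reaches at most $3k$ blocks past $B^Q$. All the statements invoked hold w.o.p., and a union bound over polynomially many rounds preserves this.

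The main obstacle is the second segment. One must justify that the honest block guaranteed by Corollary~\ref{cor:k} was actually mined after round $r+1$, not merely at some position in the chain. It must also lie on the chain that eventually becomes stable, rather than on an orphaned fork. My first attempt would be to fix the final stable chain of an honest party $Q'$. I would observe that every block on it beyond $Q$'s view at round $r$ was created after round $r$, since blocks carry their creation round in the backbone model. I would then apply the corollary to the $k$ blocks on $Q'$'s chain immediately following the first $k$ blocks after $B^Q$. This gives exactly the claimed $k+k+k$ accounting.
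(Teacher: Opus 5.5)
Your proposal is correct and follows essentially the same $k+k+k$ accounting as the paper: the $k$ unstable blocks above $B^Q$, then at most $k$ further blocks before an honest block (via Corollary~\ref{cor:k}) includes $\tx{}{}$, then $k$ more blocks to make $B^*$ stable for every honest party. The subtlety you raise at the end is simply asserted in the paper's proof, and if you make it rigorous, note that backbone blocks carry no creation round; argue instead via synchrony that an honest block mined before round $r$ has reached $Q$ by round $r$ and so cannot sit above the tip of $Q$'s chain at that round (adversarial blocks, by contrast, may have been mined earlier and withheld).
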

\begin{proof}
    We know by Persistence that at round $r$, there are $k$ consecutive blocks on top of $B^Q$ in $\C{}{Q}$. By Liveness, any $\tx{}{}$ submitted by an honest party $Q$ to $\btc$ at round $r$ will be in the stable part of the ledger of any honest party by round $s=r+u$. We now know two things: (i) every honest miner will have heard of $\tx{}{}$ from round $r+1$ onward. Therefore, $\tx{}{}$ will be included in the first honest block mined from round $r+1$ onward. However, by Corollary~\ref{cor:k}, we know that w.o.p., there cannot be more than $k-1$ blocks between round $r$ and round $s$. Indeed, if there would, this would mean there would be a sequence of $k$ consecutive blocks that do not contain an honest block, as an honest block would have included $\tx{}{}$. Hence, there are at most $2k-1$ blocks between $B^Q$ and $B^*$. (ii) by round $s$, $B^*$ is in the stable part of every honest party's ledger, i.e., $B^*$ is at least $k$ blocks deep. Thus, combining (i) and (ii), $B^*$ is in the stable part of every honest party's ledger after at most $3k$ consecutive blocks from $B^Q$.
\end{proof}
This implies, if an honest party $Q$'s local view is currently at block height $h$, and $Q$ submits $\tx{}{}$ to $\btc$ at that height, then w.o.p., $\tx{}{}$ will be included in a block that is a most $3k$ blocks away from $Q$'s latest stable block (at block height $h-k$). That is, $\tx{}{}$ will be in the stable part of any honest party's ledger at latest at block height $h-k+3k=h+2k$. We use this realisation throughout the rest of the proofs and state it as a Corollary.
\begin{corollary}
\label{cor:2k}
    If an honest party submits $\tx{}{}$ to $\btc$ at block height $h$, $\tx{}{}$ will be in the stable part of every honest party's ledger at latest at block height $h+2k$, w.o.p.
\end{corollary}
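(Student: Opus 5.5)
The plan is to derive Corollary~\ref{cor:2k} directly from Lemma~\ref{lem:live}, translating its bound of ``$3k$ blocks from the latest stable block'' into an absolute block height. Fix an honest party $Q$ that submits $\tx{}{}$ to $\btc$ at some round $r$ when its local view $\C{}{Q}(r)$ has height $h$. By definition of the stable part, $\C{-k}{Q}(r)$ is $\C{}{Q}(r)$ with the last $k$ blocks removed. Its tip $B^Q$ therefore sits at height $h-k$. We may assume $h\ge k$; otherwise the stable part is the genesis block, the height offsets only shrink, and the same argument applies with $B^Q$ the genesis block.

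Next, we apply Lemma~\ref{lem:live} to $Q$, $\tx{}{}$ and $r$. W.o.p., there is a block $B^*$ containing $\tx{}{}$ that lies at most $3k$ blocks beyond $B^Q$. Moreover, $B^*$ is in the stable part of every honest party's ledger by the time that chain has grown $3k$ blocks past $B^Q$. Adding heights gives $(h-k)+3k = h+2k$. Hence, at the point where the relevant honest chain reaches height $h+2k$, $B^*$ (and thus $\tx{}{}$) is already stable.

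The step needing the most care is the sense in which ``block height $h+2k$'' refers to every honest party's view, since different honest parties may be at different heights in the same round. To handle this, I would invoke Persistence. The stable parts of honest parties are prefix-comparable, so once $B^*$ is stable for one honest party at a given height, any honest party whose stable ledger reaches that height contains the same block at the same position. I would also note that heights are monotone along honest views, so ``at latest at height $h+2k$'' is well defined per party. Finally, the only probabilistic events used are those in Lemma~\ref{lem:live}, namely Persistence, Liveness and Corollary~\ref{cor:k}, each holding w.o.p. The conclusion therefore also holds w.o.p., which completes the corollary.
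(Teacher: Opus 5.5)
Your proposal is correct and follows the paper's route: the latest stable block $B^Q$ sits at height $h-k$, and adding the $3k$-block bound of Lemma~\ref{lem:live} gives $h+2k$; your Persistence and monotonicity remarks only spell out what the paper leaves implicit. One small caveat concerns your $h<k$ aside, a case the paper does not treat at all: applying Lemma~\ref{lem:live} as stated with $B^Q$ the genesis block gives only height $3k>h+2k$, so the offsets grow rather than shrink, and to get $h+2k$ you must rerun the lemma's proof with $h$ instead of $k$ blocks above $B^Q$.
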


We can now introduce the notion of an \emph{Ark state}, which represents all VTXOs in an Ark. Next, we define \emph{Ark state transitions} as sets of \emph{Ark actions}. These notions will help us to formally define the security properties stated in §\ref{subsec:properties}.
\begin{definition}[Ark State]
\label{def:ark-state}
    The \emph{Ark state} of an Ark run by $\operator$ is defined as a tuple $\arkstate:=(C,F,S)$, where $C$ is the set containing each VTXO, that is part of a batch confirmed onchain, for which there is no valid transaction spending that VTXO collaboratively in case it would be present onchain, $S$ is the set of VTXOs for which $\operator$ holds a fully signed reset transaction, or a fully signed forfeit transaction spending an anchor output confirmed onchain, and $F$ is the set of VTXOs that are outputs of Ark transactions, but not in $S$ (viz., unspent outputs). 
\end{definition}
\begin{definition}[Ark Action]
    An \emph{Ark action} is a concise way to denote an Ark protocol action. It is a tuple $\alpha=(P,N,\operator,v_{in},u_{in},v_{out},u_{out},a)$, where $P$ is the party providing the required witnesses to spend the VTXOs in $v_{in}$ and UTXOs in $u_{in}$, in order to create the VTXOs in $v_{out}$ and UTXOs in $u_{out}$. $N$ is the cosigner set involved in committing any VTXO to a batch. Finally, $a$ is a binary variable equal to 1 if and only if $\alpha$ is an Ark transaction. 
\end{definition}
\begin{definition}[Ark State Transition]
    An \emph{Ark state transition} $\tau$ is a set of Ark actions that changes the Ark state from $(C,F,S)$ to $(C',F',S')$. We may also write $(C,F,S)\stackrel{\tau}{\to}(C',F',S')$. 
\end{definition}

Ark state transitions include commitment transactions, unilateral exits, Ark transactions, and batch sweeps. For all but Ark transactions, these transitions only occur if the relevant transactions are confirmed onchain. While the latter three involve a single action each, a commitment transaction may group multiple actions. Thus, we may represent a commitment transaction $\tx{}{}$ as a set $\qty{\alpha_i}_{i\in I}$ for some index set $I$. Conversely, a set of actions $\qty{\alpha_i}_{i\in I}$ with $\alpha_i=(P_i,N_i,v_{in,i},u_{in,i},v_{out,i},u_{out,i},0)$ defines a commitment transaction $\tx{}{}$ with inputs $\bigcup_{i\in I}u_{in,i}$, outputs $\bigcup_{i\in I}u_{out,i}$, batch outputs holding $\bigcup_{i\in I}v_{out,i}$, requiring signatures from $\bigcup_{i\in I}N_i$, and connector outputs for $\bigcup_{i\in I}v_{in,i}$. Witnesses come from the corresponding parties in $\bigcup_{i\in I}P_i$. 

As claimed in §\ref{subsec:properties}, the Ark protocol with operator $\operator$, $\ark{\operator}$, gives a number of security guarantees. Informally, VTXO Security gives \emph{static} guarantees that VTXOs cannot be spent by unauthorised parties, and that authorised parties can always spend VTXOs up to a certain time. This is fundamental in showing User Balance Security. Ark Atomicity ensures that for any honest intent to change the state, the state either changes exactly according to the intent, or not at all. In both cases, honest users can rely on the guarantees of VTXO Security. Intuitively, Ark Atomicity ensures that transacting VTXOs works as expected. Of most importance to an honest operator is that it does not lose funds. We phrase this as Operator Balance Security. Finally, under the extra assumptions \ref{asm:1}-\ref{asm:4}, we show that users who opted into the fast finality mechanism enjoy Fast Finality Balance Security; extending balance security guarantees to VTXOs that have not been committed to by a confirmed onchain batch. Let us now precisely state and prove each security property.

\begin{theorem}[VTXO Security]
\label{thm:vtxo}
    The protocol $\ark{\operator}$ is
    \begin{enumerate}[leftmargin=16pt]
        \item [(i)] \emph{VTXO-safe}, i.e., for each Ark state $\arkstate=(C,F,S)$, for each $\vtxo{}\in C$ confirmed in a batch with expiry $T$, and for each Ark state transition $\tau$ before $T$ such that $(C,F,S)\stackrel{\tau}{\to}(C',F',S')$, if the cosigner set $N$ from the action creating $\vtxo{}$ is honest, we have that if $\vtxo{}\in C$ and $\vtxo{}\notin C'$, there must be an action $\alpha=(P,N',\operator,v_{in},u_{in},v_{out},u_{out},a)\in\tau$ such that $\alpha$ is a unilateral exit of $\vtxo{}$, or $P$ can construct with $\operator$ a valid witness for $\vtxo{}$. 
        \item [(ii)] \emph{VTXO-live}, i.e., for each Ark state $\arkstate=(C,F,S)$, for each $\vtxo{}\in C$ in a confirmed batch with expiry $T$, if the cosigner set $N$ from the action creating $\vtxo{}$ is honest, $\vtxo{}$ can, at latest at block height $T-2k-1$, be turned w.o.p. into a UTXO with the same spending conditions by anyone who can submit the fully signed transactions in $\patht(\vtxo{})$.
    \end{enumerate}
\end{theorem}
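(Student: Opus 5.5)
We prove the two parts separately. Both reduce to two facts: the script structure of batch outputs (Definition~\ref{def:batch} and the locking script built in Routine~\ref{alg:commitment-tx}), and EUF-CMA security of MuSig2 together with honesty of the cosigner set $N$.

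\emph{Safety (i).} Fix $\vtxo{}\in C$ in a confirmed batch with expiry $T$, and a transition $\tau$ before $T$ with $\vtxo{}\notin C'$. By Definition~\ref{def:ark-state}, a VTXO leaves $C$ only in two ways. Either (a) some transaction spending it (or an ancestor output on $\patht(\vtxo{})$) appears onchain, or (b) there now exists a valid transaction spending $\vtxo{}$ collaboratively, i.e.\ it moved into $S$ or $F$ via a reset, forfeit, or Ark transaction.

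In case (b), the transaction carries a collaborative witness for $\vtxo{}$. By Definition~\ref{def:vtxo} such a witness needs $\operator$'s signature. So the party $P$ of the corresponding action must have built it together with $\operator$, which is exactly the second disjunct.

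In case (a), we proceed by induction along $\patht(\vtxo{})=(\tx{1}{},\ldots,\tx{\ell}{})$. Each output $\out{}{*}$ checked in $\verifyPath$ has only two script paths: the sweep path $\checkSig{\pk_\operator}\wedge\absTimelock{T}$, which is unusable before $T$, and the aggregate-key path $\checkSig{\pk_\operator\oplus\bigoplus_{\pk\in\phi_V(\tx{i}{})}\pk}$. Since $\phi_V(\tx{i}{})\supseteq N$ and $N$ is honest, any valid signature under this aggregate key must be on a message some honest member signed, except with negligible probability by EUF-CMA/MuSig2 unforgeability. Honest cosigners sign only the single presigned $\tx{i}{}$ (Routine~\ref{alg:process-requests}). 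Hence the only onchain spends before $T$ are the VTXT transactions themselves. Once $\vtxo{}$ is onchain, leaving $C$ means it is spent via its unilateral path, which is the unilateral exit action, or via a collaborative path, which again needs $\operator$'s signature and falls under the second disjunct.

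\emph{Liveness (ii).} Suppose a party holding the fully signed $\patht(\vtxo{})$ starts submitting by block height $T-2k-1$. By the safety argument, no conflicting spend of any output on the path can exist before $T$. So every $\tx{i}{}$ is valid whenever its parent is onchain, given that witnesses are fixed presigned signatures and all values are checked in $\verifyPath$.

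We then apply Corollary~\ref{cor:2k} to the whole path at once. The obstacle is that a child cannot be mined before its parent. We would handle this by noting that Bitcoin accepts packages of dependent transactions in one block, and the honest block guaranteed by Corollary~\ref{cor:k} includes \emph{all} valid transactions it has heard of, which includes chains whose ancestors are included earlier in the same block. This places the entire path in the stable ledger by height $T-1<T$, before the sweep path becomes valid. The resulting UTXO is the leaf output $\vtxo{}$ itself, so it keeps the same spending conditions.

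The main risk is justifying this simultaneous inclusion, as opposed to a sequential one costing $2k$ blocks per level. If it cannot be justified from the ledger model, we would weaken the bound to $T-2k\ell-1$ or argue with the $\Delta$ per-round submission.
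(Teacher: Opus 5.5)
Your proposal is correct and takes essentially the paper's route. Safety uses the fact that, before expiry, the honest cosigners in $N$, who are required on every output of $\patht(\vtxo{})$ by \texttt{verifyPath}, approve only the presigned VTXT spends, so leaving $C$ requires a unilateral exit or a collaborative witness built with $\operator$; liveness applies Corollary~\ref{cor:2k} to the whole path submitted at height $T-2k-1$. The paper takes simultaneous inclusion of the dependent chain as given by the backbone model (honest miners include all valid transactions they have heard of, with effectively unbounded blocks) and defers the relay-policy caveat to a remark after the proof, so your fallback bound $T-2k\ell-1$ is unnecessary; your explicit EUF-CMA/MuSig2 unforgeability step sharpens the paper's informal ``an honest cosigner will not sign'' argument.
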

\begin{proof}
    Consider an arbitrary Ark state $\arkstate=(C,F,S)$, and let $\vtxo{}\in C$ be a VTXO in a confirmed batch with expiry $T$ (we assume that $C\neq\varnothing$). We assume that $\vtxo{}$ has been created by an Ark action with honest cosigner set $N$. Routine~\ref{alg:verify-path} guarantees that $N$ is among the signers required to spend the batch output before expiry, and every output on the path of virtual transactions to $\vtxo{}$. Since $N$ is honest, it will not agree to sign off on another transaction that would render $\vtxo{}$ invalid. This realisation implies the following:
    \begin{enumerate}[leftmargin=*]
        \item [(i)] For each state transition $\tau$ such that $(C,F,S)\stackrel{\tau}{\to}(C',F',S')$ before $T$, assume that $\vtxo{}\in C$ and $\vtxo{}\notin C'$. Since the batch containing $\vtxo{}$ did not yet expire, the only way $\vtxo{}$ would not be a part of $C'$ any more, is if $\vtxo{}$ would be turned into a UTXO via a unilateral exit, or if there exists a valid (virtual) transaction spending $\vtxo{}$ collaboratively. This transaction can only exist if it includes a valid witness for $\vtxo{}$'s collaborative locking scripts. $P$ should thus be able to provide such a witness, cooperating with $\operator$.
        \item [(ii)] Before batch expiry, the only way to spend the batch is with presigned transactions from the VTXT specifying the batch. Anyone in possession of the virtual transactions on $\patht(\vtxo{})$ can thus submit these transactions, and be sure that there can be no double-spend. This as long as all transactions in the path get confirmed before block height $T$. By Corollary~\ref{cor:2k}, this happens w.o.p. within at most $2k$ blocks. In other words, as long as all transactions in $\patht(\vtxo{})$ get submitted to $\btc$ by a party at latest at block height $T-2k-1$ in that party's local view, $\vtxo{}$ will be available onchain as a confirmed UTXO. 
    \end{enumerate}
\end{proof}
\begin{remark}
    Note how our formal model deviates from reality in the previous proof. For one, we assume that all transactions in $\patht(\vtxo{}{})$ are submitted at once, and can all be included at once in a block. While this is true if the party submitting the transactions would talk directly to a miner, it is not true in practice with Bitcoin's current P2P relay policy. Bitcoin Core at the time of writing only supports packages of one parent and one child transaction being relayed through the network. Hence, a chain of transactions like $\patht(\vtxo{})$, which relies on CPFP (child-pays-for-parent) in order to bump the fees of the parent transactions to guarantee inclusion onchain, may not get relayed reliably through the network. One would have to first make sure that parent transaction are successfully broadcast and included in the mempools of miners, and then broadcast the children (one-by-one). At the time of writing, work is ongoing \cite{clustermempool} to introduce \emph{full package relay} into Bitcoin, which would remove this inconvenience as it would allow larger packages of transactions to be relayed together.

    Second, the Bitcoin backbone model effectively assumes arbitrarily large blocks, such that all transactions submitted (and available in an honest miner's mempool) will be included by that honest miner in one single block. This assumption may no longer be realistic in case of a bank run scenario (see §\ref{sec:discussion}), where congestion may occur by the sheer number of transactions submitted.
\end{remark}
\begin{definition}[Ark Balance]
    For a given party $P$, for a given Ark state $\Sigma=(C,F,S)$ and a given chain view $\C{-k}{P}$, we define the \emph{Ark balance $b_P^{\Sigma,\C{}{}}$} of $P$ at $\Sigma$ as the sum $b_P^{\Sigma,\C{}{}}:=\sum_{\vtxo{}\in V_P^\Sigma}\vtxo{}.\val+\sum_{\out{}{}\in R_P^{\C{}{}}}\out{}{}.\val$, where $V_P$ is the set of all $\vtxo{}\in C$ which are more than $2k$ blocks from expiry, for which \emph{only} $P$ can provide a witness for a unilateral script path of $\vtxo{}$ and possesses all fully signed transactions in $\patht(\vtxo{})$, and where $R_P^{\C{}{}}$ is the set of all unspent boarding outputs, i.e., all $\out{}{}\in\C{-k}{P}$ with $\out{}{}.\lockScript=\texttt{Taproot}(\False,S_C,S_U)$ where each script path in $S_C$ requires $\operator$'s signature and where \emph{only} $P$ can provide a witness for the script paths in $S_U$.
\end{definition}
\begin{theorem}[User Balance Security]
\label{thm:user}
    The Ark protocol $\ark{\operator}$ with batch expiry time $t_e$ guarantees that for each honest party $P$, each Ark state $\Sigma=(C,F,S)$, and each chain view $\C{-k}{P}$, $P$ can claim its Ark balance $b_P^{\Sigma,\C{}{}}$ onchain, subtracting mining fees, w.o.p.
\end{theorem}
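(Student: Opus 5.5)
We prove this by splitting the Ark balance $b_P^{\Sigma,\C{}{}}$ into its two summands and showing that each component can be claimed independently. The independence holds because the components are disjoint outputs: the VTXOs in $V_P^\Sigma$ sit in batches, while the boarding outputs in $R_P^{\C{}{}}$ are UTXOs. So it suffices to show that each $\vtxo{}\in V_P^\Sigma$ and each $\out{}{}\in R_P^{\C{}{}}$ can be turned into a UTXO spendable by $P$ alone, and then spent to an address of $P$. The total claimed is then $b_P^{\Sigma,\C{}{}}$ minus the fees of these transactions, and we take a union bound over the polynomially many w.o.p.\ events.

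\textbf{VTXO part.} Fix $\vtxo{}\in V_P^\Sigma$. By definition $\vtxo{}\in C$, it is more than $2k$ blocks from its batch expiry $T$, and $P$ holds all fully signed transactions of $\patht(\vtxo{})$.
\begin{enumerate}[leftmargin=*]
  \item We first argue that the cosigner set along $\patht(\vtxo{})$ can be taken to contain an honest party, namely $P$ itself. Following Remark~\ref{rem:signing} and Routine~\ref{alg:verify-path}, the party for whom the VTXO was created is in $N$ (or $P$ checked this in \verifyCommitmentTx). Hence the honesty hypothesis of Theorem~\ref{thm:vtxo} holds, with honesty supplied by $P$.
  \item Since $P$ is honest and in the current state, we may assume it is at height at most $T-2k-1$ when it acts. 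It then runs Routine~\ref{alg:unilateral-exit}, and Theorem~\ref{thm:vtxo}(ii) gives that $\vtxo{}$ becomes a confirmed UTXO with the same spending conditions by height $T-1$, w.o.p.
  \item That UTXO carries the collaborative path, which requires $\operator$, and the unilateral path with relative timelock $\tv$. Only $P$ can satisfy the unilateral path. So after $\tv$ blocks $P$ submits a spend to itself, and Corollary~\ref{cor:2k} confirms it within $2k$ blocks.
  \item The danger is that $\operator$ uses the collaborative path first. We must exclude this by showing that no valid collaborative transaction, reset or forfeit, exists for $\vtxo{}$. This follows from $\vtxo{}\in C$: the definition of $C$ says no valid collaborative spend exists. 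Theorem~\ref{thm:vtxo}(i) ensures this cannot change before $T$ without a witness that honest $P$ would have co-produced. By Routines~\ref{alg:party} and~\ref{alg:forfeit}, honest $P$ only co-signs forfeit or reset transactions for VTXOs it is actually spending, and those VTXOs are then no longer in $C$.
\end{enumerate}

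\textbf{Boarding part.} Fix $\out{}{}\in R_P^{\C{}{}}$. It is stable in $\C{-k}{P}$. Its $S_C$ paths all need $\operator$'s signature, and only $P$ can satisfy the $S_U$ paths, which have relative timelock at least $t_b$. There are two cases:
\begin{enumerate}[leftmargin=*]
  \item If $\out{}{}$ was consumed by a confirmed commitment, then $P$ co-signed that commitment in Routine~\ref{alg:process-requests}. It did so only after \verifyCommitmentTx\ succeeded, so the corresponding VTXOs are in $C$ and belong to $V_P$. This is really a state transition that moves value from $R_P$ to $V_P$, and that value is then handled by the VTXO part.
  \item Otherwise $\out{}{}$ is unspent. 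Any collaborative spend needs $P$'s signature, which honest $P$ gives only inside a commitment it verified. So after $t_b$ blocks $P$ spends $\out{}{}$ via $S_U$, and Corollary~\ref{cor:2k} confirms the spend.
\end{enumerate}

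\textbf{Main obstacle.} We expect the main obstacle to be step 4 of the VTXO part: ruling out that $\operator$ holds a valid reset or forfeit transaction for a VTXO counted in $V_P$. A subtler variant of this is a forfeit transaction whose commitment never confirmed, yet the anchor later appears. We would handle it by noting that such VTXOs lie in $S$ rather than $C$ by Definition~\ref{def:ark-state}, and so are not counted. If this turns out to need the commitment's atomicity, we would invoke the Ark Atomicity theorem, Theorem~\ref{thm:atomicity}, for forfeit transactions.
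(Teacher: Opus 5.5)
Your proof is correct and follows the same route as the paper. You use Theorem~\ref{thm:vtxo}(ii) to unroll each VTXO in $V_P$, then the $\tv$-delayed unilateral path to claim it (exclusivity coming from Theorem~\ref{thm:vtxo}(i) and from the collaborative path needing $P$'s cosignature), and the $S_U$ paths for boarding outputs, while spelling out what the paper leaves implicit, notably that $P$ itself supplies the honest cosigner. One small correction to your ``main obstacle'' remark: under Definition~\ref{def:ark-state}, membership in $S$ requires the anchor to be confirmed, so a VTXO whose forfeit spends an anchor of a not-yet-confirmed commitment is still in $C$, not $S$; that edge case is resolved by atomicity (Theorem~\ref{thm:atomicity}, $P$ then holds the new VTXO), not by exclusion from $V_P$, and the paper's proof does not treat it at all.
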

\begin{proof}
    By part (ii) of Theorem~\ref{thm:vtxo}, $P$ can turn each $\vtxo{}\in V_P$ into a UTXO confirmed onchain w.o.p., as $P$ is honest and knows $\patht(\vtxo{})$. Of course, $P$ has to pay the corresponding mining fees to have all these virtual transactions confirmed onchain. Since $P$ also knows the witness of a unilateral script path of each of these newly created UTXOs, as well as the witness of a unilateral script path of each boarding output, $P$, and only $P$ by part (i) of Theorem~\ref{thm:vtxo}, can spend all these UTXOs (paying mining fees doing so) and thus claim the funds locked by them. Moreover, $P$ can retrieve the funds locked in boarding outputs $\out{}{}\in R_P^{\C{}{}}$ by spending these outputs via one their unilateral script paths (paying mining fees), due to validity of Bitcoin as only $P$ knows corresponding valid witnesses. In conclusion, $P$ can claim its Ark balance $b_P^{\Sigma,\C{}{}}$ onchain, up to mining fees, w.o.p.
\end{proof}
\begin{remark}
    Note that for the Ark balance to be non-trivial, and for an honest party to actually have retrievable funds, the Ark's batch expiry time should be larger than $2k$.
\end{remark}
\begin{theorem}[Ark Atomicity]
\label{thm:atomicity}
For any Ark state $\Sigma=(C,F,S)$, and for any Ark action $\alpha=(P,N,\operator,v_{in},u_{in},v_{out},u_{out},a)$ with $P,\operator\neq\varnothing$, as long as $P$, $N$, or $\operator$ is honest, $\ark{\operator}$ ensures that exactly one of the following will happen:
    \begin{enumerate}[leftmargin=16pt]
        \item [(i)] $\alpha$ does not get included in a state transition, or
        \item [(ii)] $\alpha$ is included in a state transition $\tau$ resulting in a state $(C',F',S')$, where:
        \begin{itemize}[leftmargin=*]
            \item $\forall\vtxo{}\in v_{in}:C'\cup F'\not\ni\vtxo{}\wedge S'\ni\vtxo{}$. 
            \item Each UTXO in $u_{in}$ is spent onchain.
            \item $\forall\vtxo{}\in v_{out}:(a=0\implies\vtxo{}\in C')\wedge (a=1\implies\vtxo{}\in F')$. 
            \item Each UTXO in $u_{out}$ is confirmed onchain.
        \end{itemize}
    \end{enumerate} 
\end{theorem}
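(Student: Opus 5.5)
The proof will be a case analysis over the type of Ark action $\alpha$: boarding ($u_{in}\neq\varnothing$, $v_{out}\neq\varnothing$, $a=0$), batch swap ($v_{in},v_{out}\neq\varnothing$, $a=0$), collaborative exit ($v_{in},u_{out}\neq\varnothing$, $a=0$), and Ark transaction ($a=1$). For each type we follow the corresponding routine (Routines~\ref{alg:verify-boarding}--\ref{alg:process-requests} and Figure~\ref{op:ark}). We show that the only ways the state can change because of $\alpha$ are either the full intended change or no change at all.

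The key structural observation is that for $a=0$ every effect of $\alpha$ is mediated by a single onchain object, the commitment transaction $\tx{}{commit}$. Its inputs $u_{in}$ are spent exactly when $\tx{}{commit}$ confirms. Its outputs $u_{out}$, the batches committing to $v_{out}$, and the connectors producing anchors all appear exactly when $\tx{}{commit}$ confirms. A forfeit transaction for any $\vtxo{}\in v_{in}$ spends an anchor output, and signing with \texttt{SIGHASH\_ALL} ties it to that specific anchor. It therefore becomes valid, and hence puts $\vtxo{}$ in $S'$ and removes it from $C'\cup F'$ (Definition~\ref{def:ark-state}), if and only if $\tx{}{commit}\in\C{-k}{}$.

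So for $a=0$ we argue: if $\tx{}{commit}$ never reaches the stable ledger, we are in case (i), using persistence to rule out partial inclusion, since any reorg removes all of its outputs simultaneously. If it does confirm, all four bullets of (ii) hold at once. The VTXOs in $v_{out}$ land in $C'$ because \verifyCommitmentTx{} and \verifyPath{} were run by the honest party. This guarantees that each $\vtxo{}\in v_{out}$ is a leaf of a VTXT whose path is cosigned by $N$ and rooted in an output of $\tx{}{commit}$, and Theorem~\ref{thm:vtxo} then shows no collaborative spend exists. We then do the three honest-party sub-cases.
\begin{itemize}
\item If $P$ is honest, it only provides witnesses (forfeit signatures, boarding input signatures) after \verifyCommitmentTx{} and \verifyConnector{} succeed, and aborts otherwise.
\item If $N$ is honest, it signs VTXT transactions only after verification, so $\tx{}{commit}$ cannot have a batch that fails to commit to $v_{out}$.
\item If $\operator$ is honest, it broadcasts $\tx{}{commit}$ only after collecting a forfeit for every input VTXO (last loop of Routine~\ref{alg:process-requests}). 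It therefore never produces the outputs without the inputs moving to $S'$.
\end{itemize}

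For $a=1$ the transition is offchain. We follow the signing order in Figure~\ref{op:ark}: the Ark transaction is fully signed before the reset transaction. Once $\operator$ holds a fully signed reset transaction for each input, $v_{in}\subseteq S'$. The honest side releases its last signature only when the other has already committed, and $\tx{}{ark}$ cannot be valid onchain without the reset outputs it spends. From this we argue that outputs in $F'$ and inputs in $S'$ arise together. If the exchange aborts before the final reset signature, $\operator$ (if honest) never adds the outputs to $\preConfirmed$ nor the pair to $\spent$, which gives case (i).

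The main obstacle will be this $a=1$ case. After $\operator$ returns $\sigma_\operator^{\texttt{ark}}$ but before it receives $\sigma_A^{\texttt{re}}$, the Ark transaction exists while the reset transaction does not. Its outputs are not yet in $F'$ by Definition~\ref{def:ark-state}, yet a dishonest $P$ might try to exploit them. I would handle this by noting that $\tx{}{ark}$ spends $\out{}{}$ of $\tx{}{re}$, not $\vtxo{}$ itself. It is thus unusable until $\tx{}{re}$ is fully signed, which requires $\operator$'s later signature, so the honest operator controls completion. A secondary subtlety is exactness of (i) versus (ii) in the presence of reorgs. Here I would invoke persistence and define inclusion via the stable ledger $\C{-k}{}$, with Corollary~\ref{cor:2k} ensuring that a submitted commitment either stabilises or is retried after $t_r$ with the requests returned to the queues.
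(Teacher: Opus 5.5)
Your plan is in substance the paper's proof. It is a case split over boarding, batch swap, collaborative exit and Ark transaction, in which the honest party rules out partial execution: $P$ releases witnesses only after \verifyCommitmentTx{} succeeds, and an honest $\operator$ verifies each request and broadcasts only once it holds all forfeits. For Ark transactions, fully signed reset transactions are what move the inputs into $S'$. You organise this better in two respects. First, you factor the three $a=0$ cases through one observation: $\tx{}{commit}$ carries every onchain effect, and forfeits are bound to their anchors by \texttt{SIGHASH\_ALL}. Second, you explicitly examine the half-completed exchange of Figure~\ref{op:ark}, which the paper's case (d) passes over. You should also add the paper's preliminary remark that any action which is none of the four protocol actions is refused by the honest party, giving (i).

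The step that fails is the honest-$N$ bullet. Cosigners contribute only verification and their signatures in the VTXT sessions, and in Routine~\ref{alg:process-requests} these sessions precede the production of any forfeit. An honest $N$ therefore secures the batch side (and, for boarding, that $u_{in}$ is an input). It has no say over whether $\operator$ holds forfeits for $v_{in}$ when broadcasting, so with $P$ and $\operator$ colluding the first bullet of (ii) can fail for a batch swap. For exits and Ark transactions $N=\varnothing$, so this disjunct cannot carry any argument at all. The paper's proof accordingly assumes only that $P$ or $\operator$ is honest. Do the same, or argue explicitly that a commitment broadcast without the forfeits realises a different action, so that $\alpha$ itself falls under (i).

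Two smaller points. First, Theorem~\ref{thm:vtxo} is the wrong tool for $v_{out}\subseteq C'$. It presupposes an honest $N$, which is unavailable when only $P$ is honest, and it concerns later transitions; membership at creation is immediate from Definition~\ref{def:ark-state}. Second, your resolution of the half-signed exchange needs two things. One is that ``outputs of Ark transactions'' in the definition of $F$ means outputs whose reset inputs are fully signed; under the literal wording they enter $F'$ as soon as $\sigma_\operator^{\texttt{ark}}$ is released. The other is that $\operator$ adds them to \preConfirmed{} only after the reset is complete. Routine~\ref{alg:verify-ark-tx-request} as written adds them before producing the reset witnesses, so state this ordering as a requirement rather than as what the routine does.
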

\begin{proof}
    For an arbitrary Ark state $\Sigma=(C,F,S)$, let us consider an arbitrary Ark action $\alpha=(P,N,\operator,v_{in},u_{in},v_{out},u_{out},a)$ with $P,\operator\neq\varnothing$. Assume that $P$ or $\operator$ is honest. Then if $\alpha$ is not an Ark action described by the protocol with $P,\operator\neq\varnothing$, i.e., $(a)$ a boarding, $(b)$ a batch swap, $(c)$ a collaborative exit, or $(d)$ an Ark transaction, the honest participant will not participate in this request, meaning that it will not be included in a state transition. This is because either the honest party is $P$, and will not provide witnesses approving the action for any of the required inputs from $v_{in}$ and $u_{in}$, 
    or the honest party is $\operator$, and will ignore any action that is not in the form of an Ark transaction or a request to be part of a commitment transaction. Thus, it remains to check that in cases $(a)$-$(d)$ we end up either with (i) or with (ii):
    \begin{enumerate}[leftmargin=*]
        \item [(a)] A boarding action is of the form $\alpha=(P,N,\operator,\varnothing,\out{}{board},v_{out},\varnothing,0)$. First, assume that $P$ is honest. In that case, $\alpha$ gives rise to an honestly created boarding request (Routine~\ref{alg:boarding}), which does not spend VTXOs that are already spent. This request either gets ignored by $\operator$, or it gets (possibly incorrectly) included in a commitment transaction. $P$ will verify this commitment transaction according to Routine~\ref{alg:verify-commitment} (in particular Lines~\ref{rtn:verify-commitment-boarding1-start}-\ref{rtn:verify-commitment-boarding1-end} and \ref{rtn:verify-commitment-end}), and will only cooperate to create the witness for $\out{}{board}$ if the verification is successful, according to Routine~\ref{alg:process-requests} Line~\ref{rtn:process-requests-board-musig}. In particular, an honest $P$ will only validate a commitment transaction which spends $\out{}{board}$ and which has batch outputs that contain all the VTXOs in $v_{out}$, with the corresponding virtual transaction paths requiring signatures from $N$. Hence, if this commitment transaction is confirmed onchain, $\out{}{board}$ will be spent, and each $\vtxo{}\in v_{out}$ will be included in $C'$ (with $(C',F',S')$ being the state after this commitment transaction). If it is not confirmed, $\alpha$ will not get included there, but may be included in a different commitment transaction, for which the same reasoning holds. In conclusion, we either have (i) or (ii).
    
            
        Now, assume that $\operator$ is honest. $\alpha$ may only be included in a commitment transaction if $\operator$ receives the corresponding boarding request and it passes verification according to Routine~\ref{alg:verify-boarding}. It will thus only include $\alpha$ in an honestly created commitment transaction, which, if confirmed onchain, guarantees to spend $\out{}{board}$ and adds all VTXOs in $v_{out}$ to $C'$ of the new state $(C',F',S')$. Otherwise, if one of the cosigners or requesters is not responding, $\alpha$ will not get included. Note that Line~\ref{rtn:verify-3} makes sure that the boarding output cannot be double-spent, but even if it could be, this will not lead to atomicity being broken.
            
        \item [(b)] A batch swap action is of the form $\alpha=(P,N,\operator,v_{in},\varnothing,v_{out},\varnothing,0)$. Assume that $P$ is honest. $\alpha$ then gives rise to an honestly created batch swap request (Routine~\ref{alg:batch-swap}), which does not spend VTXOs that have already been spent, and which either gets ignored by $\operator$, or incorporated (possibly incorrectly) in a commitment transaction. $P$ will verify this transaction according to Routine~\ref{alg:verify-commitment} Lines~\ref{rtn:verify-commitment-swap1-start}-\ref{rtn:verify-commitment-swap1-end} and Line~\ref{rtn:verify-commitment-end}. Only if verification succeeds, will $P$ cooperate with $\operator$ to create the necessary witnesses for the forfeit transactions spending each $\vtxo{}\in v_{in}$, by Routine~\ref{alg:process-requests} Line~\ref{rtn:process-requests-swap-musig}. In particular, an honest $P$ will only cooperate for a commitment transaction which has connector outputs containing anchors for each $\vtxo{}\in v_{in}$, and batch outputs containing each $\vtxo{}\in v_{out}$, with the corresponding virtual transaction paths requiring signatures from $N$. If this commitment transaction is confirmed onchain, each $\vtxo{}\in v_{in}$ will not be present in $C'$ or $F'$ depending on whether it was the output of an Ark transaction, and will be in $S'$, as the corresponding forfeit transaction can now spend $\vtxo{}$ if ever present onchain, and each $\vtxo{}\in v_{out}$ will be included in the new $C'$. If not confirmed, $\alpha$ does not change the state. We indeed have either (i) or (ii).
    
            
        To conclude, assume that $\operator$ is honest. $\alpha$ is only included in a commitment transaction if $\operator$ receives the corresponding batch swap request. It will then verify the request according to Routine~\ref{alg:verify-swap}. Consequently, a verified $\alpha$ will only be included in an honestly created commitment transaction. This transaction will only be submitted onchain once $\operator$ has the fully signed forfeit transactions corresponding to each VTXO in $v_{in}$. If confirmed onchain, this commitment transaction guarantees that all VTXOs in $v_{in}$ will not be in $C'\cup F'$ and will be in $S'$, and that all VTXOs in $v_{out}$ are in $C'$ of the new state $(C',F',S')$. Otherwise, if one of the cosigners or requesters is not responding, $\alpha$ will not be included. 
            
        \item [(c)] We can write a collaborative exit action as $\alpha=(P,\varnothing,\operator,v_{in},\varnothing,\varnothing,u_{out},0)$. If $P$ is honest, $\alpha$ gives rise to an honestly created exit request (Routine~\ref{alg:exit}), in particular not spending VTXOs that have already been spent. This either gets ignored by $\operator$, or added (possibly incorrectly) to a commitment transaction. $P$ will verify this commitment transaction according to Routine~\ref{alg:verify-commitment} Lines~\ref{rtn:verify-commitment-exit1-start}-\ref{rtn:verify-commitment-exit1-end} and Line~\ref{rtn:verify-commitment-end}. $P$ will only cooperate with $\operator$ to create the necessary witnesses for the forfeit transactions spending each $\vtxo{}\in v_{in}$ if verification succeeds, by Routine~\ref{alg:process-requests} Line~\ref{rtn:process-requests-exit-musig}. An honest $P$ will only cooperate for a commitment transaction which has connector outputs containing anchors for each $\vtxo{}\in v_{in}$, and separate outputs corresponding to each $\out{}{}\in u_{out}$. If this commitment transaction is confirmed onchain, each $\vtxo{}\in v_{in}$ will not be in $C'$ or $F'$ depending on whether it was the output of an Ark transaction, and will simultaneously be added to $S'$, as the corresponding forfeit transaction can now spend $\vtxo{}$ if ever present onchain, and each $\out{}{}\in u_{out}$ will be available onchain. If not confirmed, $\alpha$ does not change the state. Once again, we have either (i) or (ii).
    
        If we now assume that $\operator$ is honest, $\alpha$ will only be included in a commitment transaction if $\operator$ receives the corresponding exit request. It will then verify the request according to Routine~\ref{alg:verify-exit}. Hence, a verified $\alpha$ will only be included in an honestly created commitment transaction. This transaction will be submitted onchain only if $\operator$ has the fully signed forfeit transactions corresponding to each VTXO in $v_{in}$. This commitment transaction guarantees that all VTXOs in $v_{in}$ will not be in $C'\cup F'$ but in $S'$, and that all UTXOs in $u_{out}$ are confirmed onchain, once the commitment transaction is confirmed onchain. Otherwise, if one of the cosigners or requesters is not cooperating, $\alpha$ will not be included. 
    
        \item [(d)] An Ark transaction action can be denoted as $\alpha=(P,\varnothing,\operator,v_{in},\varnothing,v_{out},u_{out},1)$. If $P$ is honest, $\alpha$ gives rise to an honestly created Ark transaction request (Routine~\ref{alg:ark-tx}), not spending VTXOs that have already been spent. $\operator$ either ignores this request, or cooperates with $P$ to provide the required witnesses (possibly after verifying it via Routine~\ref{alg:verify-ark-tx-request}), making the Ark transaction valid. In the former case, $\alpha$ is not included in a state transition. In the latter case, the existence of valid reset transactions spending each VTXO in $v_{in}$, implies that these VTXOs are no longer present in $C'$ but instead in $S'$, and the VTXOs in $v_{out}$ are in $F'$. 
            
        Finally, if we assume that $\operator$ is honest, $\alpha$ will only be included in a state transition if $\operator$ receives the corresponding Ark transaction request and it passes verification (Routine~\ref{alg:verify-ark-tx-request}). Only then will $\operator$ cooperate to provide the required witnesses, leading to the fully signed reset transactions that ensure the VTXOs in $v_{in}$ are not in $C'$ but in $S'$ and the VTXOs in $v_{out}$ are in $F'$. 
    \end{enumerate}
\end{proof}
\begin{theorem}[Operator Balance Security]
\label{thm:operator}
    The Ark protocol $\ark{\operator}$ with batch expiry time $t_e$ and VTXO unilateral delay $\tv$ guarantees that if $\operator$ is honest and $\tv>4k$, $\operator$ can w.o.p. from block height $h+4k+t_e$ retrieve at least the amount it put in the Ark through commitment transactions confirmed by block height $h$, subtracting mining fees.
\end{theorem}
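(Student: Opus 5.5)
We prove the statement by accounting for the funds of each commitment transaction $\tx{}{commit}$ confirmed by block height $h$. Fix such a commitment, submitted by $\operator$ at height $h'\le h$. The amount $\operator$ put in is $\sum_{j\in J}\out{j}{}.\val$ from Routine~\ref{alg:commitment-tx}. We argue that this amount is covered, up to mining fees, by three sources: (a) the batch outputs, recoverable via the sweep path after their expiry $h'+2k+t_e\le h+2k+t_e$; (b) the connector outputs, spendable by $\operator$ alone at any time; and (c) the spent VTXOs (the inputs $v_{in}$ of batch swaps and exits), recoverable via forfeit or reset transactions. Boarding inputs only add value not funded by $\operator$. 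Exit UTXOs $u_{out}$ are paid by $\operator$, but they are balanced by the forfeited VTXOs $v_{in}$ of the same exit request, since Routine~\ref{alg:verify-exit} checks $\sum v_{in}\ge\sum u_{out}$. Similarly, Routine~\ref{alg:verify-swap} and Routine~\ref{alg:verify-boarding} make each new VTXO value in a batch balanced by either old VTXO value or boarding value. Summing over requests, the operator's input equals the value of batch and connector outputs plus exit outputs, minus boarding input value. This reduces the claim to showing that $\operator$ recovers, for every old VTXO it forfeited-for, at least that VTXO's value, and recovers every batch output.

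The key step is a case analysis on the fate of each batch output $\beta$ after confirmation.
\begin{enumerate}[leftmargin=*]
  \item If no part of $\beta$'s VTXT is posted onchain, Routine~\ref{alg:sweep} lets $\operator$ sweep $\beta$ once its expiry height passes. By Corollary~\ref{cor:2k}, the sweep is confirmed by height $h+2k+t_e+2k=h+4k+t_e$.
  \item If some virtual transactions are posted (a unilateral exit), the recursive sweep (Routine~\ref{alg:sweep}) still reclaims every intermediate output that is not further unrolled. The remaining value sits in leaf VTXOs that appeared onchain. For such a VTXO there are two sub-cases.
  \begin{itemize}[leftmargin=*]
    \item If the VTXO is in $C$ (unspent in the Ark state), it is a liability that was paid for, in the balance above, by the corresponding old VTXO or boarding input. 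It is therefore not a loss for $\operator$.
    \item If the VTXO is in $S$, then $\operator$ holds a fully signed forfeit or reset transaction (Theorem~\ref{thm:atomicity}). Routine~\ref{alg:operator} (Lines~\ref{rtn:operator-vtxo-spend-start}--\ref{rtn:operator-vtxo-spend-end}) submits it as soon as the VTXO appears in $\C{}{\operator}$.
  \end{itemize}
\end{enumerate}
Here $\tv>4k$ is used. The VTXO's unilateral path opens only $\tv$ blocks after it appears. $\operator$ submits within one round, and by Corollary~\ref{cor:2k} the forfeit or reset transaction is stable within $2k$ blocks. So the collaborative spend wins the race; the slack up to $4k$ also covers reorg uncertainty about when the VTXO first appeared. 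For forfeits, the anchor input exists because the commitment holding the connector is confirmed; this is part of the definition of $S$. For reset transactions, the output is swept by $\operator$ at $T_e$ via $\checkSig{\pk_\operator}\wedge\absTimelock{T_e}$, unless the user cooperatively continues with the Ark transaction. In that case the chain continues to the next reset or forfeit, by induction on the length of the Ark transaction chain.

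The main obstacle I expect is the \textbf{timing argument for the reset-transaction chains}. The responsibility to post the next Ark transaction shifts to the user, and the reset output's own sweep becomes available only at the batch expiry $T_e\le h+2k+t_e$. I would bound this as follows. Every reset output in the chain carries an absolute timelock equal to the expiry of a batch from a commitment confirmed by $h$. Hence all of $\operator$'s sweeps can be submitted at height $h+2k+t_e$, and by Corollary~\ref{cor:2k} they are confirmed by $h+4k+t_e$. I would also need to check that any Ark transaction the user posts in between terminates either in a forfeit (covered above) or in an unspent VTXO already accounted for as a liability. Summing the recovered amounts over all commitments then yields at least $\operator$'s contribution minus the mining fees of the sweeps, forfeits and resets, which completes the proof.
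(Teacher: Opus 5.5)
Your ingredients are right: the per-leaf case analysis and the $\tv>4k$ race are essentially the paper's Lemmas~\ref{lem:delay} and~\ref{lem:sweep}. The gap is in the bookkeeping that turns them into the theorem. Your reduction asks $\operator$ to recover \emph{both} every batch output of a commitment \emph{and} the value of every old VTXO forfeited in that commitment. But each forfeited VTXO is a leaf of a batch output of an \emph{earlier} commitment (or, for an Ark-transaction output, is backed by one). Your case analysis recovers it exactly once, as part of that earlier batch: through the sweep if it was not unrolled, and through its forfeit or reset transaction if it was. So when you sum recovered amounts over all commitments, every forfeited VTXO is credited twice. It counts under (a) for the commitment that created it and under (c) for the commitment that consumed it. The exited-but-unspent case in $C$ is then excused by ``paid for by the corresponding old VTXO'', which reuses the same coins again.

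Concretely, suppose Alice boards $b$ into commitment $0$ ($\operator$ contributes essentially nothing), swaps into commitment $1$ ($\operator$ contributes $b$), and then exits unilaterally from batch $1$. Your accounting credits $b$ to commitment $0$ (its batch output) and another $b$ to commitment $1$ (the forfeited VTXO). In reality $\operator$ receives only the single sweep of batch $0$. The conclusion happens to hold here, but your argument does not establish it. Written globally, it needs an unproved lineage fact: the exit outputs of commitments confirmed by $h$, plus the unilaterally exited unspent VTXOs of their batches, do not exceed the value boarded into them.

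The repair uses only what you already have: drop source (a). By Routines~\ref{alg:verify-boarding}, \ref{alg:verify-swap} and~\ref{alg:verify-exit}, the operator's outlay for each commitment is at most the value of the VTXOs forfeited in it, plus its connectors, plus mining fees, because boarding-derived VTXOs are paid for by the boarding inputs. Each forfeited VTXO must be in \texttt{confirmedVTXO} or \texttt{preConfirmed}, so it lives in (or is backed by reset outputs of) a batch confirmed before that commitment. That batch expires by $h+2k+t_e$, and your case analysis with its timing argument returns the VTXO's value to $\operator$ exactly once. Distinct forfeited VTXOs are distinct coins, because an honest $\operator$ never lets a VTXO be spent twice. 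Unilaterally exited unspent VTXOs then need no treatment at all, since the outlay bound never counts them.

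This per-commitment route differs from the paper's. The paper tracks lineage globally ($L_j$, $B_j$, $V_{i,j}$, $U_{i,j}$, two conservation identities and a telescoping sum), and closes the sum by letting $\operator$ stop accepting boardings and batch swaps after $h$. It thereby obtains an exact net identity that exhibits the swept unspent VTXOs $X_j$ and the collected fees $F_j$. The repaired version of yours yields only the required inequality, but it needs no wind-down.
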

\begin{proof}
    We first state and prove a number of intermediate results.
    \begin{lemma}
    \label{lem:delay}
        The Ark protocol $\ark{\operator}$ with VTXO unilateral delay $\tv$ guarantees, if $\tv>4k$ and $\operator$ is honest, that for every Ark state $\Sigma=(C,F,S)$, every $\vtxo{}\in S$ that appears onchain will be spent by a reset transaction or forfeit transaction, w.o.p.
    \end{lemma}
    \begin{proof}
        Assume without loss of generality that $\vtxo{}$ can be spent by any of its collaborative script paths immediately once posted onchain. Since $\vtxo{}\in S$, an honest operator holds either a valid reset transaction or a valid forfeit transaction (spending an anchor output confirmed onchain) that spends $\vtxo{}$. If the virtual transactions in $\patht(\vtxo{})$ are submitted to $\btc$ at block height $t$, we know by Corollary that by height $t+2k$, they are in $\C{-k}{\operator}$. Routine~\ref{alg:operator} Lines~\ref{rtn:operator-vtxo-spend-start}-\ref{rtn:operator-vtxo-spend-end} instructs the honest operator to post the reset transaction spending $\vtxo{}$ or the forfeit transaction, as soon as $\vtxo{}$ enters $\operator$ local view of the chain. By height $t+2k$, $\vtxo{}$ is definitely in $\C{}{\operator}$, so $\operator$ will for sure have broadcast the required reset/forfeit transaction(s) by height $t+2k$. By Corollary~\ref{cor:2k}, these will be confirmed by height $t+4k$. By Definition~\ref{def:vtxo}, any unilateral spend of $\vtxo{}$ is delayed by at least $\tv$ (this is enforced by an honest operator who only includes correctly formed VTXOs in its commitment transactions). Since $\tv>4k$, there are no possible race conditions and an honest operator is guaranteed to spend $\vtxo{}$ with either the corresponding forfeit or reset transaction. 
    \end{proof}
    \begin{lemma}
    \label{lem:sweep}
        Suppose a batch output expires, and the current Ark state is $\Sigma=(C,F,S)$. If $\tv>4k$, w.o.p., an honest operator can claim all funds locked in the VTXOs contained in the batch that are in $C\cup F\cup S$, i.e., all funds contained in the batch, except those locked in VTXOs that are both unilaterally exited and for which there is no forfeit transaction.
    \end{lemma}
    \begin{proof}
        By Routine~\ref{alg:operator} Line~\ref{rtn:operator-expired}, an honest operator will perform a sweep operation on the expired batch as defined in Routine~\ref{alg:sweep}. If we assume the expired batch output is specified via a VTXT $(V,A)$, every non-leaf transaction in $V$ present onchain will be spent by the operator as the locking script only requires the timelock to be expired and an operator signature. Every leaf transaction has a VTXO locking script, and will therefore not be spent by the operator through the sweep operation. These are exactly the VTXOs that have been unilaterally exited. However, by Routine~\ref{alg:operator} Line~\ref{rtn:operator-spent} and by Lemma~\ref{lem:delay}, since $\tv>4k$, any VTXO that has been spent prior to being put onchain through a unilateral exit, has been spent already w.o.p. by the honest operator via the corresponding reset and/or forfeit transaction, which an honest $\operator$ has by Theorem~\ref{thm:atomicity}. In case of a reset transaction, any of the VTXOs in the subsequent Ark transaction (if valid) that have been spent can also be claimed via a forfeit transaction an honest operator has. In aggregate, this means that the operator claims all funds locked in VTXOs that either have not been spent and not unilaterally exited (which are the VTXOs of the expired batch that are in $C\cup F$), or have been spent (which are the VTXOs in $S$, regardless of whether they have been unilaterally exited). That is, an honest operator can claim all funds held in the expired batch, up to those funds that have been unilaterally exited and that have not been spent, or spent in an Ark transaction included onchain.
    \end{proof}
    \begin{lemma}
    \label{lem:operator}
        Suppose that from block height $h$ onward, an honest operator $\operator$ does no longer accept incoming batch swap and boarding request, only accepting exit requests, in order to shut down its Ark. Then $\ark{\operator}$ with batch expiry time $t_e$ and VTXO unilateral delay $\tv$ guarantees w.o.p. that, if $\tv>4k$, by block height $h+4k+t_e$, $\operator$ has retrieved exactly the value it put into the Ark by funding commitment transactions, subtracting mining fees, and adding at least fees it collected for processing requests.
    \end{lemma}
    
    \begin{proof}
        This proof amounts to keeping track of all the flows of funds, in particular, which funds are and will come under control of the operator, assuming this operator exactly follows the protocol. For simplicity, we assume the connector outputs hold a negligible amount of funds. All groups of funds are greater than or equal to zero.
        
        Fix the block height $h$. We denote the commitment transactions from the operator $\operator$ that are confirmed onchain by $\tx{j}{}$, where $j$ runs through $0,\ldots,k_h,\ldots,\allowbreak k_e,\ldots$. The index $k_h$ refers to the latest commitment transaction confirmed before or at block height $h$, and $k_e$ to the latest commitment transaction confirmed before or at block height $h+2k+t_e$. Furthermore, we denote by $e(j)$ the latest commitment transaction before the expiry of the batches in $\tx{j}{}$. For example, we have $k_e=e(k_h)$.
        
        By construction of any honestly created commitment transaction $\tx{j}{}$ as specified in Routine~\ref{alg:commitment-tx}, we can categorise the following incoming and outgoing groups of funds, based on the transaction inputs and outputs:
        \begin{itemize}[leftmargin=*]
            \item Incoming funds:
            \begin{itemize}
                \item $L_j$: the liquidity coming directly from $\operator$ to finance $\tx{j}{}$.
                \item $B_j$: the funds coming from boarding transactions. 
            \end{itemize}
            \item Outgoing funds:
            \begin{itemize}
                \item $V_j$: the funds locked in batch outputs holding VTXOs.
                \item $U_j$: the funds locked in UTXOs as a consequence of exit requests.
                \item $M_j$: the mining fees.
            \end{itemize}
        \end{itemize}
        We can further categorise the outgoing funds $V_j$ and $U_j$ by tracking which VTXOs or boarding transaction outputs were used as part of the boarding, batch swap, or exit requests that led to the formation of the VTXOs and UTXOs in $\tx{j}{}$. That is, we can write $V_j=\sum_{i\leq j}V_{i,j}$, where $V_{i,j}$ is the aggregate value of all VTXOs or boarding transaction outputs that were spent in order to (partially) fund the VTXOs in $\tx{j}{}$. Notice that $V_{j,j}=B_j$, as all the boarding transaction outputs spent in $\tx{j}{}$ are used to create new VTXOs in $\tx{}{j}$ by Routine~\ref{alg:commitment-tx}, and $\operator$ ensures to not have more value locked in the VTXOs created from a boarding request than the value locked in the corresponding boarding output (Routine~\ref{alg:verify-boarding} Line~\ref{rtn:verify-boarding-funds}). Similarly, we write $U_j=\sum_{i<j}U_{i,j}$ (notice the strict inequality now). 
        
        Since any commitment transaction must be a valid Bitcoin transaction, we must have the following conservation of funds, for every $j\geq0$:    
        \begin{equation*}
            L_j+B_j=V_j+U_j+M_j.
        \end{equation*}
        Recalling that $B_j=V_{j,j}$, we can rewrite the above to:
        \begin{equation}
        \label{eq:conservation-tx}
            L_j=\sum_{i<j}V_{i,j}+\sum_{i<j}U_{i,j}+M_j.
        \end{equation}
        We can also establish a second conservation identity, realising that with an honest operator, a VTXO is either spent as part of a batch swap or exit request, leading to a VTXO or UTXO in a later commitment transaction, or through a unilateral exit, or not spent at all. Note that a VTXO can also be spent by an Ark (reset) transaction. However, the funds in that VTXO will just be relocated in another VTXO, that will either be batch swapped, exited, or not spent at all. Since an honest operator does not allow for spending a VTXO that has already been spent in an Ark transaction (via a reset transaction) (by Routines~\ref{alg:verify-swap} Lines~\ref{rtn:verify-swap-1}-\ref{rtn:verify-swap-4}, \ref{alg:verify-exit} Lines~\ref{rtn:verify-exit-1}-\ref{rtn:verify-exit-4}, \ref{alg:verify-ark-tx-request}) Lines~\ref{rtn:verify-ark-1}-\ref{rtn:verify-ark-4}), we can just distribute these funds over the other categories. Hence, for every $j\geq0$, and because an honest operator will only let a VTXO be spent exactly once (again, by Routines~\ref{alg:verify-swap} Lines~\ref{rtn:verify-swap-1}-\ref{rtn:verify-swap-4}, \ref{alg:verify-exit} Lines~\ref{rtn:verify-exit-1}-\ref{rtn:verify-exit-4}, \ref{alg:verify-ark-tx-request} Lines~\ref{rtn:verify-ark-1}-\ref{rtn:verify-ark-4}), and not allow more funds in the outputs than in the inputs of each batch swap or exit (by Routines~\ref{alg:verify-swap} Line~\ref{rtn:verify-swap-7} and \ref{alg:verify-exit} Line~\ref{rtn:verify-exit-5}), we have:
        \begin{equation}
        \label{eq:conservation-vtxo}
            \sum_{i\leq j}V_{i,j}=\sum_{k>j}V_{j,k}+\sum_{k>j}U_{j,k}+E_j^U+X_j+F_j.
        \end{equation}
        We introduced the term $E_j^U$ to indicate unilateral exits of \emph{unspent} VTXOs or VTXOs spent by Ark transactions (via reset transactions). We explicitly do not account for spent VTXOs that one attempts to unilaterally exit with as well, as this would result in double counting. Also, we introduced $X_j$ to account for all VTXOs that have not been spent. Finally, $F_j$ represents the fees that the operator may charge for processing requests. Indeed, when creating the new outputs in return for batch swap/exits, their aggregate value might be less than $V_j$, where the difference has been claimed by the $\operator$ once the batch expires. 
        
        By Lemma~\ref{lem:sweep}, since $\tv>4k$, an honest operator will be able to claim w.o.p. all the funds held in expired batches, except for the unilaterally exited VTXOs for which $\operator$ does not hold a forfeit transaction. In other words, the amount $S_j^\operator$ swept by the operator is given by:
        \begin{equation}
        \label{eq:sweep}
            S_j^\operator=\sum_{i\leq j}V_{i,j}-E_j^U-E_j^M-S_j^M,
        \end{equation}
        where the last two terms $E_j^M$ are the funds the operator needs to give in mining fees for forfeit transactions, Ark (reset) transactions spending unilaterally exited, spent VTXOs (Routine~\ref{alg:operator} Lines~\ref{rtn:operator-vtxo-spend-start}-\ref{rtn:operator-vtxo-spend-end}), 
        and $S_j^M$ are the funds the operator needs to give in mining fees for sweep transactions, respectively. 
        
        Now, realise that since $\operator$ does not accept boarding or batch swap requests after block height $h$, $S_j^\operator=0$ for $j>k_h$. Indeed, $(\tx{j}{})_{j>k_h}$ will have $B_j=V_j=0$. Only $U_j$ may be non-zero, containing UTXOs from users exiting collaboratively from batches with index less than or equal to $k_h$. More specifically, $U_{i,j}$ may be non-zero for $i\leq k_h$, but $U_{i,j}=0$ for $i>k_h$. Also, realise that $U_{i,j}=0$ for $j>k_e$, as by then all batches confirmed before or at height $h$ have now expired and an honest operator would not perform a collaborative exit for a VTXO coming from an expired batch.
        
        We can thus focus on the commitment transactions confirmed before or at block height $h$. By block height $h+2k+t_e$, every batch contained in one of the commitment transaction $(\tx{j}{})_{j=0}^{k_h}$ will have expired due to Routine~\ref{alg:commitment-tx}, and by Routine~\ref{alg:operator} an honest operator will have initiated sweeps for all expired, unspent funds. The latest sweep transactions will thus be submitted at height $h+2k+t_e$, and will be confirmed by height $h+4k+t_e$ w.o.p. by Corollary~\ref{cor:2k}. Note that as long as a sweep is not confirmed onchain, a user might still try to unilaterally exit funds. In case the user's exit transactions get confirmed instead of $\operator$'s sweep transactions, we count the corresponding funds as a part of $E_j^U$. The total amount gained through commitment transactions before and at block height $h$ is thus:
        \begin{align}
        \label{eq:sum-sweeps}
            \sum_{j\leq k_h}S_j^\operator&\stackrel{\eqref{eq:sweep}}{=}\sum_{j\leq k_h}\Big(\sum_{i\leq j}V_{i,j}-E_j^U-E_j^M-S_j^M\Big)\nonumber\\&\stackrel{\eqref{eq:conservation-vtxo}}{=}\sum_{j\leq k_h}\sum_{k>j}(V_{j,k}+U_{j,k})\nonumber\\&\quad\quad+\sum_{j\leq k_h}(X_j+F_j-E_j^M-S_j^M),
        \end{align}   
        Since $V_k=0$ for $k>k_h$, we rewrite the first term of \eqref{eq:sum-sweeps} as
        \begin{align}
        \label{eq:final-gains}
            \sum_{j\leq k_h}&\sum_{k>j}(V_{j,k}+U_{j,k})\nonumber\\&=\sum_{j\leq k_h}\sum_{j<k\leq k_h}(V_{j,k}+U_{j,k})+\sum_{j\leq k_h}\sum_{k>k_h}U_{j,k}\nonumber\\&=\sum_{k\leq k_h}\sum_{j<k}(V_{j,k}+U_{j,k})+\sum_{j\leq k_h}\sum_{k>k_h}U_{j,k}\nonumber\\&=\sum_{k\leq k_h}\sum_{j<k}(V_{j,k}+U_{j,k})+\sum_{k_h<k\leq k_e}\sum_{j<k}U_{j,k},
        \end{align}
        where the second equality is just a rearrangement of the sums, and the third equality first uses that $U_{j,k}=0$ for $j>k_h$ and for $k>k_e$, and then rearranges the sums.
        
        On the other hand, the funds put in by the operator through commitment transactions before and at block height $h$ add up to 
        \begin{align}
        \label{eq:final-costs}
            \sum_{j\leq k_e}L_j&\stackrel{\eqref{eq:conservation-tx}}{=}\sum_{j\leq k_e}\sum_{i<j}(V_{i,j}+U_{i,j})+\sum_{j\leq k_e}M_j\nonumber\\&=\sum_{j\leq k_h}\sum_{i<j}(V_{i,j}+U_{i,j})+\sum_{k_h<j\leq k_e}\sum_{i<j}U_{i,j}\nonumber\\&\quad\quad+\sum_{j\leq k_e}M_j,
        \end{align}
        where in the second equality, we used that $V_j=0$ for $j>k_h$. Substituting \eqref{eq:final-gains} into \eqref{eq:sum-sweeps}, and then combining the latter with \eqref{eq:final-costs}, we obtain the net balance of the operator at block height $h+4k+t_e$:
        \begin{equation*}
            \sum_{j\leq k_e}S_j^\operator-\sum_{j\leq k_e}L_j=\sum_{j\leq k_h}(X_j+F_j-E_j^M-S_j^M)-\sum_{j\leq k_e}M_j,
        \end{equation*}
        which means that $\operator$ retrieves all funds that it put into the Ark, subtracting the mining fees and adding the fees the operator charges for processing requests. Additionally, the operator may gain any funds from VTXOs that have not been spent in any way before their batch expired, and have therefore just been swept by the operator. This identity then immediately gives us a necessary condition for running an Ark profitably; the operator fees should cover the mining fees (minus any potential unspent VTXOs that have been swept).
    \end{proof}
    Lemma~\ref{lem:operator} now finishes the proof of Theorem~\ref{thm:operator}. Indeed, if an operator wants to retrieve its funds it put in up to and including block height $h$, it can always decide to stop processing boarding and batch swap requests received after $h$, and be guaranteed w.o.p., if $\tv>4k$, to have retrieved from block height $h+4k+t_e$ onward at least the funds it put in via commitment transactions confirmed by block height $h$, subtracting mining fees.
\end{proof}

Henceforth, we assume that there is a set $\Nff$ of users who opted into Ark's fast finality mechanism, and that assumptions \ref{asm:1}-\ref{asm:4} hold. We denote by $\arkff{\operator}$ the fast finality Ark protocol, i.e., the Ark protocol $\ark{\operator}$ extended by Protocol~\ref{pol:ff}, and the requirement that $\operator$ has funded a collateral UTXO of value $c>v$ (given by \ref{asm:4}) which cannot be spent by $\operator$ alone before all VTXOs held by users in $\Nff$ expired, and ensures that each of the VTXOs, as well as the relevant batch outputs and VTXT transaction outputs, allow for private key extraction in case of double-signing, as described in §\ref{sec:extensions}.

Henceforth, we assume that there is a set $\Nff$ of users who opted into Ark's fast finality mechanism, and that assumptions \ref{asm:1}-\ref{asm:4} hold. We denote by $\arkff{\operator}$ the fast finality Ark protocol, i.e., the Ark protocol $\ark{\operator}$ extended by Protocol~\ref{pol:ff}, and the requirement that $\operator$ and the users in $\Nff$ set up a BitVM instance as described in §\ref{subsec:bitvm}.

\begin{definition}[Fast Finality Ark Balance]
\label{def:ff}
    For a given party $P$ in $\Nff$ and for a given Ark state $\arkstate=(C,F,S)$, we define the \emph{Fast Finality Ark balance $b_P^{\arkstate,\ff}$ of $P$ at $\arkstate$} as the sum $b_P^{\arkstate,\ff}:=\sum_{\vtxo{}\in V_P^{\arkstate,\ff}}\vtxo{}.\val$, where $V_P^{\arkstate,\ff}$ is the set of all $\vtxo{}\in F$ which are more than $2k$ blocks from expiry, that $P$ accepted as a payment output, for which only $P$ can provide a witness for a unilateral script path of $\vtxo{}$ and possesses all fully signed transactions in $\patht(\vtxo{})$.
\end{definition}

\begin{theorem}[Fast Finality Balance Security]
\label{thm:ff}
    The fast finality Ark protocol $\arkff{\operator}$ with batch expiry time $t_e$ guarantees that for each honest party $P$ in $\Nff$ and each Ark state $\arkstate=(C,F,S)$, $P$ can claim its fast finality Ark balance $b_P^{\arkstate,\ff}$ onchain, subtracting mining fees, w.o.p.
\end{theorem}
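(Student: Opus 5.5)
The plan is to reduce Theorem~\ref{thm:ff} to VTXO liveness (Theorem~\ref{thm:vtxo}(ii)), applied to every batch VTXO at the root of the chain of fast finality Ark transactions leading to $\vtxo{}\in V_P^{\arkstate,\ff}$. Here $\patht(\vtxo{})$ is understood in the generalised sense of Remark~\ref{rem:ff-path}. Since $P$ possesses all fully signed transactions in $\patht(\vtxo{})$, it remains to show two things: every input referenced along this path is still unspent (or spent by the very transaction of the path) when $P$ broadcasts, and the whole path confirms before the relevant expiry. Once this holds, $P$ submits the path at latest at height $T-2k-1$, where $T$ is the minimum expiry among the batches involved. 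By Corollary~\ref{cor:2k} the path confirms by $T-1$. $P$ then spends the resulting UTXO via its unilateral path, and only $P$ can do so, which yields $b_P^{\arkstate,\ff}$ minus fees, exactly as in Theorem~\ref{thm:user}.

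The key step is to rule out conflicts with the path. I will split into two cases according to where a conflicting transaction $\tx{}{*}$ for some $\tx{}{}\in\patht(\vtxo{})$ could come from.

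\emph{Case 1: the conflict was known offchain before acceptance.} Suppose $\tx{}{*}$ was sent to some honest user in $\Nff$ before $P$ accepted. By \ref{asm:3} and the $2\Delta$ wait in Protocol~\ref{pol:ff}, $P$ would have seen it and rejected the payment. This contradicts $\vtxo{}\in V_P^{\arkstate,\ff}$. The detail to spell out is the timing: if $P$ broadcasts at time $t$, then any other honest recipient who accepts a conflicting transaction has broadcast it by $t+\Delta$, so it reaches $P$ by $t+2\Delta$.

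\emph{Case 2: the conflict appears later, onchain or via a malicious signer.} Here $\tx{}{*}$ needs an operator co-signature on a collaborative path, because unilateral paths are delayed by $\tv$ and can be preempted by $P$, who holds the collaboratively signed $\tx{}{}$ and monitors the chain. Producing such a $\tx{}{*}$ is therefore a double-sign by $\operator$. $P$ then presents $(\tx{}{},\tx{}{*})$ as a witness with $\mathcal{R}=1$ to the BitVM instance of §\ref{subsec:bitvm}, which burns $v$ of $\operator$'s collateral $c_\operator>v+f$. Batch-level restructuring is handled in the same way: either one cosigner is honest, so Theorem~\ref{thm:vtxo}(i) applies, or $\operator$ co-signs a conflicting VTXT transaction, which is again a double-sign.

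The main obstacle is the rationality step that turns ``double-signing is punishable'' into ``double-signing does not happen''. My first attempt is a direct utility comparison. By \ref{asm:4}, the total value any double-sign could redirect is at most $v$, because it only concerns VTXOs held in $\Nff$. Any detectable double-sign costs at least $v$ in burned collateral plus fees, and Cases 1 and 2 show that every double-sign harming an honest $P$ is detectable. Hence by \ref{asm:1} a rational $\operator$ never double-signs. With no valid conflicting spend of any $\tx{}{}\in\patht(\vtxo{})$ available, apart from unilateral spends that $P$ preempts thanks to $\tv$, the liveness argument of the first paragraph goes through w.o.p. I expect subtlety in arguing that gains cannot exceed $v$ when the operator also colludes with malicious senders. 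I would handle this by bounding gains by the total fast finality value held in $\Nff$.
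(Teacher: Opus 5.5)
Your proposal is correct and follows essentially the same route as the paper's proof. Both rest on the $2\Delta$ broadcast argument, BitVM punishment of any operator double-sign (including collusion with batch cosigners) with collateral exceeding $v$ so that a rational $\operator$ never double-signs, preemption of earlier owners' unilateral exits via the $\tv$ delay, and unilateral exit of the generalised path via Corollary~\ref{cor:2k}. One small correction to Case~1: a conflict that reaches another honest user only during $(t+\Delta,t+2\Delta)$ need not reach $P$ before it accepts, as the paper notes explicitly, so your first sentence there is too strong. Your refined claim about conflicts \emph{accepted} by honest users is correct, and the leftover case is absorbed by your Case~2 and the rationality step, since that double-sign is still detected and punished.
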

\begin{proof}
    Assume $P$ is honest. Let $\vtxo{}\in V_P^{\arkstate,\ff}$ be arbitrary. Then $\vtxo{}.\val\leq v$ by \ref{asm:4}. Denote by $p=(\tx{}{ark}, \tx{}{re}, \patht(\vtxo{}))$ the payment in which $P$ received $\vtxo{}$. Without loss of generality, assume that $\patht(\vtxo{})$ is a simple sequence of virtual transactions as in Definition~\ref{def:vtxt}, with a number of fast finality Ark transactions appended to it. In particular, $\patht(\vtxo{})$ contains exactly one virtual transaction spending a batch output confirmed onchain. Furthermore, denote by $\tx{}{*}$ a transaction that conflicts with $\vtxo{}$, i.e., there exists a transaction $\tx{}{}\in\patht(\vtxo{})$ that spends the same output as $\tx{}{*}$. In case $\tx{}{*}$ has a VTXO as output, we denote it as $\vtxo{}^*$.

    By definition, $P$ received $\vtxo{}\in V_P^{\arkstate,\ff}$ at least a time period of $2\Delta$ ago. We claim that since $P$ behaves honestly, Protocol~\ref{pol:ff} ensures that no other honest user $P'$ would have a conflicting $\vtxo{}^*$ in its set $V_{P'}^{\arkstate,\ff}$. In other words, no two honest users could be defrauded by a double-sign.

    To see this, recall from Protocol~\ref{pol:ff} that every honest receiver of a payment $\vtxo{}$ finalises the transaction only if no conflicting transaction has been received from another user $2\Delta$ after broadcasting the payment $p$ to $\Nff$. Indeed, if $P$ receives $p$ at time $t$ and broadcasts it immediately, it can be sure that every honest user has received it by $t+\Delta$. We now distinguish two cases.
    \begin{enumerate}
        \item If a conflicting VTXO $\vtxo{}^*$ is received by $P'$ as part of a payment $p'$, after $t+\Delta$, the double-signing will be detected and $P'$ will not accept $p'$, so $\vtxo{}^*\notin V_{P'}^{\arkstate,\ff}$. Moreover, $P'$ will broadcast $p'$ to $\Nff$, and burn the operator collateral. Observe that $P$ may receive $p'$ after $t+2\Delta$, and would have thus accepted $p$ by then.  
        \item If $\vtxo{}^*$ is received by $P'$ before $t+\Delta$, $P'$ also broadcast $p'$ to $\Nff$, meaning $P$ receives it by $t+2\Delta$. At that point, $P$ will not accept the payment (neither will $P'$, in fact). Once again, the operator collateral would have been burned.
    \end{enumerate}
    It is important to realise that at most one honest user can be defrauded. Indeed, we see from the first case that $P$ could have accepted $p$ even though there was a double-sign. This would mean that $P$ does not have a guarantee to unilaterally exit $\vtxo{}$. However, in both cases, $P'$ does not accept $p'$. The gain from double-signing is thus zero, whereas the burned collateral is $c>v\geq\vtxo{}.\val>0$. A rational operator would therefore not double-sign in the first place, and $P$ can be sure that he can unilaterally exit $\vtxo{}$.  

    Now that we have argued how no two honest users could be defrauded by a double-sign, we should consider the case that not an honest, but a malicious user $P^*$ in $\Nff$ holds a transaction $\tx{}{*}$ conflicting with $\tx{}{}\in\patht(\vtxo{})$. There are two possibilities:
    \begin{enumerate}
        \item $\tx{}{*}$ is a virtual transaction in the VTXT. $P^*$ would have to collude with batch cosigners and the operator in order to achieve this, or
        \item $\tx{}{*}$ is an Ark transaction which does not have $\vtxo{}^*$ as an output.
    \end{enumerate}

    First observe that in both cases, since $P$ is monitoring the chain, if $P^*$ tries to exit onchain by publishing $\tx{}{*}$, $P$ will be able to extract $\operator$'s private key from $\tx{}{*}$ and the conflicting $\tx{}{}\in\patht(\vtxo{})$, and burn the collateral. In other words, $P$ is either able to exit unilaterally and claim its fast finality Ark balance (by Corollary~\ref{cor:2k}) onchain, or ends up in a race condition with a conflicting $\tx{}{*}$, which will lead to the collateral being burned.

    Assume that $\tx{}{*}\in\patht(\vtxo{}^*)$\footnote{If this is not the case, we have that $\tx{}{*}$ is a virtual transaction in a malformed VTXT which does not lead to a VTXO output, but some other output that would simply be put onchain, as if it were a VTXO unilateral exit.}, then $P^*$ could have done a (i) batch swap, (ii) cooperative exit, (iii) unilateral exit or (iv) Ark transaction with $\vtxo{}^*$. For (i) and (ii), $\operator$ would have taken over ownership of the funds in $\vtxo{}^*$. For (iii), $P^*$ would still own the funds in $\vtxo{}^*$. Finally, for (iv), an honest user in $\Nff$ would not accept it in case of a fast finality Ark transaction as we argued before, or would only consider it finalised after a batch swap or cooperative exit with $\operator$, which puts us back in case (i) or (ii). In other words, the gain from double-signing is at most $\vtxo{}.\val$. Once again, since the burned collateral is $c>v\geq\vtxo{}.\val$, $\operator$ will not double-sign, and $P$ is able to claim $\vtxo{}$ onchain.

    The only remaining risk for $P$ is that a previous owner along $\patht(\vtxo{})$ exits unilaterally, trying to claim funds already spent offchain through Ark transactions. However, every unilateral path is gated by a relative timelock, allowing $P$ to immediately post the collaborative Ark transactions that spend the contested VTXO (bringing other inputs onchain as needed, which one should note may be costly) and thus win the race. Since $P$ monitors the chain, a previous owner will not be able to claim $P$'s funds. 

    We can repeat this argument for any $\vtxo{}\in V_P^{\arkstate,\ff}$, for any $P\in\Nff$. Since $\sum_{\vtxo{}\in\bigcup_{P\in\Nff}V_P^{\arkstate,\ff}}\vtxo{}.\val=v<c$, we are done. 
\end{proof}

\end{document}